\def\C {\ensuremath{\mathbb{C}}}
\def\CC {\ensuremath{\mathsf{C}}}
\def\BB {\ensuremath{\mathsf{H}}}
\def\Q {\ensuremath{\mathbb{Q}}}
\def\R {\ensuremath{\mathbb{R}}}
\def\RR {\ensuremath{\mathsf{R}}}
\def\CC {\ensuremath{\mathsf{C}}}
\def\AS {\ensuremath{\mathbf{H}}}
\def\BS {\ensuremath{\mathbf{H}'}}
\def\mM{\ensuremath{\mathbf{M}}}
\def\G {\ensuremath{\mathrm{GL}}}
\def\GL{\ensuremath{{\rm GL}}}
\def\crit{\ensuremath{{\rm crit}}}
\def\reg{\ensuremath{{\rm reg}}}
\def\sing{\ensuremath{{\rm sing}}}
\def\grad{\ensuremath{{\rm grad}}}
\def\jac{\ensuremath{{\rm jac}}}
\def\y {\ensuremath{\mathbf{y}}}
\def\e {\ensuremath{\mathbf{e}}}
\def\a {\ensuremath{\mathbf{a}}}
\def\z {\ensuremath{\mathbf{z}}}
\def\x {\ensuremath{\mathbf{x}}}
\def\w {\ensuremath{\mathbf{w}}}
\def\f {\ensuremath{\mathbf{F}}}
\def\g {\ensuremath{\mathbf{g}}}
\def\h {\ensuremath{\mathbf{h}}}
\def\r {\ensuremath{\mathbf{r}}}
\def\F {\ensuremath{\mathbf{F}}}
\def\E {\ensuremath{\mathbf{E}}}
\def\X {\ensuremath{\mathbf{X}}}
\def\v {\ensuremath{\mathbf{v}}}
\newtheorem{theorem}{Theorem}
\newtheorem{corollary}[theorem]{Corollary}
\newtheorem{lemma}[theorem]{Lemma}
\thanks{Universit\'e Pierre et Marie Curie, {\tt Mohab.Safey@lip6.fr}}%
\thanks{University of Western Ontario, {\tt eschost@uwo.ca}}%
\begin{document}
\makeRR   % cas d'un rapport de recherche
%% \makeRT % cas d'un rapport technique.
%% a partir d'ici, chacun fait comme il le souhaite
\section{Introduction}
\paragraph{Motivation.}
Deciding connectivity properties in a semi-algebraic set $S$ is an
important problem that appears in many fields, such as motion
planning~\cite{Sharir}. This general problem is reduced to
computations in dimension 1, {\em via} the computation of a
semi-algebraic curve $\mathscr{R}$, that we call a {\em roadmap}. This
curve should have a non-empty and connected intersection with each
connected component of $S$: then, connecting two points in $S$ is done
by connecting these points to $\mathscr{R}$. Also, counting the
connected components of $S$ is reduced to counting those of
$\mathscr{R}$. Hence, a roadmap is used as the skeleton of
connectivity decision routines for semi-algebraic sets. In addition to
its direct interest, the computation of roadmaps is also used in more
general algorithms allowing us to obtain semi-algebraic descriptions
of the connected components of semi-algebraic
sets~\cite[Ch.15-16]{BaPoRo06}. Thus, improvements on the complexity
of computing roadmaps impact the complexity of many fundamental
procedures of effective real algebraic geometry.

%% Grigoriev-Vorobjov

\paragraph{Prior results.}
The notion of a roadmap was introduced by Canny in \cite{CannyThese,
  Canny}; the resulting algorithm constructs a roadmap of a
semi-algebraic set $S \subset \R^n$ defined by $k$ equations and $s$ inequalities
of degree bounded by $D$, but does not construct a path linking points
of $S$. Its complexity is $s^n\log(s)D^{O(n^4)}$ arithmetic
operations, and a Monte Carlo version of it runs in time
$s^n\log(s)D^{O(n^2)}$ (to estimate running times, we always use
arithmetic operations). Several subsequent works \cite{HRSRoadmap,
  GRRoadmap} gave algorithms of cost $(sD)^{n^{O(1)}}$; they culminate
with the algorithm of Basu, Pollack and Roy~\cite{BaPoRo96,BPRRoadmap} of cost
$s^{d+1}D^{O(n^2)}$, where $d$ is the dimension of the algebraic set
defined by the $k$ equations. These algorithms reduce the general
problem to the construction of a roadmap in a bounded and smooth
hypersurface defined by a polynomial $f$ of degree $D$; the
coefficient of $f$ lie in a field $\bf Q$ that contains several
infinitesimal quantities (it is a purely transcendental extension of
$\Q$).

Under the smoothness and compactness assumptions, and even in the
simpler case of a polynomial $f$ with coefficients in $\Q$, none of
the previous algorithms features a cost lower than $D^{O(n^2)}$ and
none of them returns a roadmap of degree lower than $D^{O(n^2)}$. In
this paper, we give the first known estimates of the form
$(nD)^{O(n^{1.5})}$ for this particular problem, in terms of output
degree and running time.

All these previous works, and ours also, make use of computations of
critical loci of projections and rely on geometric connectivity
results for correctness. Before recalling the basics we need about
algebraic sets and critical loci, we give precise definitions of
roadmaps and state our main result.

\paragraph{Definitions and main result.}
The original definition (found in~\cite{BaPoRo06}) is as follows.  Let
$S$ be a semi-algebraic set. A {\em roadmap} for $S$ (in the sense
of~\cite{BaPoRo06}) is a semi-algebraic set $\mathscr{R}$ of dimension
at most $1$ contained in $S$ which satisfies the following conditions:
\begin{itemize}
\item[${\rm RM}_1$] Each connected component of $S$ has a non-empty
  and connected intersection with $\mathscr{R}$.
\item[${\rm RM}_2$] For $x\in \R$, each connected component of
  $S_x$ intersect $\mathscr{R}$, where $S_x$ is the set of points of
  the form $(x,x_2,\dots,x_n)$ in $S$.
\end{itemize}
\noindent We modify this definition (in particular by discarding ${\rm
  RM}_2$), for the following reasons. First, it is
coordinate-dependent: if $\mathscr{R}$ is a roadmap of $S$, it is not
necessarily true that $\phi(\mathscr{R})$ is a roadmap of $\phi(S)$,
for a linear change of coordinates $\phi$. Besides, one interest of
${\rm RM}_2$ is to make it possible to connect two points in $S$ by
adding additional curves to $\mathscr{R}$: condition ${\rm RM}_2$ is
well-adjusted to the connecting procedure given in~\cite{BaPoRo06}, 
which we do not use here.

\noindent 
Hence, we propose a modification in the definition of roadmaps. We do
not deal with semi-algebraic sets, but only with sets of the form $V
\cap \R^n$, where $V \subset \C^n$ is an algebraic set. Our
definition, like the previous one, allows us to count connected
components and to construct paths between points in $V \cap \R^n$. We
generalize the definition to higher-dimensional ``roadmaps'', since
our algorithm computes such objects. Thus, we say that an algebraic
set $\mathscr{R} \subset \C^n$ is an $i$-roadmap of $V$ if:
\begin{itemize}
\item[${\rm RM}'_1$] Each connected component of $V\cap \R^n$ has a
  non-empty and connected intersection with $\mathscr{R}\cap \R^n$.
\item[${\rm RM}'_2$] The set $\mathscr{R}$ is contained in $V$.
\item[${\rm RM}'_3$] The set $\mathscr{R}$ has dimension $i$.
\end{itemize}
If $\dim(\mathscr{R})=1$, we simply say that $\mathscr{R}$ is a
roadmap of $V$. Finally, it will be useful to add a finite set of
control points $\mathscr{P}$ to our input, e.g. to test if the points
of $\mathscr{P}$ are connected on $V\cap \R^n$. Then, $\mathscr{R}$ is
a $i$-roadmap of $(V,\mathscr{P})$ if we also have:
\begin{itemize}
\item[${\rm RM}'_4$] The set $\mathscr{R}$ contains $\mathscr{P}$.
\end{itemize}
\noindent Hereafter, given a finite set $\mathscr{P}$, we write its
cardinality $\delta_P$ (if $\mathscr{P}=\emptyset$, we take
$\delta_P=1$).
\begin{theorem}
  Given $f \in \Q[X_1,\dots,X_n]$ such that $V(f)\cap \R^n$ is compact
  and has a finite number of singular points, and given a subset
  $\mathscr{P}$ of $V(f)$ of cardinality $\delta_P$, one can compute a
  roadmap of $(V(f),\mathscr{P})$ of degree
  $\delta_{P}(nD)^{O(n^{1.5})}$ in Monte Carlo time
  ${\delta_{P}}^{O(1)}(nD)^{O(n^{1.5})}$.
\end{theorem}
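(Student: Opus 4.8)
The plan is to follow the classical "critical points / roadmap" strategy, but to replace the single projection onto a line (which forces a recursion in dimension $n$ and is the source of the $D^{O(n^2)}$ bound) by a projection onto a well-chosen subspace of dimension $p \approx \sqrt{n}$, producing a recursion of depth $O(n/p) = O(\sqrt n)$ in which each fibre again has dimension about $p$. At each level we will put the hypersurface in generic position by a random linear change of coordinates (this is where the Monte Carlo nature enters), compute the critical locus $W$ of the projection $\pi_p\colon V(f)\to \C^p$ restricted to $V(f)$, show that $\dim W < \dim V(f)$ and that $W\cap\R^n$ meets every connected component of $V(f)\cap\R^n$ in a way compatible with the fibres of $\pi_p$, then recurse on $W$ together with enough fibres $V(f)\cap \pi_p^{-1}(x)$ over a generic finite set of points $x\in\C^p$ (the "baby steps/giant steps" bookkeeping: one balances the number of fibres one must add against the depth of the recursion). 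The control points $\mathscr P$ are carried along by adding, at the first level, the fibres through each point of $\mathscr P$, which accounts for the $\delta_P$ factors in both degree and time.

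**Key steps, in order.** First, after a generic linear change of coordinates, invoke the structural facts about polar varieties recalled earlier in the paper to guarantee that the successive critical loci $W^{(0)}=V(f)\supset W^{(1)}\supset\cdots$ are smooth (off the finitely many singular points of $V(f)$, which we isolate and add as extra control points), of the expected dimension, and that each $W^{(j)}$ is defined by a system of degree $(nD)^{O(1)}$. Second, establish the geometric connectivity statement: for a generic $\pi_p$, the union of $W^{(1)}\cap\R^n$ with a suitable finite set of fibres of $\pi_p|_{V(f)\cap\R^n}$ meets every connected component of $V(f)\cap\R^n$, and is "connectivity-faithful" in the sense that two points of $V(f)\cap\R^n$ lie in the same component iff their images do after replacing $V(f)$ by this union — this is the analogue of Canny's roadmap theorem for higher-dimensional base spaces, and it is what makes the recursion correct. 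Third, set up the recursion: at stage $k$ we are computing a roadmap of an algebraic set of dimension $d_k$ sitting in $\C^{n}$ (or a suitable linear section thereof) together with a control set whose size we track; one application of the above reduces $d_k$ to $d_k-p$ at the cost of multiplying the control-set size by a factor $(nD)^{O(p)}$ and the degree of the computed object by $(nD)^{O(p)}$. Fourth, unwind: after $O(n/p)$ stages the dimension is $\le 1$ and we are done; the total degree is $\delta_P\cdot (nD)^{O(p)\cdot O(n/p)} = \delta_P (nD)^{O(n)}$ per stage contribution, but because each stage also multiplies by the degree of the ambient critical locus one actually gets the stated $\delta_P(nD)^{O(n^{1.5})}$ by choosing $p=\lceil\sqrt n\rceil$, so that $O(n/p)\cdot O(p^2)=O(n^{1.5})$; the arithmetic complexity is polynomial in this output degree and in $\delta_P$, giving the claimed Monte Carlo running time $\delta_P^{O(1)}(nD)^{O(n^{1.5})}$.

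**Main obstacle.** The genuinely hard part is the connectivity result in Step 2 for $p>1$: with a one-dimensional base one has Canny's classical argument (fibres are generic hyperplane sections, critical points capture the local extrema of the projection, and a path in $V\cap\R^n$ is deformed onto the roadmap by pushing it through the critical locus), whereas for $p\ge 2$ the fibres are themselves positive-dimensional and one must argue that the critical locus $W^{(1)}$ already "sees" how the components of nearby fibres merge as $x$ varies over $\C^p$ — i.e. one needs a relative, parametrized version of the connectivity lemma, controlling how connected components of $\pi_p^{-1}(x)\cap V\cap\R^n$ behave over a semialgebraic stratification of the base. Making this rigorous while keeping the genericity hypotheses checkable by a random choice of coordinates (and making sure "generic finite set of fibres" is both effectively computable and provably sufficient) is the technical core; everything else — dimension counts, degree bounds via Bézout, the complexity of the symbolic routines computing the $W^{(j)}$ and sampling points — is routine given the background recalled in the paper and the standard geometric-resolution / critical-point machinery.
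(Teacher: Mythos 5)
Your high-level plan (random coordinates, polar variety of a projection onto $\simeq\sqrt n$ coordinates, repair connectivity with finitely many fibres, recurse) is indeed the paper's starting point, but there is a genuine gap at the heart of your recursion: you never say how the critical locus $W^{(1)}$ itself, which has dimension $\simeq\sqrt n-1>1$, gets turned into a curve. If you intend to recurse on it with the same method (your chain $W^{(0)}\supset W^{(1)}\supset\cdots$ suggests iterated critical loci), this is exactly the step that does not go through: $W^{(1)}$ is no longer a hypersurface but a variety defined by a system of $\simeq n-\sqrt n$ equations, and the genericity needed for the connectivity statement --- in the paper's notation, assumption $\BS$, in particular the finiteness of $\crit(\Pi_1,W_i)$ and the Noether-position statements --- is only established for $i=2$ when the input is a system, and for arbitrary $i$ only in the hypersurface case $p=1$ (where it already requires Mather transversality and Thom--Boardman strata, not ``routine'' background). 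The paper explicitly states that for general $p$ and $i$ it is not known whether the property holds in generic coordinates. This forces the hybrid structure of the actual algorithm: giant steps ($i\simeq\sqrt n$) are taken only on the hypersurface and its fibres, while the polar variety $W_i$ is handed to a modified Canny algorithm that works on systems but only with $i=2$ (baby steps). The exponent $n^{1.5}$ is precisely the balance between the depth $O(n/i)$ of the giant-step recursion and the cost $(nD)^{O(ni)}$ of the baby-step treatment of the $(i-1)$-dimensional polar variety; if your symmetric recursion on $W$ were available, the cut-off $p\simeq\sqrt n$ would not be forced and your stated rationale (``number of fibres vs.\ depth'') would not explain it.

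Two further points need repair even granting the structure. First, the per-level degree bookkeeping: each critical locus and each fibre is cut out by about $n$ equations of degree $O(D)$ in $\C^n$, so one level multiplies degrees and control-set sizes by $(nD)^{O(n)}$, not $(nD)^{O(p)}$; your final arithmetic $O(n/p)\cdot O(p^2)=O(n^{1.5})$ happens to coincide with the correct count only because $O(p^2)=O(n)$ at $p=\sqrt n$. Second, you correctly identify the parametrized connectivity statement as the core difficulty, but the proposal contains no argument for it, nor for the auxiliary facts the recursion needs (that the fibre systems again satisfy $\AS$, that $\mathscr{C}'\cap W_i$ is finite, and the gluing lemma that lets you take the union of the two recursively computed roadmaps); in the paper these are Theorem~\ref{theo:big} and Lemma~\ref{lemma:glue}, proved via Morse-type deformation lemmas (Ehresmann/Hardt fibrations) and a careful induction on the critical values of $\Pi_1$. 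As it stands, the proposal is an outline that assumes precisely the two ingredients --- the connectivity theorem for $i>2$ and the genericity of $\BS$ --- whose proofs, and whose failure for systems, shape both the algorithm and the $(nD)^{O(n^{1.5})}$ bound.
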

The probabilistic aspects of our algorithm are twofold: first, we
choose random changes of variables to ensure nice geometric
properties. Second, we need to solve systems of polynomial equations;
for our purpose, the algorithm with the best adapted
cost~\cite{Lecerf00} is probabilistic as well. Remark that we can also
deterministically compute a roadmap of $(V(f),\mathscr{P})$ of degree
$\delta_{P}(nD)^{O(n^{1.5})}$: exhaustive searches in a large enough
sample set enable us to deterministically find suitable changes of
variables; then, deterministic polynomial system solving algorithms
replace the use of~\cite{Lecerf00}.

We expect in further work to apply our techniques to the case where
the input polynomial has coefficients in a field that contains
infinitesimal quantities: similar generalizations, based on the
Transfer Principle, are in~\cite[Ch.~12]{BaPoRo06}. We hope to obtain
general roadmap algorithms for semi-algebraic sets of cost
$s^{O(d)}(nD)^{{O}(n^{1.5})}$ (using the notation of the previous
paragraphs).

\paragraph{Algebraic sets.}
To describe our contribution, we need a few definitions.  We define
most of the notation needed below; for standard notions not recalled
here, see~\cite{ZaSa58,Mumford76,Shafarevich77,Eisenbud95}. An {\em
  algebraic set} $V \subset \C^n$ is the set of common zeros of some
polynomial equations $f_1,\dots,f_s$ in variables $X_1,\dots,X_n$; we
write $V=V(f_1,\dots,f_s)$.  The dimension of $V$ is the Krull
dimension of $\C[X_1,\dots,X_n]/I$, where $I$ is the ideal $\langle
f_1,\dots,f_s \rangle$ in $\C[X_1,\dots,X_n]$. The set $V$ can be
uniquely decomposed into {\em irreducible} components, which are
algebraic sets as well; when they all have the same dimension, we say
that $V$ is {\em equidimensional}. The {\em degree} of an irreducible
algebraic set $V \subset \C^n$ is the maximum number of intersection points
between $V$ and a linear space of dimension $n-\dim(V)$; the degree of
an arbitrary algebraic set is the sum of the degrees of its
irreducible components.

The tangent space to $V$ at $\x \in V$ is the vector space $T_\x V$
defined by the equations $\grad(f,\x)\cdot\v =0$, for all polynomials
$f$ that vanish on $V$. When $V$ is equidimensional, the {\em regular
  points} on $V$ as those points $\x$ where $\dim(T_\x V)=\dim(V)$;
more generally, the regular points are those where the local ring of
$V$ at $\x$ is regular of dimension $d$. The {\em singular points} are
all other points. The set of regular (resp. singular) points is
denoted by $\reg(V)$ (resp. $\sing(V)$). The set $\sing(V)$ is an
algebraic subset of $V$, of smaller dimension than $V$.

\paragraph{Polar varieties.}
Canny's algorithm is the best known approach to computing roadmaps. 
Given an algebraic set $V$, it
proceeds by computing some critical curves on $V$, and studying some
distinguished points on these curves. One of our  contributions is
the use of higher-dimensional critical loci, called {\em polar
  varieties}, that were introduced by Todd~\cite{Todd37} and studied
from the algorithmic point of view
in~\cite{BaGiHeMb97,BaGiHeMb01}. For positive integers $i \le n$, we
denote by $\Pi_i$ the projection
$$\begin{array}{cccc}
\Pi_i: & \C^n & \to & \C^i \\
 & \x =(x_1,\dots,x_n) & \mapsto & (x_1,\dots,x_i).
\end{array}$$
Then, the polar variety $w_i$ is the set of critical points of $\Pi_i$
on $\reg(V)$, that is, the set of all points $\x \in \reg(V)$ such
that $\Pi_i(T_\x V) \ne \C^i$. The set $w_i$ may not be an
algebraic set if $V$ has singular points; in this case, however, the set $W_i = w_i \cup \sing(V)$
is algebraic. By abuse of notation, we still call it a polar 
variety and we write $W_i=\crit(\Pi_i,V)$. Its expected dimension is $i-1$.

If $V$ is given as $V(f_1,\dots,f_p)$, is equidimensional of dimension
$d=n-p$, and if the ideal $\langle f_1,\dots,f_p \rangle$ is radical,
then $W_i$ is the zero-set of $(f_1,\dots,f_p)$ and of all minors of
size $p$ taken from the jacobian matrix $\jac(\f,[X_{i+1},\dots,X_n])$ of $\f$ in $X_{i+1},\dots,X_n$.

\paragraph{Using polar varieties.}
Given $f$ of degree $D$ and $V=V(f)$, assuming $V(f)\cap\R^n$ is
smooth and compact, Canny's algorithm computes the critical curve
$W_2=\crit(\Pi_2,V)$. The compactness assumption ensures that $W_2$
intersects each connected component of $V\cap \R^n$, but not that
these intersections are connected. The solution consists in choosing a
suitable family $\mathscr{E}=\{x_1, \ldots, x_N\} \subset\R$ so that
the union of $W_2$ and $\mathscr{C}'=V \cap \Pi_1^{-1}(\mathscr{E})$
is an $(n-2)$-roadmap of $V$.

To realize this, Canny's algorithm uses the following connectivity
result: defining the (expectedly finitely many) points $\mathscr{C}=
W_1\cup \crit(\Pi_1, W_2)$, and taking their projection
$\mathscr{E}=\Pi_1(\mathscr{C})$ in the construction above gives an
$(n-2)$-roadmap of $V$ of degree $D^{O(n)}$. Then, the algorithm
recursively constructs a roadmap in $\Pi_1^{-1}(\mathscr{E})\cap V$
following the same process; this is geometrically equivalent to a
recursive call with input $f(x, X_2, \ldots, X_n)$ for all $x\in
\mathscr{E}$.  At each recursive call, the number of control points we
compute is multiplied by $D^{O(n)}$, but the dimension of the input
decreases by $1$ only. Thus, the depth of the recursion is $n$ and the
roadmap we get has degree $D^{O(n^2)}$.

Our algorithm relies on a new connectivity result that generalizes the
one described above. We want to avoid the degree growth by performing
recursive calls on inputs whose dimension has decreased by $i \gg
1$. To this end, instead of considering, as Canny did, the polar curve
$W_2$ associated to a projection on a plane, we use polar varieties
$W_i$ of higher dimension. As above, we have to consider suitable
fibers $V \cap \Pi_{i-1}^{-1}(\x)$ to repair the defaults of
connectivity of $W_i$. To achieve this, we use the following new
result (Theorem~\ref{theo:big} below): define $\mathscr{C}=W_1 \cup
\crit(\Pi_1, W_i)$ and
$\mathscr{C}'=V\cap\Pi_{i-1}^{-1}(\Pi_{i-1}(\mathscr{C}))$; under some
crucial (but technical) assumptions, $W_i\cup\mathscr{C}'$ is a
$\max(i-1,n-i)$-roadmap of $V$. This leads to a more complex recursive
algorithm; the optimal cut-off we could obtain that ensured all
necessary assumptions has $i \simeq \sqrt{n}$.

\paragraph{Outline of the paper; basic notation.} Our algorithm is
described in the next section. The final two sections sketch the
proofs of two key points: the connectivity result mentioned above, and
the fact that generic changes of variables suffice to ensure the
assumptions needed by this connectivity result.

If $X$ is a subset of either $\C^n$ or $\R^n$, and if $A$ is a subset
of $\R$, we write $X_A\ =\ X \cap \Pi_1^{-1}(A)\cap \R^n.$ For $x$ in
$\R$, we use the particular cases $X_{<x} = X_{]-\infty,x)}$, $X_{x} =
X_{\{x\}}$, $X_{\le x} = X_{]-\infty,x]}.$ Hereafter, a property is
called {\em generic} if it holds in a Zariski-open dense subset of the
corresponding parameter space.

%%%%%%%%%%%%%%%%%%%%%%%%%%%%%%%%%%%%%%%%%%%%%%%%%%%%%%%%%%%%
%%%%%%%%%%%%%%%%%%%%%%%%%%%%%%%%%%%%%%%%%%%%%%%%%%%%%%%%%%%%
%%%%%%%%%%%%%%%%%%%%%%%%%%%%%%%%%%%%%%%%%%%%%%%%%%%%%%%%%%%%

\section{Algorithm}

Even though we are interested in roadmaps for hypersurfaces, the
recursive structure of the algorithm requires that we consider systems
of the form $\f=(f_1,\dots,f_p)$ in $\Q[X_1,\dots,X_n]$. After stating
our connectivity result, we give a modification of Canny's algorithm
for such systems, then use it as a subroutine for our main algorithm.

%%%%%%%%%%%%%%%%%%%%%%%%%%%%%%%%%%%%%%%%%%%%%%%%%%%%%%%%%%%%

\subsection{Main connectivity result and sketch of the algorithm}\label{ssec:main}

We say that the system $\f$ satisfies assumption $\AS$ if
\begin{itemize}
\item[$(a)$] the ideal $\langle f_1,\dots,f_p \rangle$ is radical;
\item[$(b)$] $V=V(f_1,\dots,f_p)$ is equidimensional of dimension $d=n-p$;
\item[$(c)$] $\sing(V)$ is finite;
\item[$(d)$] $V \cap \R^n$ is bounded.
\end{itemize}
These conditions are independent of the choice of coordinates. Next,
we fix $i$ in $\{2,\dots,d-1\}$ and we say that $\f$ satisfies
condition $\BS$ if the following holds:
\begin{itemize}
\item[$(a)$] $\dim(V)=d$ and the extension $\C[X_1,\dots,X_d] \to
  \C[X_1,\dots,X_n]/\langle f_1,\dots,f_p\rangle$ is integral ({\it
    i.e.} $V$ is in Noether position for $\Pi_d$);
\item[$(b)$] $W_i$ is in Noether position for $\Pi_{i-1}$ (same definition as above);
\item[$(c)$] $W_1$ is finite;
\item[$(d)$] $\crit(\Pi_1,W_i)$ is finite.
\end{itemize}
We will see that these new assumptions can be ensured by a generic change
of variables for some values of $p$ and $i$ (but not all). Finally, we
consider a finite subset of points $\mathscr{P}$ in $V$, and we define
\begin{itemize}
\item $\mathscr{C} = W_1\,\cup \, \crit(\Pi_1, W_i)\, \cup\, \mathscr{P}$, which is finite under 
  $\AS$ and $\BS$;
\item $\mathscr{C}'=V\,\cap\,\Pi_{i-1}^{-1}(\Pi_{i-1}(\mathscr{C}))$, so that
 $\x \in V$ is in $\mathscr{C}'$ if and only if $\Pi_{i-1}(\x)$ is in $\Pi_{i-1}(\mathscr{C})$.
\end{itemize}
The following theorem is proved in the next section; it is the key to our algorithms.
\begin{theorem}\label{theo:big}
  Let $d' = \max(i-1, d-i+1)$. Under assumptions $\AS$ and $\BS$, the
  following holds:
  \begin{enumerate}
  \item $\mathscr{C}'\cup W_i$ is a $d'$-roadmap of
    $(V,\mathscr{P})$;
  \item $\mathscr{C}'\cap W_i$ is finite;
  \item for all $\x\in \C^{i-1}$, the system
    $(f_1,\dots,f_p,X_1-x_1,\dots,X_{i-1}-x_{i-1})$ satisfies
    assumption $\AS$.
  \end{enumerate}
\end{theorem}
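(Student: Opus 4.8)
\emph{Proof proposal.} The plan is to treat the three items separately: (2) and (3) are commutative algebra and Noether-position bookkeeping, while (1) --- and within it the \emph{connectivity} part of $\mathrm{RM}'_1$ --- carries essentially all of the difficulty. I would prove (3) first and use it throughout. Fix $\x=(x_1,\dots,x_{i-1})\in\C^{i-1}$, set $A=\C[X_1,\dots,X_n]/\langle f_1,\dots,f_p\rangle$ and $L=\Pi_{i-1}^{-1}(\x)$. Since $\langle f_1,\dots,f_p\rangle$ is radical ($\AS(a)$) and $V$ is equidimensional of dimension $n-p$ ($\AS(b)$), the ideal is unmixed of height $p$ and generated by $p$ elements in the regular (hence Cohen--Macaulay) ring $\C[X_1,\dots,X_n]$, so $(f_1,\dots,f_p)$ is a regular sequence and $A$ is Cohen--Macaulay of dimension $d$. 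Assumption $\BS(a)$ makes $\Pi_d$ a finite surjection on $V$; restricting it to the (equidimensional) components of $V$ shows $X_1-x_1,\dots,X_{i-1}-x_{i-1}$ is a regular sequence in $A$, because at each stage the current linear form is a nonzerodivisor --- it is nonconstant on every component, all of which still dominate the relevant coordinate space. Hence $\bar A:=A/(X_1-x_1,\dots,X_{i-1}-x_{i-1})$ is Cohen--Macaulay, nonzero, and equidimensional of dimension $d-i+1$ by Krull's height theorem together with the catenarity and equidimensionality of $A$; this is $\AS(b)$ for the sliced system, and $\AS(d)$ is immediate since its real locus sits inside $V\cap\R^n$. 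For $\AS(c)$: a point of $V\cap L$ is singular only if it lies in $\sing(V)$ or is a point of $\reg(V)$ at which $L$ is not transverse to $V$, i.e.\ a point of $w_{i-1}$; since a tangent map that is onto after applying $\Pi_i$ is onto after applying $\Pi_{i-1}$, we have $w_{i-1}\subseteq w_i$ and hence $W_{i-1}\subseteq W_i$, and $\BS(b)$ makes $\Pi_{i-1}|_{W_i}$ finite, so $W_i\cap L$ --- hence $\sing(V\cap L)\subseteq W_{i-1}\cap L$ --- is finite. Finally $\bar A$ satisfies Serre's $(S_1)$ (it is Cohen--Macaulay), so it is reduced as soon as it is generically reduced; but every component of $V\cap L$ has dimension $d-i+1\ge 2$ and therefore is not contained in the finite set $W_{i-1}\cap L$, so at a general point of it $V$ is smooth and $L$ transverse, which makes $V\cap L$ reduced there. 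This proves $\AS(a)$, hence (3).

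Item (2) is then short: $\mathscr{C}=W_1\cup\crit(\Pi_1,W_i)\cup\mathscr{P}$ is finite under $\AS,\BS$ (by $\BS(c)$, $\BS(d)$ and finiteness of $\mathscr{P}$), so $\Pi_{i-1}(\mathscr{C})$ is a finite subset of $\C^{i-1}$; since $\Pi_{i-1}|_{W_i}$ is finite by $\BS(b)$, the set $\mathscr{C}'\cap W_i=W_i\cap\Pi_{i-1}^{-1}(\Pi_{i-1}(\mathscr{C}))$ is a finite union of finite fibres. For (1), $\mathrm{RM}'_2$ holds by definition ($W_i\subseteq V$, $\mathscr{C}'\subseteq V$); $\mathrm{RM}'_3$ follows from $\dim\mathscr{C}'=d-i+1$ (fibres of $\Pi_{i-1}$ on $V$, by (3)) and $\dim W_i=i-1$ (the bound $\le i-1$ being $\BS(b)$); and the non-emptiness half of $\mathrm{RM}'_1$ is easy: on a connected component $C$ of $V\cap\R^n$ the function $\Pi_1$ attains a minimum (compactness, $\AS(d)$) at a point that is either singular or critical for $\Pi_1|_{\reg(V)}$, hence lies in $W_1\subseteq\mathscr{C}\subseteq\mathscr{C}'$.

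The heart of the theorem is that $(\mathscr{C}'\cup W_i)\cap C$ is \emph{connected} for each connected component $C$ of $V\cap\R^n$. I would prove this by constructing, for any two of its points, a path between them inside $(\mathscr{C}'\cup W_i)\cap C$, in three stages that generalize Canny's argument for the curve case (which is $i=2$) from the projection $\Pi_1$ to the projection $\Pi_{i-1}$. First, any point of $C$ is joined --- inside its fibre $V\cap\Pi_{i-1}^{-1}(\cdot)$, by following the gradient of $X_i$ --- to a critical point of $X_i$ on that fibre; using (3) to control the (finite) singular locus of the fibre, such a point lies in $W_i$ or was already in $\mathscr{C}'$. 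Second, any point of $W_i\cap C$ is joined --- inside $W_i\cap C$, by following the gradient of $\Pi_1$, which converges since $W_i\cap\R^n$ is bounded --- to a point of $\crit(\Pi_1,W_i)\cup W_1\subseteq\mathscr{C}$. We are thus reduced to connecting the finitely many points of $\mathscr{C}\cap C$, and the third stage does this by a sweeping/isotopy argument over the base $\R^{i-1}$: away from the discriminant of $\Pi_{i-1}|_V$ (and of $\Pi_{i-1}|_{W_i}$) the fibres $V_\b$, and the finite sets $W_i\cap V_\b$ inside them, form locally trivial families --- this is where properness, i.e.\ compactness from $\AS(d)$, enters --- so connectivity is transported along generic paths in $\R^{i-1}$, and whenever such a path is forced to meet a genuinely degenerate base point the corresponding fibre has been added to $\mathscr{C}'$ so that it bridges the gap.

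The main obstacle is precisely this third stage, and in it the geometric statement that the set of base points at which the sweep can fail to preserve connectivity of the roadmap is contained in $\Pi_{i-1}(W_1\cup\crit(\Pi_1,W_i))$. This is the new connectivity phenomenon: one must show that the only ``critical events'' of the sweep over $\R^{i-1}$ --- mergings or disappearances of fibre components, collisions of points of $W_i$ within a fibre --- that actually require a full fibre to be repaired occur over projections of critical points of $\Pi_1$ on $V$ and on $W_i$, all other crossings of the discriminant being equisingular enough to carry connectivity across on their own. This is where $\BS(b)$--$\BS(d)$ (Noether position of $W_i$ for $\Pi_{i-1}$, finiteness of $W_1$ and of $\crit(\Pi_1,W_i)$) are used together with the smoothness and compactness coming from $\AS$, and I would expect the cleanest route to be an induction in which (3) guarantees, at each step, that the fibres one recurses on again satisfy $\AS$.
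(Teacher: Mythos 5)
Your items (2) and (3), and the routine parts of (1) (${\rm RM}'_2$, ${\rm RM}'_3$, ${\rm RM}'_4$ and the non-emptiness half of ${\rm RM}'_1$), are essentially correct and use the same ingredients as the paper: Noether position of $V$ for $\Pi_d$ to get non-empty fibres and the dimension count, Noether position of $W_i$ for $\Pi_{i-1}$ together with $w_{i-1}\subseteq w_i$ to confine $\sing(V_\x)$ to the finite set $W_i\cap\Pi_{i-1}^{-1}(\x)$, and unmixedness/Cohen--Macaulayness plus generic smoothness for radicality. One small caveat: when you argue that $X_1-x_1,\dots,X_{i-1}-x_{i-1}$ is a regular sequence ``because each form is nonconstant on every component that dominates the base,'' avoiding minimal primes is not by itself enough at the intermediate stages (embedded primes could occur); you must invoke the Cohen--Macaulayness of each successive quotient (which you do set up) or argue as the paper does, via the Jacobian criterion at a smooth point of every component plus Macaulay's unmixedness theorem.

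The genuine gap is the connectivity half of ${\rm RM}'_1$, i.e.\ your stage 3, which you yourself flag as ``the main obstacle.'' Reducing the problem to connecting the finitely many points of $\mathscr{C}\cap C$ is a reasonable first move (modulo making the gradient-flow arguments of stages 1--2 rigorous at the finitely many singular points; the paper instead uses the semi-algebraic implicit function theorem, the curve selection lemma and Ehresmann's fibration theorem). But for stage 3 you only assert that a sweep over the base $\R^{i-1}$ transports connectivity, with failures confined to $\Pi_{i-1}(W_1\cup\crit(\Pi_1,W_i))$, ``all other crossings of the discriminant being equisingular enough to carry connectivity across on their own.'' That assertion is unproven and is precisely the content of the theorem: the discriminant of $\Pi_{i-1}|_V$ is in general a positive-dimensional hypersurface of $\R^{i-1}$, only the finitely many fibres over $\Pi_{i-1}(\mathscr{C})$ are added to $\mathscr{C}'$, and a path in the base will generically cross the discriminant at points whose fibres are \emph{not} in the roadmap; nothing in $\AS$/$\BS$ yields an equisingularity statement there, and the paper never proves one. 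The paper's proof avoids any sweep over $\R^{i-1}$: it sweeps only in the $X_1$-direction, proving by induction over the values $v_1<\cdots<v_\ell=\Pi_1(\mathscr{C}\cap\R^n)$ the property ${\bf P}(x)$ that every connected component of $V_{\le x}$ meets $\mathscr{R}$ in a non-empty connected set; the interval step applies the Ehresmann-based retraction Lemma~\ref{prop:ret} to both $V$ and $W_i$, and the limit step (Lemma~\ref{Prop:Px2} via Lemma~\ref{7}) uses curve selection and an infinitesimal deformation to connect a boundary point of $\mathscr{C}'$ to $W_i$ inside a single fibre $\Pi_{i-1}^{-1}(\Pi_{i-1}(\y))$, with the uniqueness-of-component Lemma~\ref{prop:wibar} doing the bookkeeping. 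As written, your proposal leaves the theorem's central claim unestablished, and completing your intended route (local triviality/equisingularity over an $(i-1)$-dimensional base) would require a substantial new argument not supplied by the hypotheses.
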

\noindent The idea of the algorithm is to compute $W_i$, the finite
sets $\mathscr{C}$ and $\mathscr{C}'\cap W_i$ and to recursively
compute roadmaps of $\mathscr{C}'$ and $W_i$, if their dimension is
too high. For $\mathscr{C}'$, this will be possible by point~3 of the
theorem, but for $W_i$ this will be more delicate, since we may not be
able to enforce $\AS$; this will restrict our choices for
$i$ and dictate the structure of the algorithm. The correctness of this recursive process follows from the
following lemma.
\begin{lemma}\label{lemma:glue}
  With notation as above, if $\mathscr{R}_1$ and $\mathscr{R}_2$ are
  roadmaps of respectively $(W_i, (\mathscr{C}' \cap W_i) \cup
  \mathscr{P})$ and $(\mathscr{C}', (\mathscr{C}' \cap W_i)\cup
  \mathscr{P})$, then $\mathscr{R}_1 \cup \mathscr{R}_2$ is a
  roadmap of $(V,\mathscr{P})$.
\end{lemma}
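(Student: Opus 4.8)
\emph{Approach.} Write $\mathscr{R}=\mathscr{R}_1\cup\mathscr{R}_2$; I will verify conditions ${\rm RM}'_1$ to ${\rm RM}'_4$ for $(\mathscr{R},V)$. Three of them are immediate. Since $\mathscr{R}_1$ and $\mathscr{R}_2$ are algebraic of dimension $1$, so is $\mathscr{R}$, giving ${\rm RM}'_3$. Since $\mathscr{R}_1\subseteq W_i\subseteq V$ and $\mathscr{R}_2\subseteq\mathscr{C}'\subseteq V$, we get $\mathscr{R}\subseteq V$, i.e. ${\rm RM}'_2$. Finally $\mathscr{P}\subseteq\mathscr{C}$ by definition of $\mathscr{C}$, and $\mathscr{C}\subseteq V\cap\Pi_{i-1}^{-1}(\Pi_{i-1}(\mathscr{C}))=\mathscr{C}'$, so $\mathscr{P}\subseteq\mathscr{C}'$; as $\mathscr{P}$ is among the control points of the roadmap $\mathscr{R}_2$ of $\mathscr{C}'$, we have $\mathscr{P}\subseteq\mathscr{R}_2\subseteq\mathscr{R}$, i.e. ${\rm RM}'_4$. (Note that ${\rm RM}'_4$ is obtained through $\mathscr{R}_2$, so we never need $\mathscr{P}\subseteq W_i$; the sole purpose of listing $(\mathscr{C}'\cap W_i)\cup\mathscr{P}$ among the control points of $\mathscr{R}_1$ is to force $\mathscr{C}'\cap W_i\subseteq\mathscr{R}_1$.) Everything therefore reduces to ${\rm RM}'_1$.

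\emph{A gluing sublemma.} The core of the argument is the following general fact, which I would state and prove separately: \emph{let $Y_1$ and $Y_2$ be real algebraic sets with $Y_1\cap Y_2$ finite, put $Y=Y_1\cup Y_2$, and let $\mathscr{R}_j$ be a roadmap of $Y_j$ whose set of control points contains $Y_1\cap Y_2$; then $\mathscr{R}_1\cup\mathscr{R}_2$ meets every connected component of $Y\cap\R^n$ in a non-empty connected set.} I apply this with $Y_1=W_i$ and $Y_2=\mathscr{C}'$: both are algebraic, $W_i\cap\mathscr{C}'=\mathscr{C}'\cap W_i$ is finite by point~2 of Theorem~\ref{theo:big}, and it is contained in the control sets $(\mathscr{C}'\cap W_i)\cup\mathscr{P}$ imposed on \emph{both} $\mathscr{R}_1$ and $\mathscr{R}_2$.

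\emph{Proof of the sublemma (sketch).} Fix a connected component $C$ of $Y\cap\R^n$. Real algebraic sets are locally connected with finitely many connected components, each of which is closed; hence $C$ is the union of finitely many connected components $E_1,\dots,E_a$ of $Y_1\cap\R^n$ and $F_1,\dots,F_b$ of $Y_2\cap\R^n$, and $\mathscr{R}\cap C=\bigcup_\alpha r_\alpha\,\cup\,\bigcup_\beta s_\beta$ where $r_\alpha=\mathscr{R}_1\cap E_\alpha$ and $s_\beta=\mathscr{R}_2\cap F_\beta$. By the roadmap property each $r_\alpha$ and each $s_\beta$ is non-empty and connected, so $\mathscr{R}\cap C\neq\emptyset$. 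Form the bipartite intersection graph $G$ on vertices $E_1,\dots,E_a,F_1,\dots,F_b$, joining $E_\alpha$ to $F_\beta$ when $E_\alpha\cap F_\beta\neq\emptyset$: since these are finitely many closed sets with connected union $C$, the graph $G$ is connected (a non-trivial partition of its vertices into edge-free blocks would exhibit $C$ as a disjoint union of two non-empty closed sets). For an edge $E_\alpha F_\beta$, any point of $E_\alpha\cap F_\beta$ lies in $Y_1\cap Y_2$, hence in the control points of both roadmaps, hence in $\mathscr{R}_1\cap\mathscr{R}_2\cap E_\alpha\cap F_\beta\subseteq r_\alpha\cap s_\beta$, so $r_\alpha\cap s_\beta\neq\emptyset$. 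A union of connected sets indexed by the vertices of a connected graph and meeting pairwise along its edges is connected; thus $\mathscr{R}\cap C$ is connected.

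\emph{Conclusion and main difficulty.} By point~1 of Theorem~\ref{theo:big}, $W_i\cup\mathscr{C}'$ is a $d'$-roadmap of $(V,\mathscr{P})$; in particular, for every connected component $C'$ of $V\cap\R^n$ the set $(W_i\cup\mathscr{C}')\cap C'$ is non-empty and connected, and (being clopen in $(W_i\cup\mathscr{C}')\cap\R^n$) it is exactly one connected component of $(W_i\cup\mathscr{C}')\cap\R^n$. The sublemma applied to that component, together with $\mathscr{R}\subseteq W_i\cup\mathscr{C}'$, shows that $\mathscr{R}\cap C'=\mathscr{R}\cap\big((W_i\cup\mathscr{C}')\cap C'\big)$ is non-empty and connected, which is ${\rm RM}'_1$. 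The only genuinely delicate point is the sublemma: one must be careful that the pieces being glued are exactly the (closed, finitely many) connected components of $Y_1\cap\R^n$ and of $Y_2\cap\R^n$, and that \emph{every} point of $Y_1\cap Y_2$ is preserved in both $\mathscr{R}_1$ and $\mathscr{R}_2$ — which is precisely why the same finite set $(\mathscr{C}'\cap W_i)\cup\mathscr{P}$, finite by point~2 of Theorem~\ref{theo:big}, is passed as control points to both recursive roadmap computations.
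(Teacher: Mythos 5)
Your proof is correct, but the key step is argued by a genuinely different route than the paper's. The reduction is essentially identical in both: your observation that $(W_i\cup\mathscr{C}')\cap C'$ is clopen in $(W_i\cup\mathscr{C}')\cap\R^n$, hence a connected component of it, is exactly the paper's Lemma~\ref{lem:roadCC}, and the paper packages the rest of the reduction as ``a roadmap of a roadmap is a roadmap'' (its Lemma~\ref{lem:prelim}), invoking point~1 of Theorem~\ref{theo:big} just as you do. Where you diverge is in showing that the union of the two sub-roadmaps meets each connected component of $(W_i\cup\mathscr{C}')\cap\R^n$ in a non-empty connected set. The paper does this by path surgery: it takes a semi-algebraic path between two points of the component, reparametrizes it so that the preimage of the finite set $W_i\cap\mathscr{C}'$ consists of finitely many parameters, and replaces each excursion into one of the two pieces by a path inside the corresponding sub-roadmap, using that the crossing points are control points and that the trace of each sub-roadmap on each connected component of its piece is connected. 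You avoid path manipulation altogether: you decompose the component of the union into the finitely many closed connected components of $W_i\cap\R^n$ and $\mathscr{C}'\cap\R^n$ it contains, prove the bipartite intersection graph is connected by the closed-cover argument, and glue the non-empty connected traces $r_\alpha$, $s_\beta$ along control points located at each pairwise intersection. Your combinatorial gluing sidesteps the slightly delicate reparametrization step that the paper only sketches, and it isolates a clean general sublemma (two algebraic sets with finite intersection, roadmaps sharing that intersection as control points); the paper's path argument is closer in spirit to the connectivity arguments used elsewhere in the paper and produces explicit connecting paths. Both proofs rest on exactly the same inputs: finiteness of $\mathscr{C}'\cap W_i$ (point~2 of Theorem~\ref{theo:big}) and the fact that this finite set is passed as control points to both recursive roadmap computations.
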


%%%%%%%%%%%%%%%%%%%%%%%%%%%%%%%%%%%%%%%%%%%%%%%%%%%%%%%%%%%%

\subsection{Preliminaries to the algorithms}\label{ssec:preli}

\paragraph{Data representation.}
The outputs of our algorithms are sets of {\em rational
  parametrizations} of algebraic curves: if $\mathcal{C} \subset \C^n$
is an algebraic curve defined over $\Q$, such a parametrization
consists in polynomials $Q=(q,q_0,\dots,q_n)$ in $\Q[U,T]$ and two linear
forms $\tau=\tau_1 X_1 +\cdots +\tau_n X_n$, $\eta=\eta_1 X_1 +\cdots
+\eta_n X_n$ with coefficients in $\Q$, such that $\mathcal{C}$ is the
Zariski closure of the set defined by
$$q(\eta,\tau) = 0, \qquad X_i = \frac{q_i(\eta, \tau)}{q_0(\eta, \tau)} \ \ (1 \le i \le n),
\qquad q_0(\eta,\tau) \ne 0.$$ The degree of the curve $\mathcal{C}$
is written $\delta_Q$; then, all polynomials in $Q$ can be taken of
degree $\delta_Q^{O(1)}$. By a slight abuse of language, we will say
that a family of 1-dimensional parametrizations is a roadmap of a set
$V$ if the union of the curves they define is.

Finally, internally to the algorithm, we use a similar notion for
0-dimensional ({\it i.e.}  finite) sets of points; then, all
polynomials involved are univariate, and a single linear form is
needed~\cite{GiHe93,Rouillier99}. In this case, we write
$\delta_Q$ for the number of points described by $Q$. If $Q$
represents a set of points in $\C^e$ in variables $X_1,\dots,X_e$, we
write $Q(X_1,\dots,X_e)$.

\medskip\noindent{\bf Quantities carried through recursive calls.}  To
accommodate the recursive nature of the algorithm, we take as input a
pair $[\f,Q]$, where $\f$ is as before and $Q(X_1,\dots,X_e)$ is a
$0$-dimensional parametrization. We are interested in roadmaps of
$V(\f,Q)$; this means that we restrict $X_1,\dots,X_e$ to a finite
number of possible values, that are solutions of $Q$

In this new context, we define analogues of $\AS$ and
$\BS$. Assumption $\AS$ remains unchanged for $[\f,Q]$, up to
replacing $V(\f)$ by $V(\f,Q)$ and $n-p$ by $n-p-e$. To state $\BS$,
for $\x=(x_1,\dots,x_e)$ in $V(Q)$, we define
$\f_\x=\f(x_1,\dots,x_e,Y_1,\dots,Y_{n-e}) \in \C[Y_1,\dots,Y_{n-e}]$,
for some new variables $Y_1,\dots,Y_{n-e}$.  Then we say that $[\f,Q]$
satisfies $\BS$ if for all $\x$ in $V(Q)$, $\f_\x$ satisfies $\BS$.

\medskip\noindent{\bf Subroutines.}  We use a function {\sf Solve} for
solving 0- and 1-dimensional polynomial systems; the result is a
rational parametrization of the solutions. If the input has $s$
equations of degree at most $D$ (with $D\ge 1$), the algorithm
of~\cite{Lecerf00} performs this task in time $sD^{O(n)}$.  The
function {\sf Union} (resp. {\sf Projection}) computes a
parametrization of the union (resp. a projection) of two (resp. one)
0-dimensional sets given by parametrizations; on inputs of degree at
most $\delta$, this takes time $\delta^{O(1)}$.  Finally, we need
algorithms for computing critical points, on two slightly different
kinds of inputs:
\begin{itemize}
\item Given a parametrization $R$ of a curve $\mathcal{C}$ in $\C^n$,
  ${\sf CriticalPointsCurve}(R,X_j)$ computes
  $\crit(\Pi_{X_j},\mathcal{C})$, where $\Pi_{X_j}$ is the projection
  on the $X_j$-axis. Due to the nice shape of our parametrizations,
  this can be done in time $\delta_R^{O(1)}$.
\item Given a system $[\f,Q]$ that satisfies $\AS$, ${\sf
    CriticalPoints}(\f,X_j)$ computes $\crit(\Pi_{X_j},V(\f,Q))$. This
  time, assumption $\AS$ makes it possible to use directly the
  Jacobian matrix of $\f$ to perform this operation; this can be done
  in time $\delta_Q^{O(1)}(nD)^{O(n)}$.
\end{itemize}

%%%%%%%%%%%%%%%%%%%%%%%%%%%%%%%%%%%%%%%%%%%%%%%%%%%%%%%%%%%%

\subsection{Canny's algorithm revisited}\label{ssec:canny}

We start with an algorithm close to Canny's. As opposed to Canny, we
do not work with a single equation but with a system
$\f=f_1,\dots,f_p$ that satisfies $\AS$; as Canny, we take $i=2$ in
the recursion. Indeed, given such a system, we will see that it is
possible to ensure assumption $\BS$ through a generic change of
variables for $i=2$, but not for $i>2$.  As said above, we take a
0-dimensional parametrization $Q(X_1,\dots,X_e)$ as input as well;
then, our change of variables $\varphi$ will leave $X_1,\dots,X_e$
fixed and we denote by $\G(n,e)$ the subset of $\GL_n(\Q)$ satisfying
this constraint.  Our last input are the control points $\mathscr{P}$,
given in the form of a 0-dimensional parametrization $P$.

\begin{lemma}\label{prop:3}
  Suppose that $[\f,Q]$ satisfies $\AS$. After a generic change of
  variables in $\G(n,e)$, the system $[\f,Q]$ satisfies $\AS$ and
  $\BS$ for $i=2$.
\end{lemma}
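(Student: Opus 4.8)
The plan is to prove Lemma~\ref{prop:3} by verifying that each of the two conditions $\AS$ (which is stable under linear changes of coordinates) and $\BS$ (for $i=2$) holds after a generic $\varphi \in \G(n,e)$, treating the four clauses of $\BS$ one at a time. Since $\AS$ is coordinate-independent, it is preserved automatically, so the only work is $\BS$; and since $\BS$ for $[\f,Q]$ means $\BS$ for each fibre $\f_\x$, $\x\in V(Q)$, it suffices to establish each clause for a single fibre and take the intersection of the finitely many resulting Zariski-dense open sets (finitely many because $V(Q)$ is finite). So the real content is: for a system $\g = \f_\x$ satisfying $\AS$ with $\dim V(\g)=d$, a generic change of variables in $\GL_{n-e}(\Q)$ puts $V(\g)$ in Noether position for $\Pi_d$, puts $W_2=\crit(\Pi_2,V(\g))$ in Noether position for $\Pi_1$, and makes both $W_1 = \crit(\Pi_1,V(\g))$ and $\crit(\Pi_1,W_2)$ finite.

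The first clause, Noether position for $\Pi_d$, is the classical Noether normalization lemma in its effective/generic form: a generic linear change of coordinates makes the projection onto the first $d$ coordinates a finite (integral) morphism; this is standard and I would just cite it. For clause $(b)$, I would first argue that $W_2$ has dimension at most $1$: under $\AS$, $W_2 = w_2 \cup \sing(V)$ where $\sing(V)$ is finite by $\AS(c)$, and $w_2$, the critical locus of $\Pi_2$ on $\reg(V)$, has expected dimension $i-1 = 1$; one needs that for generic coordinates this expected dimension is actually attained, which follows from a Bertini/Sard-type transversality argument applied to the incidence variety cutting out critical points (the genericity is in the choice of the linear forms defining $\Pi_2$, equivalently in $\varphi$). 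Once $\dim W_2 \le 1$, a further generic coordinate change (acting on $X_2,\dots,X_{n-e}$, keeping $X_1$ suitably generic) puts $W_2$ in Noether position for $\Pi_1$, again by Noether normalization. Clauses $(c)$ and $(d)$ — finiteness of $W_1$ and of $\crit(\Pi_1,W_2)$ — are the same kind of statement applied respectively to $V$ with the projection $\Pi_1$ (expected dimension $0$) and to the curve $W_2$ with $\Pi_1$ (critical points of a generic projection of a curve are finite): in each case one writes the critical locus as the vanishing of $\f$ (or of a parametrization of $W_2$) together with appropriate maximal minors of a Jacobian, forms the corresponding incidence variety over the parameter space of changes of coordinates, and invokes the generic-fibre-dimension theorem to conclude that for generic $\varphi$ the fibre has the expected (zero) dimension. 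Finally I would intersect the four Zariski-dense open subsets of $\G(n,e)$ obtained, over all $\x \in V(Q)$, to get a single dense open set on which $\AS$ and $\BS$ both hold.

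The main obstacle is clause $(b)$, Noether position of $W_i$ (here $W_2$), for two reasons. First, $W_2$ is not a complete intersection cut out by a controlled number of equations — it is defined by $\f$ together with the $p\times p$ minors of the Jacobian $\jac(\f,[X_3,\dots,X_n])$, and its singular locus and component structure are not under our direct control; so establishing that a generic change of variables simultaneously controls $\dim W_2$ and puts it in Noether position requires care about how $\varphi$ enters these minors (it enters nonlinearly, through both the new coordinates and the transformed Jacobian). Second, this is precisely the clause that fails to generalize to $i>2$ — the reason the lemma is stated only for $i=2$ — so the argument must exploit something special about $i=2$, essentially that $W_2$ is then a curve and the relevant incidence varieties stay low-dimensional enough for the dimension count to close. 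I would handle this by setting up the incidence correspondence $\{(\varphi,\x) : \x \in \crit(\Pi_2^\varphi, V)\}$ carefully, showing its projection to the $\varphi$-space is dominant with generic fibre of dimension $1$ (using $\AS$ and a tangent-space transversality computation at a smooth point), and only then applying Noether normalization to the generic such $W_2$; the reference for the transversality input is the same circle of ideas as in \cite{BaGiHeMb01}, and the effective Noether normalization statement can be quoted from \cite{GiHe93}.
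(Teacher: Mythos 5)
Your overall strategy (coordinate-independence of $\AS$, reduction to each fibre $\f_\x$ with $\x$ in the finite set $V(Q)$, dimension of the polar variety, Noether position, and the observation that $i=2$ is special because $W_2$ is a curve) matches the paper's, which handles clauses $(a)$ and $(b)$ by citing the generic dimension statement for polar varieties \cite{BaGiHeMb97,BaGiHeMb01} together with the Noether-position result of \cite{SaSc03}, noting that the smoothness hypothesis there was only used to control $\dim W_i$, so finitely many singular points are harmless. But two steps of your plan, as written, do not close. First, for clause $(b)$ your intermediate suggestion of ``a further generic coordinate change acting on $X_2,\dots,X_{n-e}$, keeping $X_1$'' cannot work: if that change touches $X_2$ it alters $\Pi_2$ and hence replaces $W_2$ by a different set, while if it only touches $X_3,\dots,X_n$ it leaves both $W_2$ and the map $\Pi_1|_{W_2}$ unchanged, so it cannot create Noether position for $\Pi_1$. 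Noether position of $W_2$ must be obtained for the \emph{same} generic change of variables that defines $W_2$; this simultaneity is exactly what \cite{SaSc03} provides (your fallback incidence correspondence over all of $\G(n,e)$ is a legitimate substitute, but then it is doing all the work and the ``further change'' should be dropped).

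Second, and more seriously, your route to clause $(d)$ — a separate transversality/incidence-variety argument asserting that ``critical points of a generic projection of a curve are finite'' — is the step the paper deliberately avoids. Here the projection $\Pi_1$ is not generic relative to a fixed curve: $W_2$ itself moves with $\varphi$ and may be singular (its singular points are part of $\crit(\Pi_1,W_2)$ by the paper's convention), and bounding the fibres of the relevant incidence variety over points of $W_2$ is precisely the delicate computation that the paper only manages for $p=1$ via Thom--Boardman strata in Section~\ref{sec:gen}, and explicitly does not know how to do for general $p$ and $i$. For $i=2$ and arbitrary $p$ no such argument is needed: once $(b)$ is in place, $(d)$ is an immediate consequence, since an infinite $\crit(\Pi_1,W_2)$ would contain an irreducible component of the curve $W_2$ on which $X_1$ is constant, so that component would lie in a fibre of $\Pi_1|_{W_2}$, contradicting the integrality of $\C[X_1]\to\C[X_1,\dots,X_n]/I(W_2)$. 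You should derive $(d)$ this way rather than by a fresh transversality claim; with that replacement (and the repair of $(b)$ above), your proof becomes essentially the paper's.
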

\noindent {\sf CannyRoadmap}$(\f,Q,P)$.
\begin{enumerate}
\item[0.] If $n-p-e=1$, return ${\sf Solve}([\f,Q])$
\item Apply a random change of variables $\varphi \in \G(n,e)$
\item Let $\Delta =[ \text{$p$-minors~of~} \jac(\f,[X_{e+2},\dots,X_n])]$ and 
          $\Delta' =[ \text{$p$-minors~of~} \jac(\f,[X_{e+3},\dots,X_n])]$
\item Let $R= \text{{\sf Solve}}([\f,\Delta,Q])$ and $R'= \text{{\sf Solve}}([\f,\Delta',Q])$
\item Let $S=\text{{\sf CriticalPointsCurve$(R',X_{e+1})$}}$
\item Let $Q'={\sf Projection}({\sf Union}([S, R, P]),[X_1,\dots,X_{e+1}])$
\item Let $P'={\sf Union}({\sf Solve}([\f,\Delta',Q']), P)$
\item Let $R''={\sf CannyRoadmap}(\f, Q', P')$ \hfill ($e$ increases by 1)
\item Undo the change of variables $\varphi$ and return $(R',R'')$
\end{enumerate}
\noindent To understand the algorithm, it is easier to consider that
$Q$ is empty and thus $e=0$. Under $\AS$ and $\BS$, the sets $R$ and
$R'$ respectively describe $W_1$ and $W_2$, and $S$ describes
$\crit(\Pi_1,W_2)$. Then, $Q'$ encodes the set $\Pi_{1}(\mathscr{C})$
of Subsection~\ref{ssec:main}, and $P'$ is the new set of control
points $(\mathscr{C}'\cap W_1) \cup \mathscr{P}$. The algebraic set
$V(\f,Q')$ equals $\mathscr{C}'$, to which we recursively apply {\sf
  CannyRoadmap}. Since $R'$ describes $W_2$ and $W_2$ is a curve, there
is no need to process it further, and we append it to the output.

\begin{lemma}\label{complexiteCanny}
  {\sf CannyRoadmap} computes a roadmap of $(V(\f,Q), \mathscr{P})$ of
  degree $(\delta_Q+ \delta_{P}) (nD)^{O(n(n-p-e))}$ in Monte Carlo
  time $(\delta_Q+ \delta_{P})^{O(1)} (nD)^{O(n(n-p-e))}$.
\end{lemma}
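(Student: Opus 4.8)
The plan is to prove both the degree bound and the time bound by induction on $m = n-p-e$, the dimension of $V(\f,Q)$. The base case $m=1$ is step~0: {\sf Solve} returns a parametrization of $V(\f,Q)$ itself, a curve, in time $(\delta_Q+\delta_P)^{O(1)}(nD)^{O(n)}$ by the stated cost of {\sf Solve} applied to $p+e$ equations of degree at most $D$ together with the $0$-dimensional constraint $Q$; correctness is immediate since $V(\f,Q)$ is its own roadmap. For the inductive step, I would first argue correctness, then bound degrees, then bound time.

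For correctness: by Lemma~\ref{prop:3}, after the generic change of variables $\varphi$ in step~1 the pair $[\f,Q]$ satisfies $\AS$ and $\BS$ for $i=2$; since $\AS$ and $\BS$ are exactly the hypotheses of Theorem~\ref{theo:big} (applied fiberwise over $V(Q)$), we may invoke it with $i=2$, so $d' = \max(1,d-1) = d-1 = m-1$. As explained in the text following the algorithm, $R$ describes $W_1$, $R'$ describes $W_2$, $S$ describes $\crit(\Pi_1,W_2)$, $Q'$ encodes $\Pi_1(\mathscr{C})$ with $\mathscr{C}=W_1\cup\crit(\Pi_1,W_2)\cup\mathscr{P}$, the set $V(\f,Q')$ equals $\mathscr{C}'$, and $P'$ encodes $(\mathscr{C}'\cap W_1)\cup\mathscr{P}$ (note $\mathscr{C}'\cap W_i$ is finite by point~2 of Theorem~\ref{theo:big}, and is contained in $V(\f,\Delta',Q')$, so {\sf Solve} applied to $[\f,\Delta',Q']$ recovers it; it is $0$-dimensional so {\sf Solve} runs on a $0$-dimensional input). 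By point~3 of Theorem~\ref{theo:big}, the recursive input $[\f,Q']$ again satisfies $\AS$, so the recursive call is legitimate; by induction it returns a roadmap $R''$ of $(\mathscr{C}',(\mathscr{C}'\cap W_2)\cup\mathscr{P})$. Now $R'$ is itself a roadmap of the curve $W_2$, and in particular of $(W_2,(\mathscr{C}'\cap W_2)\cup\mathscr{P})$ since these control points lie on $W_2$. Applying Lemma~\ref{lemma:glue} with $\mathscr{R}_1=R'$ and $\mathscr{R}_2=R''$ shows $R'\cup R''$ is a roadmap of $(V(\f,Q),\mathscr{P})$; undoing $\varphi$ in step~9 transports this back to the original coordinates (roadmaps transform correctly under linear changes of variables, since $\AS$--$\BS$ were the only coordinate-dependent ingredients and all conditions ${\rm RM}'_1$--${\rm RM}'_4$ are preserved by applying $\varphi^{-1}$).

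For the degree and time bounds, write $\delta = \delta_Q+\delta_P$. The key bookkeeping is that each recursive call increments $e$ by~$1$, so the recursion depth is at most $m$. The critical-point computations in steps~2--5 are governed by the cost of {\sf CriticalPoints}: $R$ and $R'$ come from {\sf Solve} on systems of $p+e+O(1)$ equations of degree at most $D$ plus $Q$, hence have degree $\delta_Q(nD)^{O(n)}$ and are computed in time $\delta_Q^{O(1)}(nD)^{O(n)}$, and ${\sf CriticalPointsCurve}(R',X_{e+1})$ costs $(\deg R')^{O(1)}$. Forming $Q'$ via {\sf Union} and {\sf Projection} of $S$, $R$, $P$ multiplies degrees, giving $\deg(Q') = \delta\,(nD)^{O(n)}$ and cost $(\delta(nD)^{O(n)})^{O(1)}=\delta^{O(1)}(nD)^{O(n)}$; similarly $\deg(P') = \delta\,(nD)^{O(n)}$. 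Thus if $\mathrm{Deg}(m,\delta)$ denotes the output degree, the recursion yields $\mathrm{Deg}(m,\delta) \le \mathrm{Deg}(m-1,\ \delta(nD)^{O(n)}) + \delta(nD)^{O(n)}$; since $\delta$ enters $\mathrm{Deg}$ only as a multiplicative prefactor (the algorithm is degree-linear in its $0$-dimensional inputs), unrolling the $m$ levels gives a product of $m$ factors $(nD)^{O(n)}$, i.e.\ $\mathrm{Deg}(m,\delta) = \delta\,(nD)^{O(nm)}$, matching the claim with $m=n-p-e$. The running-time recursion is identical in shape, $T(m,\delta) \le T(m-1,\delta(nD)^{O(n)}) + \delta^{O(1)}(nD)^{O(n)}$, and unrolls to $\delta^{O(1)}(nD)^{O(nm)}$. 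The main obstacle is not in the asymptotics but in verifying that the parametrizations at each recursive level genuinely describe $W_1$, $W_2$, $\crit(\Pi_1,W_2)$ and $\mathscr{C}'$ as claimed --- in particular that, under $\AS$, the vanishing of $\f$ together with the $p$-minors of $\jac(\f,[X_{e+2},\dots,X_n])$ cuts out exactly $W_2=\crit(\Pi_2,V(\f,Q))$ (this is the determinantal description of polar varieties recalled in the ``Polar varieties'' paragraph, applied to the radical equidimensional system $[\f,Q]$), and that the finiteness claims (needed so that {\sf Solve} and {\sf CriticalPointsCurve} are called on inputs of the advertised dimension) follow from $\BS(c)$, $\BS(d)$ and point~2 of Theorem~\ref{theo:big}; once these identifications are in place the complexity estimate is a routine unrolling.
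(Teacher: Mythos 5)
Your proposal is correct and follows essentially the same route as the paper: ensure $\AS$ and $\BS$ (for $i=2$) by a generic change of variables via Lemma~\ref{prop:3}, apply Theorem~\ref{theo:big} fiberwise over $V(Q)$ to identify $R$, $R'$, $S$, $Q'$, $P'$ with $W_1$, $W_2$, $\crit(\Pi_1,W_2)$, $\Pi_{i-1}(\mathscr{C})$ and the new control points, glue $R'$ with the recursively computed roadmap of $\mathscr{C}'$ via Lemma~\ref{lemma:glue}, and unroll the per-level degree growth $(nD)^{O(n)}$ and cost $(\delta_Q+\delta_P)^{O(1)}(nD)^{O(n)}$ over the recursion depth $n-p-e$. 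The paper packages the single-level analysis as a separate lemma with explicit B\'ezout bounds and a dynamic-evaluation argument for solving systems of the form $[\f,\Delta,Q]$, but your induction on $m=n-p-e$ is the same argument at the advertised level of precision.
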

\noindent Once $\AS$ and $\BS$ hold, correctness follows from
Theorem~\ref{theo:big}; the domain where we pick $\varphi$ is
discussed in appendix p.~\pageref{sec:probasp}. To estimate runtime,
one first notes that the cost of steps $0-6$ is $(\delta_{Q} +
\delta_{P})^{O(1)}(nD)^{O(n)}$, and that we have $\delta_{Q'} +
\delta_{P'} \le (\delta_{Q} + \delta_{P}) (nD)^{O(n)}$. Since the
depth of the recursion is $n-p-e$, this proves our claims. Remark that
for $e=0$ and $p=1$, we recover Canny's result.

%%%%%%%%%%%%%%%%%%%%%%%%%%%%%%%%%%%%%%%%%%%%%%%%%%%%%%%%%%%%

\subsection{Main algorithm}\label{ssec:us}

We finally give our roadmap algorithm for a hypersurface $V(f)$, where
$f$ satisfies assumption $\AS$. Here, we can ensure assumption $\BS$
in generic coordinates for many more choices of $i$. Using our
modified version of Canny's algorithm, we obtain a baby steps/giant
steps strategy by choosing $i \simeq \sqrt{n}$. As before, we also
take a 0-dimensional parametrization $Q(X_1,\dots,X_e)$ as input, and
the control points $\mathscr{P}$ by means of a 0-dimensional
parametrization $P$.

\medskip\noindent {\sf Roadmap}$(f,Q,P)$.
\begin{enumerate}
\item[0.] If $n-p-e \le \sqrt{n}$, return {\sf CannyRoadmap}$(f,Q,P)$
\item Let $i=\lfloor \sqrt{n} \rfloor$
\item Apply a random change of variables $\varphi \in \G(n,e)$.
\item Let $\Delta =[ \partial f/ \partial X_i \ | \ i \in [{e+2},\dots,n]]$, 
          $\Delta' =[ \partial f/ \partial X_i \ | \ i \in [{e+i+1},\dots,n]]$ and
          $\f=(f,\Delta')$
\item Let $R= \text{{\sf Solve}}([f,\Delta,Q])$.% and $R'= \text{{\sf Solve}}(f,\Delta',Q)$.
\item Let $S=\text{{\sf CriticalPoints$([\f,Q],X_{e+1})$}}$
\item Let $Q'={\sf Projection}({\sf Union}([S, R, P]),[X_1,\dots,X_{e+i-1}])$
\item Let $P'={\sf Union}({\sf Solve}([\f,Q']), P)$
\item Let $R''={\sf CannyRoadmap}(\f, Q, P')$
\item Let $R'''={\sf Roadmap}(f, Q', P')$ \hfill ($e$ increases by $\lfloor \sqrt{n} \rfloor$)
\item Undo the change of variables $\varphi$ and return $(R'',R''')$
\end{enumerate}
\begin{lemma}\label{prop:4} {\em (using the notation of the algorithm)}
  Suppose that $[f,Q]$ satisfies $\AS$. For $i \le n-p-e-1$, after a
  generic change of variables in $\G(n,e)$, $[f,Q]$ satisfies $\AS$
  and $\BS$ and $[\f,Q]$ satisfies $\AS$.
\end{lemma}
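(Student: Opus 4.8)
\emph{Sketch of the intended proof.} Since a finite intersection of Zariski-dense open subsets of $\G(n,e)$ is again Zariski-dense open, it suffices to establish each of the finitely many required properties separately for a generic $\varphi\in\G(n,e)$ (the precise domain from which $\varphi$ is drawn, and the corresponding probability bound, are deferred to the appendix). Two of the three conclusions come for free: assumption $\AS$ for $[f,Q]$ is preserved because conditions $(a)$--$(d)$ of $\AS$ are coordinate-independent; and condition $(d)$ of $\AS$ for $[\f,Q]$ holds after any change of variables, because $V(\f,Q)$---which equals the polar variety $W_i$ of $V(f,Q)$, since $\f=(f,\Delta')$ and $\Delta'$ comprises the partial derivatives $\partial f/\partial X_{e+i+1},\dots,\partial f/\partial X_n$---is contained in $V(f,Q)$, and $V(f,Q)\cap\R^n$ is bounded by hypothesis. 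The real content is therefore $\BS$ for $[f,Q]$ together with conditions $(a)$, $(b)$, $(c)$ of $\AS$ for $[\f,Q]$: that $W_i$ is cut out by a radical ideal, is equidimensional of dimension $i-1$, and has a finite singular locus. I would first reduce to $Q=\emptyset$, $e=0$: specializing $X_1,\dots,X_e$ at each of the finitely many points described by $Q$ turns an element of $\G(n,e)$ into an ordinary linear change of variables and turns every assertion into one about the finitely many fibres $\f_\x$, $\x\in V(Q)$; intersecting the finitely many dense open sets so obtained keeps the property generic. We are thus reduced to a squarefree $g\in\C[Y_1,\dots,Y_m]$, $m=n-e$, with $V(g)$ of dimension $m-1$, $\sing(V(g))$ finite, $V(g)\cap\R^m$ bounded, and $2\le i\le m-2$.

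The core is the behaviour of polar varieties under a generic linear change, for which I would invoke (or reprove, following~\cite{BaGiHeMb97,BaGiHeMb01} and the appendix) two facts: for a generic change, $(\mathrm{I})$ the polar set $w_i\subseteq\reg(V(g))$ is empty or smooth of dimension $i-1$, the Jacobian matrix of $(g,\partial g/\partial Y_{i+1},\dots,\partial g/\partial Y_m)$ with respect to $Y_1,\dots,Y_m$ having rank $m-i+1$ at each of its points; and $(\mathrm{II})$ $w_i$ has dimension at most $i-1$---here the hypothesis that $V(g)\cap\R^m$ is bounded is essential, since without it $w_i$ can be too large (think of an affine subspace). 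Granting these, $\AS$ for $[\f,Q]$ follows cleanly. By Krull's principal ideal theorem, every component of $W_i=V(g,\partial g/\partial Y_{i+1},\dots,\partial g/\partial Y_m)$ has dimension at least $i-1$, being obtained from $V(g)$, pure of dimension $m-1$, by imposing $m-i$ further equations; since $\sing(V(g))$ is finite and $i\ge2$, no component of $W_i$ lies inside $\sing(V(g))$, so by $(\mathrm{II})$ no component has dimension exceeding $i-1$. Hence $W_i$ is equidimensional of dimension $i-1$, that is, a complete intersection of the expected codimension, hence Cohen--Macaulay; a generic point of each component lies in $\reg(V(g))$, where by $(\mathrm{I})$ the Jacobian has full rank, so $W_i$ is generically reduced, and a Cohen--Macaulay generically reduced scheme is reduced---so the ideal is radical. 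Finally, $(\mathrm{I})$ shows $W_i$ is smooth away from $\sing(V(g))$, whence $\sing(W_i)\subseteq\sing(V(g))$ is finite.

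The four items of $\BS$ for $[f,Q]$ then follow. Item $(a)$, that $V(g)$ is in Noether position for the projection onto the first $m-1$ coordinates, is classical generic Noether normalization; item $(b)$, that $W_i$ is in Noether position for $\Pi_{i-1}$, is again generic Noether normalization, applied to $W_i$, which we have just shown to be equidimensional of dimension $i-1$---the delicate point being that $W_i$ itself moves with $\varphi$, so one argues on a suitable incidence variety over the parameter space and checks that the locus of $\varphi$ for which the fibre fails to be in Noether position for $\Pi_{i-1}$ is a proper closed subset. Item $(c)$, that $W_1$ is finite, is the case ``$i=1$'' of the same generic polar variety statement (expected dimension $0$). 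Item $(d)$, that $\crit(\Pi_1,W_i)$ is finite, follows from the nested structure of polar varieties in generic coordinates: since $w_1\subseteq w_i$ one has $W_1\subseteq W_i$, and on the smooth locus the critical locus of $\Pi_1$ restricted to $W_i$ coincides with $W_1$, so $\crit(\Pi_1,W_i)=W_1$, which is finite by item $(c)$.

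The main obstacle I anticipate is the parametric character of the whole argument: $W_i$, its singular locus, and its Noether position all move as $\varphi$ varies, so genericity cannot be read off a single variety's good locus. The clean route is to set up one incidence variety over $\G(n,e)$ carrying the critical points, apply a parametric transversality (Sard--Bertini) argument to obtain smoothness of the generic polar variety of the expected dimension, and track the finitely many singular points of $V$ separately; phrasing the conclusion ideal-theoretically---a radical ideal of the expected codimension---is what lets one read off equidimensionality, reducedness and the inclusion $\sing(W_i)\subseteq\sing(V)$ at once. I expect the bulk of the work, carried out in the appendix, to be precisely this parametric transversality statement, together with the control of how the polar variety can degenerate near the finitely many singularities of $V$ and at infinity; the passage through Noether positions and the nested structure of polar varieties is then routine.
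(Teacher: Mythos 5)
Your treatment of $\AS$ for $[\f,Q]$ (equidimensionality, radicality via Cohen--Macaulay plus generic reducedness, $\sing(W_i)\subseteq\sing(V)$) and of items $(a)$--$(c)$ of $\BS$ matches the paper's route through the results of \cite{BaGiHeMb97,BaGiHeMb01} and \cite{SaSc03}, and the reduction to $Q=\emptyset$ is also how the paper proceeds. But your argument for item $(d)$ of $\BS$ contains a genuine gap, and it is exactly the point the paper identifies as the hard one. You claim that on the smooth locus $\crit(\Pi_1,W_i)$ coincides with $W_1$, hence is finite by item $(c)$. Only one inclusion holds: if $\x\in W_1\cap\reg(V)\cap\reg(W_i)$ then $T_\x W_i\subseteq T_\x V$ forces $\x\in\crit(\Pi_1,W_i)$, but the converse fails. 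A point of $W_i$ is critical for $\Pi_1|_{W_i}$ as soon as the tangent space of the \emph{polar variety} projects to $0$ under $\Pi_1$, which can happen at points where $T_\x V$ itself is not degenerate for $\Pi_1$; already for $i=2$ these extra ``pseudo-critical'' points of the polar curve are precisely why Canny (and this paper) must define $\mathscr{C}$ as the union $W_1\cup\crit(\Pi_1,W_i)$ rather than $W_1$ alone. If your identity were true, the second half of Section~\ref{sec:gen} would be unnecessary; instead the paper states explicitly that finiteness of $\crit(\Pi_1,W_i)$ is not known to hold in general, and for $p=1$ it is proved by a substantial argument: Mather's transversality theorem for the Thom--Boardman strata $S_j$, $S_{j,\ell}$ (Lemma~\ref{lemma:a123}), a dimension bound on the incidence variety $X\subset\C^i\times\C^{ni}\times\C^n$ of triples $(\g,\e,\x)$ (Lemma~\ref{prop:dimxe}), the conclusion that the bad locus $Y$ of infinite fibers of $\alpha$ is a strict closed subset (Lemma~\ref{lemma:13}), and finally a $\GL_i$-homogeneity argument to pass from genericity in $(\g,\e)$ to the fixed linear form $\g_0=(1,0,\dots,0)$, i.e.\ to $\Pi_1$ itself. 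None of this is replaced by your nesting argument, so as written the proof of $(d)$ is missing.

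Two smaller remarks. Your parenthetical that boundedness of $V\cap\R^n$ is ``essential'' for the dimension bound on $w_i$ is off: the generic-coordinates dimension statement of \cite{BaGiHeMb97,BaGiHeMb01} needs no compactness (for an affine hyperplane in generic coordinates $w_i$ is simply empty); compactness is used elsewhere, for the connectivity results. Also note that the subtlety you would have to confront in $(d)$ is that one cannot take the linear form generic: it must be the first coordinate of the (already moved) system, which is why the paper needs the group-action step rather than a plain Bertini/Sard statement over the full parameter space.
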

\noindent 
As before, we explain the computations with $Q$ empty, so $e=0$. Under
$\AS$ and $\BS$, $R$ describes $W_1$ and $S$ describes
$\crit(\Pi_1,W_i)$; we do not actually compute a parametrization for
$W_i$, since the equations $\f$ are well adapted for this
computation. Then, $Q'$ encodes the set $\Pi_{i-1}(\mathscr{C})$ of
Subsection~\ref{ssec:main}, and $P'$ is the new set of control points
$(\mathscr{C}'\cap W_i) \cup \mathscr{P}$. The equations $(f,Q')$
describe $\mathscr{C}'$, to which we recursively apply {\sf
  Roadmap}. The equations $\F$ describe $W_i$, to which we apply the
algorithm {\sf CannyRoadmap} of the last section (this is valid, since
$\F$ satisfies $\AS$).
\begin{lemma}\label{complexite}
  {\sf Roadmap} computes a roadmap of $(V(f,Q), \mathscr{P})$ of
  degree $(\delta_Q+ \delta_{P}) (nD)^{O(n^{1.5})}$ in Monte Carlo
  time $(\delta_Q+ \delta_{P})^{O(1)} (nD)^{O(n^{1.5})}$.
\end{lemma}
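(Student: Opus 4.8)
The plan is to analyze {\sf Roadmap} by induction on $n-p-e$, tracking simultaneously the output degree and the running time; the two claims go hand in hand since the cost of each subroutine call is polynomial in the degrees of the objects it manipulates. First I would establish correctness: by Lemma~\ref{prop:4}, the generic change of variables in step~2 ensures that $[f,Q]$ satisfies $\AS$ and $\BS$ for the chosen $i=\lfloor\sqrt n\rfloor$ and that $[\f,Q]$ (with $\f=(f,\Delta')$) satisfies $\AS$. Thus Theorem~\ref{theo:big} applies: $\mathscr{C}'\cup W_i$ is a $d'$-roadmap of $(V(f,Q),\mathscr{P})$ with $d'=\max(i-1,d-i+1)$, the intersection $\mathscr{C}'\cap W_i$ is finite, and the fiber systems satisfy $\AS$. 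Identifying $R$ with $W_1$, $S$ with $\crit(\Pi_1,W_i)$, $V(f,Q')=\mathscr{C}'$, $V(\f,Q)=W_i$ and $P'$ with $(\mathscr{C}'\cap W_i)\cup\mathscr{P}$ as in the text, Lemma~\ref{lemma:glue} reduces the problem to computing a roadmap $R''$ of $(W_i,P')$ and a roadmap $R'''$ of $(\mathscr{C}',P')$; the first is handled by {\sf CannyRoadmap} applied to $\F$ (legitimate since $\F$ satisfies $\AS$ and $W_i$ has dimension $\le i-1<\sqrt n$, so in fact one recursive layer of {\sf CannyRoadmap} suffices on it), and the second by the recursive call to {\sf Roadmap}, whose input dimension $n-p-e$ has dropped by $i=\lfloor\sqrt n\rfloor$ thanks to point~3 of Theorem~\ref{theo:big}.

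Next I would do the degree and cost bookkeeping for a single (non-base) level. The sets $W_1$ and $\crit(\Pi_1,W_i)$ are cut out by $f$ together with partial derivatives, i.e.\ by at most $n$ equations of degree $\le D$; classical B\'ezout-type bounds (degree of $V(f,Q')$, $\delta_R$, $\delta_S$, etc.) give all intermediate objects degree at most $(\delta_Q+\delta_P)(nD)^{O(n)}$, and the {\sf Solve}, {\sf Union}, {\sf Projection}, {\sf CriticalPoints} and {\sf CriticalPointsCurve} calls in steps~4--8 then run in time $(\delta_Q+\delta_P)^{O(1)}(nD)^{O(n)}$ by the complexity statements in Subsection~\ref{ssec:preli}. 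In particular the parameters passed to the two recursive calls satisfy $\delta_{Q'}+\delta_{P'}\le(\delta_Q+\delta_P)(nD)^{O(n)}$. By Lemma~\ref{complexiteCanny}, the call to {\sf CannyRoadmap} on $\F$ costs $(\delta_{Q}+\delta_{P'})^{O(1)}(nD)^{O(n(n-p-e))}$ and outputs a roadmap of degree $(\delta_{Q}+\delta_{P'})(nD)^{O(n(n-p-e))}$ — but crucially $n-p-e$ has here been \emph{replaced} by $\dim(W_i)=i-1\le\sqrt n$ (the number of equations defining $W_i$ is $p+(n-i)$, so its "relative codimension" from $\Pi_{i-1}$ is $i-1$), so this call only contributes $(nD)^{O(n\cdot\sqrt n)}=(nD)^{O(n^{1.5})}$.

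Finally I would unwind the recursion of {\sf Roadmap} itself. The depth is $\lceil (n-p-e)/\lfloor\sqrt n\rfloor\rceil=O(\sqrt n)$ before hitting the base case $n-p-e\le\sqrt n$ (step~0), where {\sf CannyRoadmap} is invoked with relative dimension $\le\sqrt n$ and therefore costs $(nD)^{O(n\sqrt n)}=(nD)^{O(n^{1.5})}$ by Lemma~\ref{complexiteCanny}. At each of the $O(\sqrt n)$ levels the degree of the carried parametrizations is multiplied by a factor $(nD)^{O(n)}$, so the final degree is $(\delta_Q+\delta_P)\cdot\big((nD)^{O(n)}\big)^{O(\sqrt n)}=(\delta_Q+\delta_P)(nD)^{O(n^{1.5})}$, and the accumulated running time is a sum of $O(\sqrt n)$ terms each of the form $(\delta_Q+\delta_P)^{O(1)}(nD)^{O(n^{1.5})}$, hence also $(\delta_Q+\delta_P)^{O(1)}(nD)^{O(n^{1.5})}$. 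Specializing to the actual input $f\in\Q[X_1,\dots,X_n]$ with $Q$ empty ($e=0$, $p=1$, $\delta_Q=1$) yields the roadmap degree $\delta_P(nD)^{O(n^{1.5})}$ and Monte Carlo time $\delta_P^{O(1)}(nD)^{O(n^{1.5})}$ claimed in the main Theorem. The main obstacle in making this rigorous is not the arithmetic but the bookkeeping asymmetry between the two recursive branches: one must verify carefully that the {\sf CannyRoadmap} branch genuinely sees relative dimension $O(\sqrt n)$ rather than $n-p-e$ — this is exactly why $i\simeq\sqrt n$ is the right cut-off — and that the hypotheses $\AS$ (resp.\ $\AS$ and $\BS$) required to even invoke {\sf CannyRoadmap} and {\sf Roadmap} recursively are preserved, which is precisely the content of point~3 of Theorem~\ref{theo:big} together with Lemmas~\ref{prop:3} and~\ref{prop:4}.
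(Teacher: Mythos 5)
Your proposal follows essentially the same route as the paper: correctness via Lemma~\ref{prop:4}, Theorem~\ref{theo:big} and Lemma~\ref{lemma:glue}; a per-level bound $\delta_{Q'}+\delta_{P'}\le(\delta_Q+\delta_P)(nD)^{O(n)}$ with all of steps 0--7 costing $(\delta_Q+\delta_P)^{O(1)}(nD)^{O(n)}$; the {\sf CannyRoadmap} branch charged $(nD)^{O(n^{1.5})}$ because the system $\F$ has relative dimension $i-1\simeq\sqrt n$; and an outer recursion of depth $O(\sqrt n)$, exactly as in the paper's analysis. One parenthetical slip: {\sf CannyRoadmap} applied to $\F$ does not terminate after ``one recursive layer'' (it recurses about $i-1$ times, until the relative dimension reaches $1$), but this is harmless since your cost accounting invokes Lemma~\ref{complexiteCanny}, which already covers its full recursion.
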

\noindent As for {\sf CannyRoadmap}, correctness follows from
Theorem~\ref{theo:big}. Initially, we take $Q$ and $P$ of degrees
$\delta_{Q}$ and $\delta_{P}$. After $r$ recursive calls, we have $e
\simeq r\sqrt{n}$, the degrees of the ``local'' $Q$ and $P$ have order
$(\delta_{Q}+\delta_{P}) (nD)^{O(nr)}$, and the cost of the
computation is $(\delta_{Q}+\delta_{P})^{O(1)} (nD)^{O(nr)}$. We enter
the function {\sf CannyRoadmap} with $n-p-e\simeq \sqrt{n}$, so the
cost of this call is $(\delta_{Q}+\delta_{P})^{O(1)} (nD)^{O(nr)}
(nD)^{O(n^{1.5})}$, and the degree its output is
$(\delta_{Q}+\delta_{P}) (nD)^{O(nr)} (nD)^{O(n^{1.5})}$. Since the
depth of the recursion is $r=O(\sqrt{n})$, this gives the result
claimed in the introduction.

%%%%%%%%%%%%%%%%%%%%%%%%%%%%%%%%%%%%%%%%%%%%%%%%%%%%%%%%%%%%
%%%%%%%%%%%%%%%%%%%%%%%%%%%%%%%%%%%%%%%%%%%%%%%%%%%%%%%%%%%%
%%%%%%%%%%%%%%%%%%%%%%%%%%%%%%%%%%%%%%%%%%%%%%%%%%%%%%%%%%%%

\section{Proof of the connectivity result}\label{sec:proofbig}

We sketch the proof of the first point of Theorem~\ref{theo:big}. We
focus on the connectivity property ${\rm RM}_1'$, which is the
hardest; the missing arguments are in appendix.

We reuse here the notation of Theorem~\ref{theo:big} and we let
$\mathscr{R}=\mathscr{C}'\cup W_i$. For $x$ in $\R$, we say that
property ${\bf P}(x)$ holds if for any connected component $C$ of
$V_{\le x}$, $C \cap \mathscr{R}$ is non empty and connected. We will
prove that for all $x$ in $\R$, ${\bf P}(x)$ holds; taking $x \ge
\max_{\y\in V\cap \R^n} \Pi_1(\y)$ gives our result. To do so, we let
$v_1 < \cdots < v_\ell$ be the projections $\Pi_1(v)$, for $v$ in
$\mathscr{C} \cap \R^n$ (recall that $\mathscr{C}$ is finite). The
proof uses two intermediate results:
\begin{itemize}
\item if ${\bf P}(v_j)$ holds, then for $x$ in $(v_j,v_{j+1})$, then
  ${\bf P}(x)$ holds;
\item for $x$ in $\R$, if ${\bf P}(x')$ holds for all $x' < x$, then
  ${\bf P}(x)$ holds.
\end{itemize}
Since for $x < \min_{\y\in V\cap \R^n} \Pi_1(\y)$, property ${\bf
  P}(x)$ vacuously holds, the combination of these two results gives
the claim above by an immediate induction.  

\medskip\noindent
As preliminaries, we consider an algebraic set $Z \subset \C^n$; for
$x\in \R$, we are interested in the properties of the connected
components of $Z_{<x}$ in the neighborhood of the hyperplane
$\Pi_1^{-1}(x)$. The following result actually holds for $Z$ in
$\CC^n$, where $\CC$ is the algebraic closure of a real closed field
$\RR$.
\begin{lemma}\label{prop:wibar}
  Let $x$ be in $\R$ and let $\gamma:A \to Z_{\le x}-Z_x\cap
  \crit(\Pi_1,Z)$ be a continuous semi-algebraic map, where $A \subset
  \R^k$ is a non-empty connected semi-algebraic set. Then there exists
  a unique connected component $B$ of $Z_{<x}$ such that $\gamma(A)
  \subset \overline{B}$.
\end{lemma}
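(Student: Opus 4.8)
The plan is to first establish existence, then uniqueness. For \emph{existence}, the obstacle to connecting $\gamma(A)$ into a single component of $Z_{<x}$ is that points of $\gamma(A)$ lying in the fiber $Z_x$ might fail to be limits of nearby points with strictly smaller first coordinate, or might be limits of points in different components. The key local fact is the following: if $\mathbf{z}\in Z_x$ and $\mathbf{z}\notin\crit(\Pi_1,Z)$, then near $\mathbf{z}$ the map $\Pi_1$ is a submersion on $Z$ (in the appropriate semi-algebraic/stratified sense), so there is a semi-algebraically connected neighborhood $U$ of $\mathbf{z}$ in $Z$ such that $U_{<x}=U\cap Z_{<x}$ is non-empty and semi-algebraically connected, and $\mathbf{z}\in\overline{U_{<x}}$. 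This is a local curve-selection / implicit-function-type statement; I would invoke it (it is standard over real closed fields, via the semi-algebraic implicit function theorem applied on the smooth locus, together with the fact that $\crit(\Pi_1,Z)$ already contains $\operatorname{sing}(Z)$ by the convention in the paper). Granting this, I cover $\gamma(A)$ by such neighborhoods $U$ (for the finitely many points where it is needed, using that $\gamma(A)$ is a semi-algebraic set of dimension $\le k$, hence meets $Z_x$ in a semi-algebraic set, and one reduces to finitely many charts by a finiteness/triviality argument), and I define a relation on $\gamma(A)$: two points are related if they lie in the closure of a common connected component of $Z_{<x}$. Using the local neighborhoods, this relation is locally constant on $\gamma(A)$ minus the bad fiber, and the local fact shows it extends across the bad fiber; since $A$ is semi-algebraically connected and $\gamma$ is continuous, $\gamma(A)$ meets the closure of exactly one component $B$.

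For \emph{uniqueness}, suppose $\gamma(A)\subset\overline{B}\cap\overline{B'}$ with $B\ne B'$ components of $Z_{<x}$. Pick $\mathbf{p}=\gamma(\mathbf{a})$; it lies in $\overline{B}\cap\overline{B'}$, and since $B,B'\subset Z_{<x}$ while $\mathbf{p}$ need not be in $Z_{<x}$, necessarily $\mathbf{p}\in Z_x$. But $\mathbf{p}\notin Z_x\cap\crit(\Pi_1,Z)$ by hypothesis on the image of $\gamma$, so the local model above applies at $\mathbf{p}$: there is a neighborhood $U$ of $\mathbf{p}$ in $Z$ with $U_{<x}$ semi-algebraically connected. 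Then $U_{<x}$ meets both $B$ and $B'$ (because $\mathbf{p}$ is in the closure of each, and $U$ is a neighborhood), forcing $B=B'$, a contradiction. Thus the component is unique.

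The hard part will be the existence step, specifically making precise the "finitely many charts" reduction and the locally-constant argument: one needs that over a real closed field the function that assigns to a point near $Z_x$ the connected component of $Z_{<x}$ it approaches is well-defined and locally constant away from $\crit(\Pi_1,Z)$. This is where the semi-algebraic local triviality of $\Pi_1$ away from its critical locus (Hardt's theorem, or a suitable stratified version) does the work: it gives that a punctured neighborhood $\{x-\epsilon<\Pi_1<x\}\cap U$ is semi-algebraically homeomorphic to a product, hence its trace in each small chart is connected, and these traces glue along $\gamma(A)$. I would state the local model as an auxiliary claim, prove it quickly from the semi-algebraic implicit function theorem on $\operatorname{reg}(Z)$ (noting $\operatorname{sing}(Z)\subset\crit(\Pi_1,Z)$), and then the global statement follows by the connectedness of $A$. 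The remaining points of Theorem~\ref{theo:big} and the deduction of ${\rm RM}_1'$ from the two bullet-point intermediate results are deferred to the appendix as indicated.
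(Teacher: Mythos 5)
Your plan is correct and follows essentially the same route as the paper's proof: a local statement at non-critical points of the fiber $Z_x$ obtained from the semi-algebraic implicit function theorem (the paper's Lemma~\ref{7.3}), the resulting uniqueness of the connected component of $Z_{<x}$ whose closure contains such a point (Lemma~\ref{lemma:10}), and then a locally-constant component-assignment glued by connectedness of the source (the paper runs this on $A$ via the decomposition into $\gamma^{-1}(Z_x)$ and its complementary open intervals, while you run it directly on $\gamma(A)$ --- an immaterial difference). Two small polish points: the ``finitely many charts'' reduction is unnecessary (and not justified without compactness), and you should state the local model in the stronger form $U\cap Z_x\subset\overline{U\cap Z_{<x}}$ (which your implicit-function-theorem argument does yield, and which is what makes the assignment locally constant at fiber points), but neither affects the soundness of the approach.
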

\noindent We continue with a statement in the vein of Morse's
Lemma~A~\cite[Th.~7.5]{BaPoRo06}. The proof uses Ehresmann's fibration
theorem (which relies on the integration of vector fields), so we need
here our base fields to be~$\R$ and $\C$.
\begin{lemma}\label{prop:ret}
  Suppose that $\dim(Z)>0$ and that $Z\cap \R^n$ is compact. Let $v <
  w$ be in $\R$ such that $Z_{(v,w]} \cap \crit(\Pi_1,Z)=\emptyset$,
  and let $C$ be a connected component of $Z_{\le w}$. Then, for all
  $x$ in $[v,w]$, $C_{\le x}$ is a connected component of $Z_{\le x}$.
\end{lemma}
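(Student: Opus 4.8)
The plan is to produce a deformation retraction of $Z_{\le w}$ onto $Z_{\le v}$ that respects connected components, and then read off the statement. First I would observe that since $Z_{(v,w]} \cap \crit(\Pi_1, Z) = \emptyset$, the restriction $\Pi_1 : Z_{(v,w]} \to (v,w]$ has no critical points on $\reg(Z)$; combined with the fact that $\sing(Z)$, being of smaller dimension, either misses the strip $Z_{(v,w]}$ entirely or can be incorporated as a lower stratum, one gets that $\Pi_1$ restricted to the relevant part of $Z$ is a proper submersion (properness coming from compactness of $Z \cap \R^n$, which forces $Z_{[v,w]}$ to be compact). Ehresmann's fibration theorem then yields a diffeomorphism $Z_{[v,w]} \cong Z_v \times [v,w]$ commuting with $\Pi_1$; integrating the pullback of the unit vector field $\partial/\partial t$ on $[v,w]$ gives a flow $\Phi$ on $Z_{[v,w]}$, and gluing it with the identity on $Z_{\le v}$ produces a deformation retraction $H : Z_{\le w} \times [0,1] \to Z_{\le w}$ with $H_0 = \mathrm{id}$, $H_1(Z_{\le w}) = Z_{\le v}$, and—crucially—$\Pi_1(H_s(\z)) \le \Pi_1(\z)$ for all $s$, with $\Pi_1(H_1(\z)) \le v$. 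The only subtlety in invoking Ehresmann is the presence of singular points and of the boundary behaviour near the hyperplane $\Pi_1^{-1}(v)$; I would handle the former by noting that $\sing(Z) \cap Z_{(v,w]} = \emptyset$ after shrinking (singular points project to finitely many values of $\Pi_1$, hence one may assume none lie in $(v,w]$, and if one does lie at height $v$ the retraction is still fine since we only flow down to $v$), and the latter is standard (one flows on the compact manifold-with-boundary $Z_{[v,w]}$).

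With the retraction $H$ in hand, fix $x \in [v,w]$ and let $C$ be a connected component of $Z_{\le w}$. I claim $C_{\le x}$ is a connected component of $Z_{\le x}$. It is certainly open and closed in $Z_{\le x}$ relative to $C$, so it suffices to show (i) $C_{\le x}$ is nonempty, (ii) $C_{\le x}$ is connected, and (iii) $C_{\le x}$ is a union of connected components of $Z_{\le x}$, i.e. no connected component of $Z_{\le x}$ is split between $C$ and another component of $Z_{\le w}$. Point (iii) is immediate: $Z_{\le x} \subset Z_{\le w}$, so any connected subset of $Z_{\le x}$ lies in a single connected component of $Z_{\le w}$, whence each connected component of $Z_{\le x}$ is contained in exactly one $C$, and $C_{\le x}$ is the disjoint union of those sitting inside $C$. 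For (i): applying the retraction at $x' = \min(x, w)$ — more precisely, using the flow restricted to $Z_{[x,w]}$, which retracts $C$ onto $C_{\le x}$ since the flow pushes $\Pi_1$ down to $x$ and stays inside $C$ by continuity — shows $C_{\le x}$ is a deformation retract of $C$, hence nonempty and connected in one stroke, giving both (i) and (ii).

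The main obstacle I anticipate is the careful setup of Ehresmann's theorem in the singular/stratified setting and near the endpoint $v$: one must make sure that the flow lines stay inside $Z$ (they do, being integral curves of a vector field tangent to $Z$) and that the retraction is well-defined and continuous across the join of the flowed region $Z_{[v,w]}$ with the static region $Z_{\le v}$ — which works because the flow is the identity on the slice $Z_v$. A clean way to package everything is to prove directly that $C_{\le x}$ is a deformation retract of $C$ for every $x \in [v,w]$ using the vector-field flow, and observe that "$A$ is a deformation retract of a connected component $C$ of an ambient space, and $A \subset$ that ambient space's sublevel set" forces $A$ to be a connected component of the sublevel set once one knows, as in (iii), that components of the sublevel set don't straddle two components of the full space. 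I would also note that Lemma~\ref{prop:wibar} can be used as an alternative, softer route to part (iii) if one prefers to avoid explicit flows there, but the flow argument is self-contained.
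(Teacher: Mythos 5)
Your overall reduction (show $C_{\le x}$ is nonempty and connected, plus the observation that connected components of $Z_{\le x}$ cannot straddle two components of $Z_{\le w}$) is correct, and for $x$ in $(v,w]$ your argument is essentially the paper's: $\Pi_1$ is a proper submersion on the open strip (properness from compactness of $Z\cap\R^n$), Ehresmann gives a trivialization there, and pushing points and paths of $C$ down to level $x$ gives nonemptiness and connectedness of $C_{\le x}$. One remark: in this paper $\crit(\Pi_1,Z)$ denotes $w_1\cup\sing(Z)$, so the hypothesis $Z_{(v,w]}\cap\crit(\Pi_1,Z)=\emptyset$ already excludes singular points from the strip; you neither need nor are entitled to ``shrink'' so that no singular point lies in $(v,w]$ --- the values $v,w$ are given, and singular or critical points may perfectly well sit on the slice $Z_v$.

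That is exactly where the genuine gap lies: the statement includes $x=v$, but the hypothesis gives no control at level $v$. Consequently $Z_{[v,w]}$ need not be a manifold-with-boundary, Ehresmann does not yield $Z_{[v,w]}\cong Z_v\times[v,w]$, and the lifted vector field is defined only above level $v$. Your assertion that ``the retraction is still fine since we only flow down to $v$'' is precisely the unproved point: integral curves approaching a critical level can have limits that depend discontinuously on the starting point (picture a saddle point of $\Pi_1$ on $Z$ at height $v$), so the flow does not extend to a deformation retraction of $C$ onto $C_{\le v}$, and connectedness of $C_{\le v}$ does not follow from your argument. This endpoint is where the paper's proof does its real work: it applies Ehresmann only over the open interval, and reaches level $v$ by a semi-algebraic limiting process --- Hardt's semi-algebraic triviality over $(v,w)$, extension of bounded semi-algebraic curves by continuity to show $C_v\neq\emptyset$, and a replacement of connecting paths by paths at level $v+\varepsilon$ over $\R\langle\varepsilon\rangle$ followed by $\lim_\varepsilon$, using the fact that the limit of a bounded connected semi-algebraic set is semi-algebraically connected. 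Without an argument of this type (or some other substitute for the missing retraction at the critical level), your proof establishes the conclusion only for $x\in(v,w]$, not for $x=v$.
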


\medskip\noindent We can then prove our claims. We start by the
easier case: extending ${\bf P}$ from $v_j$ to $(v_j,v_{j+1})$.

\begin{lemma}
  Let $j$ be in $\{1,\dots,\ell-1\}$. If ${\bf P}(v_j)$ holds, then
  for $x$ in $(v_j,v_{j+1})$, ${\bf P}(x)$ holds.
\end{lemma}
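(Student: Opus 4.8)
The plan is to fix $x \in (v_j, v_{j+1})$ and compare the connected components of $V_{\le x}$ with those of $V_{\le v_j}$, for which ${\bf P}(v_j)$ is assumed. The key geometric input is that the open interval $(v_j, v_{j+1})$ contains no projection of a point of $\mathscr{C} \cap \R^n$, and in particular no projection of a point of $W_1 = \crit(\Pi_1, V) \cap \reg(V) \cup \sing(V)$ (after taking into account that $\sing(V)$ is finite under $\AS$, so its projections are among the $v_k$'s). Hence $V_{(v_j, x]} \cap \crit(\Pi_1, V) = \emptyset$. Since $V \cap \R^n$ is compact by $\AS(d)$ and $\dim(V) > 0$ (we are in the recursive regime $d \ge i+1 \ge 3$), Lemma~\ref{prop:ret} applies with $Z = V$, $v = v_j$, $w = x$: for every connected component $C$ of $V_{\le x}$ and every $x'$ in $[v_j, x]$, the set $C_{\le x'}$ is a connected component of $V_{\le x'}$. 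In particular $C_{\le v_j}$ is a connected component of $V_{\le v_j}$, and $C \mapsto C_{\le v_j}$ is a bijection between the components of $V_{\le x}$ and those of $V_{\le v_j}$ (surjectivity and injectivity both follow from Lemma~\ref{prop:ret}, since one can also run it with $w$ slightly larger to see nothing appears or merges).

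Next I would transfer the two halves of ${\bf P}(x)$ — nonemptiness and connectedness of $C \cap \mathscr{R}$ — across this bijection. For nonemptiness: let $C$ be a component of $V_{\le x}$ and $C_j := C_{\le v_j}$ the corresponding component of $V_{\le v_j}$. By ${\bf P}(v_j)$, $C_j \cap \mathscr{R}$ is non-empty; since $C_j \subset C$ and $C_j \cap \mathscr{R} \subset C \cap \mathscr{R}$, we get $C \cap \mathscr{R} \ne \emptyset$. For connectedness I would argue that $C \cap \mathscr{R}$ deformation-retracts onto $C_j \cap \mathscr{R}$, or at least that it is connected. The cleanest route: apply Lemma~\ref{prop:ret} not only to $V$ but simultaneously to $W_i$ and to $\mathscr{C}'$ (both of which are algebraic, contained in $V$, and have compact real traces), using that $(v_j, x]$ also avoids $\crit(\Pi_1, W_i)$ (by definition of $\mathscr{C}$, whose projections include those of $\crit(\Pi_1, W_i)$) and avoids $\crit(\Pi_1, \mathscr{C}')$ — here one uses that $\mathscr{C}' = V \cap \Pi_{i-1}^{-1}(\Pi_{i-1}(\mathscr{C}))$ is a finite union of fibers, so $\Pi_1$ restricted to it behaves well away from the $v_k$'s, and its singular/critical locus projects into $\{v_1, \dots, v_\ell\}$. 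Then $\mathscr{R}_{\le v_j} \cap C_j = (\mathscr{C}' \cup W_i)_{\le v_j} \cap C_j$ is connected, and Lemma~\ref{prop:ret} upgrades this to connectedness of $\mathscr{R}_{\le x} \cap C = C \cap \mathscr{R}$, because over the critical-value-free band $[v_j, x]$ the pair $(\mathscr{R} \cap C)_{\le x} \to (\mathscr{R} \cap C)_{\le v_j}$ is (by Ehresmann again) a trivial fibration, hence a homotopy equivalence, hence connectedness-preserving.

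The main obstacle I anticipate is the bookkeeping needed to justify applying Lemma~\ref{prop:ret} to the auxiliary sets $W_i$ and $\mathscr{C}'$: Lemma~\ref{prop:ret} as stated requires $\dim(Z) > 0$ and compactness, and requires the band to miss $\crit(\Pi_1, Z)$. Compactness is inherited from $V \cap \R^n$ being bounded and these sets being closed subsets of $V$; positivity of dimension fails for the finitely many isolated points of $W_i$ or of a zero-dimensional fiber of $\mathscr{C}'$, but those contribute only isolated points of $\mathscr{R}$ that cannot lie over $(v_j, v_{j+1})$ unless they already lie over some $v_k$ — so they are harmless and can be discarded, or handled by noting an isolated point trivially satisfies the retraction statement. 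The more delicate point is verifying that $(v_j, v_{j+1})$ genuinely avoids $\crit(\Pi_1, W_i)$ and the relevant critical values of $\mathscr{C}'$: for $W_i$ this is immediate because $\mathscr{C} \supseteq \crit(\Pi_1, W_i)$, so $\Pi_1(\crit(\Pi_1, W_i)) \subseteq \{v_1, \dots, v_\ell\}$; for $\mathscr{C}'$ one observes that $\mathscr{C}'$ is a union of fibers $V \cap \Pi_{i-1}^{-1}(\mathbf{z})$ over the finite set $\Pi_{i-1}(\mathscr{C})$, and by point~3 of Theorem~\ref{theo:big} each such fiber's defining system satisfies $\AS$, so each fiber is itself nice; its critical points for $\Pi_1$ project either to one of the $v_k$ (if $\mathbf{z}$ came from a point whose first coordinate is some $v_k$) — and in fact every $\mathbf{z} \in \Pi_{i-1}(\mathscr{C})$ has first coordinate equal to some $v_k$ by construction, so $\Pi_1(\mathscr{C}') \subseteq \{v_1, \dots, v_\ell\}$ entirely, which makes the argument for $\mathscr{C}'$ even simpler: $\mathscr{C}'_{(v_j,v_{j+1})} = \emptyset$. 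So in fact $\mathscr{R} \cap V_{(v_j, v_{j+1})} = W_i \cap V_{(v_j, v_{j+1})}$, and the whole lemma reduces to tracking how $W_i$ and the components of $V_{\le x}$ evolve across a band with no critical values — a direct two-fold application of Lemma~\ref{prop:ret}. Once this reduction is made, the remaining steps are routine and I would present them briskly.
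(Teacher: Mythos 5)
You assemble exactly the ingredients the paper uses: over $(v_j,x]$ there are no real points of $W_1$ nor of $\crit(\Pi_1,W_i)$, every point of $\Pi_{i-1}(\mathscr{C})$ has first coordinate among the $v_k$'s so $\mathscr{R}$ coincides with $W_i$ over the band, Lemma~\ref{prop:ret} applied to $Z=V$ turns ${\bf P}(v_j)$ into non-emptiness and connectedness of $C_{\le v_j}\cap\mathscr{R}$, and Lemma~\ref{prop:ret} applied to $Z=W_i$ controls what happens over the band. This is the paper's route; the non-emptiness half and the reduction $\mathscr{R}\cap V_{(v_j,x]}=W_i\cap V_{(v_j,x]}$ are fine.

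The one step that does not stand as written is your finish: ``over the critical-value-free band the map is (by Ehresmann again) a trivial fibration, hence a homotopy equivalence, hence connectedness-preserving.'' Ehresmann gives a trivialization of $W_i\cap\R^n$ only over the \emph{open} interval $(v_j,v_{j+1})$; it does not by itself let you push points of $(C\cap\mathscr{R})_{(v_j,x]}$ down to the closed sublevel $\{\Pi_1\le v_j\}$ — reaching the boundary level is precisely the delicate limiting issue that the appendix proof of Lemma~\ref{prop:ret} handles with Hardt triviality and infinitesimals, and Lemma~\ref{prop:ret} itself asserts no retraction or homotopy equivalence of $C\cap\mathscr{R}$. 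Fortunately you do not need that strength, and you already hold the fix: since you apply Lemma~\ref{prop:ret} to $Z=W_i$, every connected component $D$ of $C\cap W_i$ (which is a connected component of ${W_i}_{\le x}$) has $D_{\le v_j}$ a non-empty connected component of ${W_i}_{\le v_j}$. Hence any $\y\in (C\cap\mathscr{R})_{(v_j,x]}$, lying in $W_i$, can be joined by a path inside $D\subseteq C\cap\mathscr{R}$ to a point of $C_{\le v_j}\cap\mathscr{R}$, and the latter set is connected by ${\bf P}(v_j)$. Replacing the homotopy-equivalence claim by this componentwise path argument — which is exactly how the paper concludes — closes the gap with no new ingredients.
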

\begin{proof}
Let $x$ be in $(v_j,v_{j+1})$ and let $C$ be a connected component of
$V_{\le x}$. We have to prove that $C \cap \mathscr{R}$ is non-empty
and connected. We first establish that $C_{\le v_j} \cap \mathscr{R}$
is non-empty and connected. Because there is no point in $W_1$ in
$V_{(v_j,x]}$, applying Lemma~\ref{prop:ret} to $V$ above the interval
$(v_j,x]$ shows that $C_{\le v_j}$ is a connected component of $V_{\le
  v_j}$. So, using property ${\bf P}(v_j)$, we see that $C_{\le v_j}
\cap \mathscr{R}$ is non-empty and connected, as needed.
  
Next, we prove that for any connected component $D$ of $C \cap W_i$,
$D_{\le v_j}$ is non-empty (and connected). Clearly, $D$ is a
connected component of ${W_i}_{\le x}$. Recall that $W_i$ is an
algebraic set of positive dimension, with $W_i \cap \R^n$ compact;
besides, $\crit(\Pi_1,W_i)$ is empty above $(v_j,x]$. Applying
Lemma~\ref{prop:ret} to $Z=W_i$, we see that $D_{\le v_j}$ is
non-empty (and connected).

To prove that $C \cap \mathscr{R}$ is connected, we prove that any
$\y$ in $C \cap \mathscr{R}$ can be connected to a point in $C_{\le
  v_j}\cap \mathscr{R}$ by a path in $C \cap \mathscr{R}$. This is
sufficient to conclude, since we have seen that $C_{\le v_j}\cap
\mathscr{R}$ is connected.  Let thus $\y$ be in $C \cap
\mathscr{R}$. If $\y$ is in $C_{\le v_j} \cap \mathscr{R}$, we are
done. If $\y$ is in $C_{(v_j,x]} \cap \mathscr{R}$, it is actually in
$C_{(v_j,x]} \cap W_i$, since $\mathscr{R}$ and $W_i$ coincide above
$(v_j,x]$. Let thus $D$ be the connected component of $C \cap W_i$
containing $\y$. By the result of the previous paragraph, there exists
a continuous path connecting $\y$ to a point $\y'$ in $D_{\le v_j}$ by
a path in $D$. Since $D$ is in $C \cap \mathscr{R}$, we are done.
\end{proof}

\begin{lemma}\label{Prop:Px2}
  Let $x$ be in $\R$ such that for all $x' < x$, ${\bf P}(x')$ holds.
  Then ${\bf P}(x)$ holds.
\end{lemma}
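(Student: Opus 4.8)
The plan is to fix a connected component $C$ of $V_{\le x}$ and show that $C \cap \mathscr{R}$ is non-empty and connected, using the hypothesis that ${\bf P}(x')$ holds for all $x' < x$. The natural strategy is an approximation-from-below argument: choose $x'$ slightly less than $x$, sitting in the open interval $(v_{j_0}, x)$ where $v_{j_0}$ is the largest element of $\{v_1 < \cdots < v_\ell\}$ that is $< x$ (if no $v_j$ is $< x$, take $x' < \min_{\y \in V\cap\R^n}\Pi_1(\y)$, where ${\bf P}(x')$ is vacuous). On the half-open slab $(x', x)$ above which $\mathscr{C}$ — hence $W_1$ and $\crit(\Pi_1, W_i)$ — has no points, I want to transport the connectivity information from level $x'$ up to level $x$. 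The two technical tools are Lemma~\ref{prop:ret} (a Morse-type retraction: over an interval free of critical points of $\Pi_1$ on $Z$, connected components of $Z_{\le w}$ restrict to connected components of $Z_{\le x}$) applied to $Z = V$ and to $Z = W_i$, and Lemma~\ref{prop:wibar} (a path landing in $Z_{\le x} \setminus (Z_x \cap \crit(\Pi_1,Z))$ lies in the closure of a unique connected component of $Z_{<x}$).

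First I would establish non-emptiness. Since $V \cap \R^n$ is compact, every connected component $C$ of $V_{\le x}$ meets the fiber $V_{x'}$ for $x'$ close enough to $x$ (otherwise $C$ would be a component of $V_{<x}$ that is bounded away from level $x$, and compactness together with $V$ having no critical points of $\Pi_1$ immediately above $x'$ — this is where the choice of $x'$ just above $v_{j_0}$ matters — would force $C$ to be closed, contradicting that $C$ reaches up to $x$; more carefully one argues that $C_{\le x'}$ is non-empty using Lemma~\ref{prop:ret} applied over $[x', x]$). Then $C_{\le x'}$ is, by Lemma~\ref{prop:ret} with $Z = V$, a union of connected components of $V_{\le x'}$; picking one such component $C'$ and applying ${\bf P}(x')$, we get $C' \cap \mathscr{R} \neq \emptyset$, hence $C \cap \mathscr{R} \neq \emptyset$.

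For connectedness — the heart of the argument — I would take two points $\y_1, \y_2 \in C \cap \mathscr{R}$ and connect them by a path inside $C \cap \mathscr{R}$. The key observation is that above the slab $(x', x]$ the set $\mathscr{R} = \mathscr{C}' \cup W_i$ coincides with $W_i$ (since $\mathscr{C}'$ lives over $\Pi_{i-1}(\mathscr{C})$ and $\Pi_1(\mathscr{C}) \cap (x',x] = \emptyset$, any point of $\mathscr{C}'$ over this slab already lies in... actually one must check $\mathscr{C}'_{(x',x]} \subset W_i$; this uses that $\mathscr{C}' = V \cap \Pi_{i-1}^{-1}(\Pi_{i-1}(\mathscr{C}))$ and the fiberwise structure — I would verify it from the definitions). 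So each $\y_k$ is either in $C_{\le x'} \cap \mathscr{R}$, or in $C_{(x',x]} \cap W_i$. In the latter case, let $D_k$ be the connected component of $C \cap W_i$ containing $\y_k$; by Lemma~\ref{prop:ret} applied to $Z = W_i$ over $(x', x]$ (using $W_i \cap \R^n$ compact and $\crit(\Pi_1, W_i)$ empty there), $(D_k)_{\le x'}$ is non-empty and connected, so $\y_k$ is joined within $D_k \subset C \cap W_i \subset C \cap \mathscr{R}$ to a point of $C_{\le x'} \cap \mathscr{R}$. Thus it suffices to connect two points of $C_{\le x'} \cap \mathscr{R}$; by Lemma~\ref{prop:ret} these lie in connected components $C'_1, C'_2$ of $V_{\le x'}$, each meeting $\mathscr{R}$ in a connected set by ${\bf P}(x')$; and the fact that $C'_1, C'_2$ both lie in the \emph{same} component $C$ of $V_{\le x}$ is exactly what Lemma~\ref{prop:wibar} lets me exploit — a semialgebraic path in $C$ from $\y_1$ to $\y_2$, pushed slightly below level $x$ away from the finitely many bad points $V_x \cap \crit(\Pi_1, V)$, shows $C'_1$ and $C'_2$ have closures meeting the closure of a common component of $V_{<x}$, forcing $C'_1 \cup C'_2$ to connect up through $C_{\le x'} \cap \mathscr{R}$.

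The main obstacle is this last gluing step: handling the possibly many components of $V_{\le x'}$ that sit under $C$, and showing their $\mathscr{R}$-parts are all mutually connected inside $C \cap \mathscr{R}$. The delicate point is that the path realizing connectedness of $C$ may repeatedly dip to level $x$ and come back; one controls this via Lemma~\ref{prop:wibar}, perturbing the path to avoid $V_x \cap \crit(\Pi_1, V)$ (a finite set under $\AS$–$\BS$, as it is contained in $\sing(V) \cup W_1$, both finite) and using that the number of components of $V_{<x}$ near the hyperplane $\Pi_1^{-1}(x)$ is finite so the argument terminates. I expect the formal bookkeeping of ``each near-$x$ component of $V_{<x}$ contributes one connected piece of $\mathscr{R}$, and adjacent pieces along the path overlap'' to be the only real work; the rest is assembling the two cited lemmas.
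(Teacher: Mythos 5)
Your plan founders on the case the lemma is really for: when $x$ itself is one of the critical values $v_j$, i.e.\ when $\mathscr{C}$ has points at level $x$. Two of your key claims then fail. First, it is not true that $\mathscr{R}$ coincides with $W_i$ over the slab $(x',x]$: if $\y\in\mathscr{C}$ has $\Pi_1(\y)=x$, then the whole fiber $V\cap\Pi_{i-1}^{-1}(\Pi_{i-1}(\y))$ sits at level $x$, is contained in $\mathscr{C}'\subset\mathscr{R}$, and is a positive-dimensional set (dimension $d-i+1$) that is in general not contained in $W_i$; control points of $\mathscr{P}$ at level $x$ are of this kind too. So a point $\y_k\in C\cap\mathscr{R}$ at level $x$ need not lie in $W_i$, and your reduction ``either in $C_{\le x'}\cap\mathscr{R}$ or in $C_{(x',x]}\cap W_i$'' breaks. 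Handling exactly these points is the hardest part of the paper's argument (Case~1 of Lemma~\ref{7}): one uses the curve selection lemma and an infinitesimal extension to find, inside the fiber-component hanging off the component $B$ of $V_{<x}$, a critical point of $\Pi_{i-1}$ lying in $W_i$, and connects to it through the fiber (which lies in $\mathscr{C}'$). Your proposal contains no substitute for this step. Second, your invocations of Lemma~\ref{prop:ret} over $[x',x]$ (for $Z=V$) and over $(x',x]$ (for $Z=W_i$) are illegitimate in this case, since that lemma requires $Z_{(v,w]}\cap\crit(\Pi_1,Z)=\emptyset$ and the right endpoint $x$ may carry points of $W_1$ or of $\crit(\Pi_1,W_i)$; the paper only applies Lemma~\ref{prop:ret} in the companion lemma, where $x$ lies strictly between critical values.

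The final gluing step is also not fixable as stated: you cannot in general perturb a path in $C$ to avoid $V_x\cap W_1$ (think of two lobes of $C$ joined only through a saddle of $\Pi_1$ at level exactly $x$). The paper instead decomposes the path $\gamma$ by $G=\gamma^{-1}(C_x\cap W_1)$, notes that the endpoints of the complementary open intervals already lie in $W_1\subset\mathscr{R}$, applies Lemma~\ref{prop:wibar} to each such interval to place its image in $\overline{B_j}$ for a single component $B_j$ of $V_{<x}$, and then uses the statement of Lemma~\ref{7} --- that $\overline{B_j}\cap\mathscr{R}$ is non-empty and connected, which is where the hypothesis ${\bf P}(x')$ for all $x'<x$ (with $x'$ taken arbitrarily close to $x$, depending on a path inside $B_j$) actually enters. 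A smaller point: your non-emptiness argument assumes every component of $V_{\le x}$ meets $V_{x'}$, which fails for an isolated point of $V_{\le x}$ at level $x$ (the paper treats $\dim(C)=0$ separately, observing $C\subset W_1$). In short, the architecture (approximate from below, flow $W_i$ down, glue along a path) resembles the paper's, but the proposal is missing the fiber argument for points of $\mathscr{C}'\setminus W_i$ at level $x$ and misuses the retraction lemma at a critical level, so it does not prove the statement.
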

\begin{proof}
  Let $C$ be a connected component of $V_{\le x}$; we have to prove
  that $C\cap \mathscr{R}$ is connected. If $\dim(C)=0$, we are done,
  since $C$ is a point and $C \cap \mathscr{R}$ is connected as it is
  non-empty (one checks that $C$ is in $W_1$). Hence, we assume that
  $\dim(C) > 0$; from this, one deduces that $C_{<x}$ is not
  empty. Let then $B_1,\dots,B_r$ be the connected components of
  $C_{<x}$; Lemma~\ref{7} (in appendix) proves that for $i \le r$,
  $\overline{B_i} \cap \mathscr{R}$ is non-empty and connected.

  Since $\overline{B_1} \cap \mathscr{R}$ is non-empty and contained
  in $C\cap \mathscr{R}$, the latter is non-empty. Let thus finally
  $\y_1$ and $\y_2$ be in $C \cap \mathscr{R}$; we need to connect
  them by a path in $C \cap \mathscr{R}$.  Let $\gamma:[0,1]\to C$ be
  a continuous semi-algebraic path that connects $\y_1$ to $\y_2$, and
  let $G=\gamma^{-1}(C_x \cap W_1)$ and $H=[0,1]-G$. The connected
  components $g_1,\dots,g_N$ of $G$ are intervals and closed in
  $[0,1]$ (and may be reduced to single points); the connected
  components $h_1,\dots,h_M$ of $H$ are intervals that are open in
  $[0,1]$. Besides, these intervals are interleaved in $[0,1]$.  For
  $1 \le i \le M$, we write $\ell_i ={\rm inf}(h_i)$ and $r_i={\rm
    sup}(h_i)$; we also introduce $r_0=0$ and $\ell_{M+1}=1$.  To
  conclude the proof, we establish that:
  \begin{enumerate}
  \item for $1 \le i \le M$, $\gamma(\ell_i)$ and $\gamma(r_i)$ can be
    connected by a semi-algebraic path in $C \cap \mathscr{R}$;
  \item for $0 \le i \le M$, $\gamma(r_i)$ and $\gamma(\ell_{i+1})$ can be
    connected by a semi-algebraic path in $C \cap \mathscr{R}$.
  \end{enumerate}
  \noindent We prove the first point (the second one is easier). For
  $1 \le i \le M$, we first claim that there exists $j \le r$ such
  that $\gamma(h_i)$ is in $\overline{B_j}$. Indeed, remark that since
  $\gamma(h_i)$ avoids $C_x \cap W_1$, it actually avoids the whole
  $V_x \cap W_1$ (because $\gamma(h_i)$ is contained in $C$). It
  follows from Lemma~\ref{prop:wibar} that there exists a connected
  component $B_j$ of $V_{<x}$ such that $\gamma(h_i)\subset
  \overline{B_j}$. Since $\gamma$ is continuous, both $\gamma(\ell_i)$
  and $\gamma(r_i)$ are in $\overline{B_j}$.  On the other hand, both
  $\gamma(\ell_i)$ and $\gamma(r_i)$ are in $\mathscr{R}$. We justify
  it for $\ell_i$: either $\ell_i=0$, and we are done (because
  $\gamma(0)=\y$ is in $\mathscr{R}$), or $\ell_i>0$, so that $\ell_i$
  is in some interval $g_\ell$ (since then it does not belong to
  $h_i$), and thus $\gamma(\ell_i)$ is in $W_1 \subset \mathscr{R}$.
  Because $\overline{B_j}\cap\mathscr{R}$ is connected,
  $\gamma(\ell_i)$ and $\gamma(r_i)$ can be connected by a path in
  $\overline{B_j} \cap \mathscr{R}$, which is contained in $C \cap
  \mathscr{R}$.
\end{proof}
%   The second point is easier to deal with. If $r_i = \ell_{i+1}$
%   (which can happen at $r_0=0$ or $\ell_{M+1}=1$), the conclusion
%   holds trivially. Else, we have $r_i < \ell_{i+1}$; then, both are in
%   a same interval $g_\ell$, for some $\ell \le N$. Since $W_1$ is
%   finite, $\gamma(g_\ell)$ is a single point (since it is connected),
%   so $\gamma(r_i)=\gamma(\ell_{i+1})$.

%%%%%%%%%%%%%%%%%%%%%%%%%%%%%%%%%%%%%%%%%%%%%%%%%%%%%%%%%%%%
%%%%%%%%%%%%%%%%%%%%%%%%%%%%%%%%%%%%%%%%%%%%%%%%%%%%%%%%%%%%
%%%%%%%%%%%%%%%%%%%%%%%%%%%%%%%%%%%%%%%%%%%%%%%%%%%%%%%%%%%%

\section{Proof of the genericity properties}\label{sec:gen}

The algorithms of Subsections~\ref{ssec:canny} and~\ref{ssec:us} rely
on the fact that assumption $\BS$ holds in generic coordinates. We
discuss here the two cases we need (Lemmas~\ref{prop:3}
and~\ref{prop:4}) in the simplified case where $Q$ is empty (the
arguments carry over to the general cases). Thus, we let
$\f=f_1,\dots,f_p$ be a system that satisfies $\AS$; recall that
Lemma~\ref{prop:3} discusses $p$ arbitrary, and Lemma~\ref{prop:4} has
$p=1$.

In both cases, in generic coordinates, ${W_i}$ has dimension $i-1$ for
all $i=1,\dots,n-p$~\cite{BaGiHeMb97,BaGiHeMb01}. Then, points $(a)$
and $(b)$ of assumption $\BS$ are established in~\cite{SaSc03} when
$\sing(V)=\emptyset$. Since the assumption $\sing(V)=\emptyset$ was
only used to ensure that ${W_i}$ had dimension $i-1$, we obtain $(a)$
and $(b)$ in our case as well. Point~$(c)$ says that ${W_1}$ is
finite; this follows from the previous claim with $i=1$.  Point $(d)$,
which says that in generic coordinates $\crit(\Pi_1,{W_i})$ is finite,
is the most delicate of these properties; in the general case where
$p$ and $i$ are arbitrary, we do not know whether it always holds.

In Subsection~\ref{ssec:canny}, we have $p$ arbitrary and $i=2$: in
this case, ${W_2}$ is generically a curve in Noether position for
$\Pi_1$; this easily implies point~$(d)$, and thus finishes the proof
of Lemma~\ref{prop:3}. In Subsection~\ref{ssec:us}, for
Lemma~\ref{prop:4}, we need the case where $p=1$ and $i$ is arbitrary.
This turns out to be substantially harder; we sketch the proof in what
follows.

We work with the parameter space $\C^{i}\times \C^{ni}$; to an element
$(\g,\e)$ of $\C^{i}\times \C^{ni}$, with $\e=(\e_1,\dots,\e_i)$ and
all $\e_k$ in $\C^n$, we associate the linear maps
$$\begin{array}{rccc}
\Pi_\e:&\C^n& \to& \C^i \\
& \x=(x_1,\dots,x_n) & \mapsto& (\e_1\cdot \x,\dots,\e_i\cdot \x)
\end{array}
\qquad\text{and}\qquad
\begin{array}{rccc}
\rho_\g:&\C^i& \to& \C \\
& \y=(y_1,\dots,y_i) & \mapsto& \g \cdot \y.
\end{array}$$
We also define 
$W_\e=\crit(\Pi_\e,V)$. We will prove 
that for a generic $\e$, $\sing(W_\e)$  and 
$\crit(\rho_{\g_0} \circ \Pi_\e,W_\e)$ are
finite, with
$\g_0=(1,0,\dots,0)$. Changing the coordinates to bring $\e$ to the
first $i$ unit vectors gives  point $(d)$ of assumption $\BS$ for Lemma~\ref{prop:4}
(the last statement of this lemma is discussed hereafter).

\noindent For $\e \in \C^{ni}$ and $i+1 \le \ell \le n$, let $M_\ell$
be the $(i+1)$-minor built on columns $1,\dots,i,\ell$ of the matrix
$$\mM_\e=\left [~\begin{smallmatrix}
    \e^t_1 \\ \vdots \\ \e^t_i \\ \grad(f)
  \end{smallmatrix}~\right ].$$ We say that property ${\bf
  a}_1(\e)$ is satisfied if the following holds: $W_\e$ is the
zero-set of $(f,M_{i+1},\dots,M_n)$, the Jacobian matrix of
$(f,M_{i+1},\dots,M_n)$ has rank $n-i+1$ at all points of
$W_\e-\sing(V)$, $W_\e$ is $(i-1)$-equidimensional and $\sing(W_\e)$
is finite. Note that after changing coordinates to bring $\e$ to
the first $i$ unit vectors, this property implies the last claim of
Lemma~\ref{prop:4}.

For $0 \le j\le i$, define next $S_j = \{ \x \in\reg(V) \ | \
\dim(\Pi_\e(T_\x V)) = j\}.$ The sets $S_j$ form a partition of
$\reg(V)$; we say that property ${\bf a}_2(\e)$ is satisfied if for
$j=0,\dots,i$, $S_j$ is either empty or a non-singular constructible
subset of $\reg(V)$. If ${\bf a}_2(\e)$ holds, let
$m(n,i,j)=\max(0,\dim(S_j)-n+1+j)$ and $M(n,i,j)=\dim(S_j)$.  Then for
$m(n,i,j) \le \ell \le M(n,i,j)$, define finally
$$S_{j,\ell} = \{ \x \in S_j \ | \ \dim(\Pi_\e(T_\x S_j)) = \ell\}.$$
Under ${\bf a}_2(\e)$, the sets $S_{j,\ell}$ form a partition of
$S_j$. Then, property ${\bf a}_3(\e)$ holds if for $j=0,\dots,i$ and
$\ell=m(n,i,j),\dots,M(n,i,j)$, $S_{j,\ell}$ is either empty or a
non-singular constructible subset of $S_j$. The sets $S_j$ and
$S_{j,\ell}$ can be rewritten in terms of the standard notation of
Thom-Boardman strata~\cite{Thom55,Boardman67}. Hence, Mather's
transversality result for projections~\cite{Mather73,AlOt92,AlBaOt01}
implies the following lemma.
\begin{lemma}\label{lemma:a123}
  For a generic $\e$ in $\C^{ni}$, properties $\a_1(\e)$, $\a_2(\e)$
  and $\a_3(\e)$ are satisfied, and the inequality $\dim(S_{j,\ell})
  \le \ell$ holds for $\ell \le i-1$ and $m(n,i,j) \le \ell \le
  M(n,i,j)$.
\end{lemma}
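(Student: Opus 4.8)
The plan is to identify the loci $S_j$ and $S_{j,\ell}$ with Thom--Boardman strata~\cite{Thom55,Boardman67} of the restricted projection $\pi_\e:=\Pi_\e|_{\reg(V)}\colon\reg(V)\to\C^i$, and to read off $\a_1(\e)$, $\a_2(\e)$, $\a_3(\e)$ and the dimension bound from the transversality theorem for generic projections together with the Boardman codimension formula. Since $\reg(V)$ is a smooth quasi-affine variety of dimension $d=n-p$ on which the differential of $\pi_\e$ at $\x$ has image $\Pi_\e(T_\x V)$, the set $S_j$ is exactly the first Boardman stratum $\Sigma^{d-j}(\pi_\e)$ (the corank-$(d-j)$ locus), and $S_{j,\ell}$ is the second-order stratum obtained by intersecting $S_j$ with the corank-$(\dim S_j-\ell)$ locus of $\pi_\e|_{S_j}$, i.e. $\Sigma^{d-j,\kappa}(\pi_\e)$ for a suitable symbol $\kappa$. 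I would first set up this dictionary carefully; one point needing care is that the naive nested rank condition literally defining $S_{j,\ell}$ agrees with the intrinsically defined second Boardman stratum --- for linear projections this is precisely what the projection-specific transversality statements ensure.

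Next I would invoke Mather's transversality theorem and its refinements~\cite{Mather73,AlOt92,AlBaOt01} in the algebraic (equivalently complex-analytic) form: the family of linear projections $\{\Pi_\e\}_{\e\in\C^{ni}}$ being large enough to act submersively on the pertinent $2$-jets of maps along $\reg(V)$, there is a non-empty Zariski-open $\Omega\subset\C^{ni}$ such that for $\e\in\Omega$ the $2$-jet extension $j^2\pi_\e$ is transverse to every Boardman stratum in the jet bundle over $\reg(V)$. Transversality to the first-order strata makes each $S_j$ empty or a smooth constructible subset of $\reg(V)$, which is $\a_2(\e)$; transversality to the second-order strata makes each $S_{j,\ell}$ empty or a smooth constructible subset of $S_j$, which is $\a_3(\e)$; and when non-empty these strata have the expected dimensions
\begin{equation*}
\dim S_j=d-(d-j)(i-j),\qquad \dim S_{j,\ell}=\dim S_j-(\dim S_j-\ell)(i-\ell),
\end{equation*}
a negative right-hand side being read as ``empty''.

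The range of $\ell$ and the inequality are then formal. For $\x\in S_j$ one has $\Pi_\e(T_\x S_j)\subseteq\Pi_\e(T_\x V)$ while $T_\x S_j\cap\ker(\Pi_\e|_{T_\x V})$ has dimension at most $d-j$; hence $\dim S_j-(d-j)\le\ell\le\dim S_j$, which with $d=n-p$ is exactly $m(n,i,j)\le\ell\le M(n,i,j)$, needing no genericity. For the inequality, fix $\ell\le i-1$: if $\dim S_j\le\ell$ then $\dim S_{j,\ell}\le\dim S_j\le\ell$; otherwise the displayed formula gives $\dim S_{j,\ell}-\ell=(\dim S_j-\ell)\bigl(1-(i-\ell)\bigr)\le0$ since $i-\ell\ge1$. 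For $\a_1(\e)$, a change of basis carrying $\e_1,\dots,\e_i$ to the first $i$ unit vectors turns $W_\e$ into the polar variety $W_i=\crit(\Pi_i,V)$ and the $(i+1)$-minors $M_{i+1},\dots,M_n$ of $\mM_\e$ into the partial derivatives $\partial f/\partial X_{i+1},\dots,\partial f/\partial X_n$, so the determinantal description recalled in the ``Polar varieties'' paragraph identifies $W_\e$ with the zero-set of $(f,M_{i+1},\dots,M_n)$. That this system is radical, that $W_\e$ is $(i-1)$-equidimensional, and that $\jac(f,M_{i+1},\dots,M_n)$ has rank $n-i+1$ at every point of $W_\e-\sing(V)$ are the structural properties of polar varieties in generic coordinates of~\cite{BaGiHeMb97,BaGiHeMb01,SaSc03}: the only use of $\sing(V)=\emptyset$ there was to guarantee $\dim W_i=i-1$, which is now supplied by $\dim S_{i-1}=i-1$ above, the lower strata $S_j$ with $j<i-1$ being of strictly smaller dimension. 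The full-rank Jacobian condition forces $W_\e$ to be smooth of dimension $i-1$ at every point off $\sing(V)$, so $\sing(W_\e)\subseteq\sing(V)$, finite by $\AS$; intersecting $\Omega$ with the dense open sets furnished by those references yields one admissible Zariski-open set of $\e$.

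The genuine difficulty, I expect, lies not in the dimension bookkeeping but in legitimizing the transversality step here: one must check that the finite-dimensional family of linear projections acts transitively enough on the relevant $2$-jets along the non-compact quasi-affine variety $\reg(V)$ --- so that the parametric transversality theorem applies over $\C$ and yields a genuinely \emph{Zariski}-open (not merely dense, not merely full-measure) set of centers --- and one must separately argue that no pathology appears near $\sing(V)$, which is exactly where finiteness of $\sing(V)$ (assumption $\AS$) enters. Pinning down the second Boardman symbol $\kappa$ and verifying $S_{j,\ell}=\Sigma^{d-j,\kappa}(\pi_\e)$ is the other delicate piece of bookkeeping.
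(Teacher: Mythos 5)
You follow the same route as the paper: identify $S_j$ and $S_{j,\ell}$ with Thom--Boardman strata of $\Pi_\e$ restricted to $\reg(V)$, invoke Mather's transversality theorem for generic projections to obtain $\a_2(\e)$, $\a_3(\e)$ and the stratum dimensions, and get $\a_1(\e)$ from the algebraic Sard lemma and the polar-variety results of~\cite{BaGiHeMb97,BaGiHeMb01,SaSc03}. The gap is in the dimension count for the second-order strata. Your first-order formula $\dim S_j=d-(d-j)(i-j)$ is the correct one, but $\dim S_{j,\ell}=\dim S_j-(\dim S_j-\ell)(i-\ell)$ is the naive count obtained by treating $\Pi_\e|_{S_j}$ as if it were a generic map from a $\dim S_j$-dimensional manifold to $\C^i$; that is not what transversality of the $2$-jet extension gives. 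The correct generic codimension in $\reg(V)$ of the stratum with symbol $(r,s)$, where $r=d-j$ and $s=\dim S_j-\ell$, is Boardman's
$$\nu(r,s)=(i-d+r)\,\mu(r,s)-(r-s)s,\qquad \mu(r,s)=r(s+1)-\tfrac{s(s-1)}{2},$$
which is smaller than the total codimension your formula predicts by $s(s+1)(i-j)/2$, hence strictly smaller whenever $s\ge 1$ and $j<i$. The classical cusp already shows the discrepancy is real: for a generic map of surfaces your count declares $\Sigma^{1,1}$ empty, yet cusps occur stably. Within the setting of the lemma, take a hypersurface in $\C^4$ (so $d=3$) and $i=2$, $j=1$, $\ell=0$: your formula gives $\dim S_{1,0}=-1$ (empty), whereas the true generic codimension is $3$, i.e.\ $S_{1,0}$ has dimension $0$ and is in general non-empty.

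Because your formula systematically under-estimates $\dim S_{j,\ell}$, the one-line verification $\dim S_{j,\ell}-\ell=(\dim S_j-\ell)\bigl(1-(i-\ell)\bigr)\le 0$ does not establish the inequality $\dim S_{j,\ell}\le\ell$ for the actual generic dimensions; the inequality has to be checked against Boardman's formula above, and this is precisely the ``straightforward but tedious verification'' the paper carries out (it is no longer a one-liner). The remaining ingredients of your argument --- the admissible range of $\ell$, the reduction of $\a_1(\e)$ to the cited polar-variety statements together with $\sing(W_\e)\subseteq\sing(V)$, and your caveats about Zariski-openness in the transversality step and about identifying the second Boardman symbol --- are in line with what the paper does and are not where the problem lies.
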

Let now $\E=(\E_1,\dots,\E_i)$ be $ni$ indeterminates, that stand for
the vectors $\e=(\e_1,\dots,\e_i)$ and let ${\bf G}=(G_1,\dots,G_i)$
be indeterminates for $\g=(g_1,\dots,g_i)$. Let $J$ be the Jacobian
matrix of the polynomials $(f,M_{i+1},\dots,M_n)$, where we take
partial derivatives in the variables $\X$ only. Let further $\r$ be
the row vector of length $n$ given by
$$\r =\left [
  \begin{matrix}
G_1 & \cdots & G_i
  \end{matrix} \right ] \left [ \begin{matrix}
\E_1^t \\ \vdots \\ \E_i^t
\end{matrix} \right ],$$ and let finally $J'$ be the matrix obtained
by adjoining the row $\r$ to $J$. We define the algebraic set $X
\subset \C^i\times \C^{ni} \times \C^n$ as the set of all $(\g,\e,\x)
\in \C^i\times \C^{ni} \times \C^n$ such that
$f(\x)=M_{i+1}(\x,\e)=\cdots=M_n(\x,\e)=0$ and all $(n+2-i)$-minors of $J'(\g,\e,\x)$ vanish. 
Finally, we define the projections $\alpha: (\g,\e,\x) \mapsto
(\g,\e)$ and $\gamma: (\g,\e,\x) \mapsto \e$.

\begin{lemma}\label{prop:dimxe}
  If $\a_1(\e)$, $\a_2(\e)$ and $\a_3(\e)$ holds, then 
  $X \cap \gamma^{-1}(\e)$ has dimension at most $i$
\end{lemma}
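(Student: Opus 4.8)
The plan is to fix a generic $\e$ for which $\a_1(\e)$, $\a_2(\e)$, $\a_3(\e)$ all hold, and to analyze the fiber $X \cap \gamma^{-1}(\e)$ by stratifying the underlying variety $W_\e = \crit(\Pi_\e,V)$ according to the rank of the differential of $\Pi_\e$ on the smooth pieces of $V$ and of the $S_j$. First I would observe that a point $(\g,\e,\x)$ lies in $X \cap \gamma^{-1}(\e)$ exactly when $\x \in W_\e$ and the augmented Jacobian $J'(\g,\e,\x)$ drops rank, i.e. $\x$ is a critical point of $\rho_\g \circ \Pi_\e$ restricted to $W_\e$ (at regular points of $W_\e$ this is literally the condition that $\r(\x)$, the gradient of $\rho_\g\circ\Pi_\e$, lies in the row span of $J(\x)$). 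So the fiber $X \cap \gamma^{-1}(\e)$ maps to $W_\e$ with image contained in $\bigcup_j S_j$-type strata, and I must bound the dimension of the preimage of each stratum and add the dimension of the $\g$-directions that are "free" over that stratum.

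The key steps, in order: (1) By $\a_1(\e)$, $W_\e$ is $(i-1)$-equidimensional with finite singular locus, and on $W_\e-\sing(V)$ the matrix $J$ has maximal rank $n-i+1$; hence over $\sing(V)$ (finite) the fiber contributes at most $\dim \C^i = i$, which is already within the bound. (2) Over $\reg(V)$, partition by the strata $S_j$ and further by $S_{j,\ell}$; on $S_{j,\ell}$ the rank of $\Pi_\e$ restricted to $T_\x S_j$ is $\ell$, so the space of $\g$ for which $\rho_\g\circ\Pi_\e|_{S_j}$ is critical at $\x$ — equivalently $\g$ kills the $\ell$-dimensional image $\Pi_\e(T_\x S_j)$ — has dimension $i-\ell$. (3) Combine: the contribution of $S_{j,\ell}$ to $X\cap\gamma^{-1}(\e)$ has dimension at most $\dim(S_{j,\ell}) + (i-\ell)$, and by the last inequality of Lemma~\ref{lemma:a123}, for $\ell \le i-1$ we have $\dim(S_{j,\ell}) \le \ell$, giving $\le \ell + i - \ell = i$ as desired. (4) Handle the top stratum $S_i$ (where $\Pi_\e$ is submersive and $\x$ is not in $W_\e$ at all, so these points contribute nothing) and, if necessary, the index $\ell = i$ case separately — but $\ell = i$ forces $\x \notin W_\e$, so it does not occur. (5) Finally, take the union over the finitely many pairs $(j,\ell)$: a finite union of sets of dimension $\le i$ still has dimension $\le i$.

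The main obstacle I anticipate is matching up the intrinsic critical-point description with the explicit minor conditions defining $X$ at points that are singular on $W_\e$ but regular on $V$, and at points where several strata $S_{j,\ell}$ meet or where $W_\e$ itself fails to coincide locally with the closure of a single $S_j$. The clean bound on $S_{j,\ell}$ works on each stratum, but I need that every point of the fiber $X\cap\gamma^{-1}(\e)$ actually lies over some $S_{j,\ell}$ (or over the finite set $\sing(V)$), and that the minor condition at such a point really does impose the expected $\ell$ linear conditions on $\g$; this is where $\a_1,\a_2,\a_3$ — in particular the non-singularity of each stratum and the genericity-of-$\e$ transversality — are doing the work, and writing it carefully requires keeping track of which Jacobian (of $(f,M_{i+1},\dots,M_n)$ versus of the stratum) is relevant. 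A secondary nuisance is the boundary case $\ell < m(n,i,j)$, which is excluded by construction of the range of $\ell$, so I would note that $S_j = \bigcup_{m(n,i,j)\le \ell\le M(n,i,j)} S_{j,\ell}$ covers all of $S_j$ and no out-of-range $\ell$ arises.
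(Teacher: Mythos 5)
Your proposal is correct and follows essentially the same route as the paper: you identify the fiber over a point $\x\in W_\e-\sing(V)$ with the set of $\g$ annihilating $\Pi_\e$ of a tangent space (the content of the paper's Lemma~\ref{lemma:sum}), and then combine the fiber bound $i-\ell$ with the inequality $\dim(S_{j,\ell})\le\ell$ from Lemma~\ref{lemma:a123}, handling the finite singular locus separately. The only difference is organizational --- the paper runs the dimension count per irreducible component of $X\cap\gamma^{-1}(\e)$ via the sets $T'_\ell$ and the fiber-dimension theorem, whereas you bound each stratum's preimage by base plus fiber dimension and take the finite union --- which is an immaterial variation of the same argument.
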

\noindent Let finally $Y \subset \C^i\times\C^{ni}$ be the Zariski closure of
the set of all $(\g,\e) \in \C^i\times\C^{ni}$ such that the fiber $X
\cap \alpha^{-1}(\g,\e)$ is infinite. Lemma~\ref{prop:dimxe} is the
key to the following result.
\begin{lemma}\label{lemma:13}
  The set $Y$ is a strict algebraic subset of $\C^i\times\C^{ni}$ and for $(\g,\e)$ in $\C^i\times\C^{ni}-Y$, $\crit(\rho_\g \circ
  \Pi_\e, W_\e)$ is finite.
\end{lemma}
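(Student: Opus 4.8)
The strategy is the classical "fiber dimension" argument, using Lemma~\ref{prop:dimxe} as the quantitative input. First I would relate the set $\crit(\rho_\g \circ \Pi_\e, W_\e)$ to the fibers of $\alpha$. Writing $\psi = \rho_\g \circ \Pi_\e$, a regular point $\x$ of $W_\e$ is critical for $\psi|_{W_\e}$ exactly when the differential of $\psi$ vanishes on $T_\x W_\e$; since (under $\a_1(\e)$) $W_\e - \sing(V)$ is cut out smoothly by $(f, M_{i+1}, \dots, M_n)$ with Jacobian $J$ of full rank $n-i+1$, this critical condition is equivalent to the drop of rank of the augmented matrix $J'$ obtained by adjoining the row $\r$ — that is, to the vanishing of all its $(n+2-i)$-minors. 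Adding $\sing(V)$ (finite, by $\AS$) to $\crit(\rho_\g\circ\Pi_\e,W_\e)$ does not change finiteness. Hence, for $\e$ such that $\a_1(\e)$, $\a_2(\e)$, $\a_3(\e)$ hold, the set $\crit(\rho_\g\circ\Pi_\e,W_\e)$ is finite if and only if the fiber $X \cap \alpha^{-1}(\g,\e)$ is finite (up to the finite set $\sing(V)$); in particular, for such $\e$, $(\g,\e)\notin Y$ implies the conclusion.

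Next I would prove that $Y$ is a strict algebraic subset. It is algebraic (Zariski closed) by construction. To see it is strict, I would use Lemma~\ref{prop:dimxe} together with the theorem on dimension of fibers. Pick a generic $\e_0$ so that $\a_1(\e_0)$, $\a_2(\e_0)$, $\a_3(\e_0)$ hold (possible by Lemma~\ref{lemma:a123}). By Lemma~\ref{prop:dimxe}, the fiber $X\cap\gamma^{-1}(\e_0)$ has dimension at most $i$. Now restrict attention to the slice $X_{\e_0} = X \cap \gamma^{-1}(\e_0) \subset \{\e_0\}\times\C^i\times\C^n$, and consider the projection $\pi_{\g}: (\g,\x)\mapsto \g$ from $X_{\e_0}$ to $\C^i$. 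Since $\dim X_{\e_0} \le i$, the generic fiber of $\pi_\g$ has dimension $\le i - \dim(\overline{\pi_\g(X_{\e_0})})$; if $\pi_\g$ is dominant this is $\le 0$, i.e.\ finite, and if it is not dominant then the generic $\g$ is not in the image and the fiber is empty. Either way, for generic $\g$ the fiber $X\cap\alpha^{-1}(\g,\e_0)$ is finite, so $(\g,\e_0)\notin Y$. This shows $Y \ne \C^i\times\C^{ni}$.

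Finally I would tie the two halves together. The set of $\e$ for which $\a_1,\a_2,\a_3$ hold is Zariski-open and dense (Lemma~\ref{lemma:a123}); intersecting its preimage under $(\g,\e)\mapsto\e$ with the complement of $Y$ gives a nonempty Zariski-open set on which both the equivalence "$\crit(\rho_\g\circ\Pi_\e,W_\e)$ finite $\iff$ $X\cap\alpha^{-1}(\g,\e)$ finite" is valid and the fiber is finite. Hence on $\C^i\times\C^{ni} - Y$ (after possibly enlarging $Y$ by the strict closed set where $\a_1,\a_2,\a_3$ fail, which does not spoil strictness) the polar locus $\crit(\rho_\g\circ\Pi_\e,W_\e)$ is finite, as claimed. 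The main obstacle is the first step: making the matrix-rank reformulation of the critical condition precise and checking it is unaffected by the finitely many points of $\sing(V)$ and by the points of $W_\e$ where $\Pi_\e$ drops rank unexpectedly — i.e.\ verifying that the scheme cut out by $(f,M_{i+1},\dots,M_n)$ and the minors of $J'$ really does capture $\crit(\rho_\g\circ\Pi_\e,W_\e)$ set-theoretically, which is where $\a_1(\e)$ is used in an essential way. The dimension-counting in the second and third steps is then routine.
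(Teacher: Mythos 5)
Your reduction of the problem to counting points in fibers of $\alpha$ is essentially the paper's: under $\a_1(\e)$, the rank condition on the augmented matrix $J'$ identifies the critical points of $\rho_\g\circ\Pi_\e$ on $\reg(W_\e)$ with the points of the fiber $X\cap\alpha^{-1}(\g,\e)$ lying over $\reg(W_\e)$ (this is exactly Lemma~\ref{lemma:sum}), and your device of enlarging $Y$ by the strict closed locus where $\a_1,\a_2,\a_3$ fail is also what the paper implicitly does (its argument for the second assertion only treats $(\g,\e)\notin Y$ with $\e$ in the good set $\mathcal{F}$). The genuine problem is in your proof that $Y$ is strict. You produce, for a generic $\e_0$ and then a generic $\g$, a finite fiber $X\cap\alpha^{-1}(\g,\e_0)$, and conclude ``so $(\g,\e_0)\notin Y$''. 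That inference is invalid: $Y$ is the \emph{Zariski closure} of the infinite-fiber locus, so it may well contain points whose own fiber is finite or empty. Exhibiting finite-fiber points, even the family $\{(\g,\e_0):\e_0\in\mathcal{F},\ \g\in U_{\e_0}\}$ with $U_{\e_0}\subset\C^i$ open but depending on $\e_0$ without any uniformity, does not by itself prevent the infinite-fiber locus from being Zariski dense; you need control on that locus itself, not on sample fibers. The missing ingredient is either constructibility (Chevalley's theorem together with upper semicontinuity of fiber dimension, after which a dense infinite-fiber locus would contain a dense open set and contradict your slice computation), or a componentwise argument.

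The paper supplies precisely this componentwise argument: it partitions the irreducible components $X'$ of $X$ into those whose $\gamma$-image misses $\mathcal{F}$ (then $\overline{\alpha(X')}$ is already a strict closed subset of $\C^i\times\C^{ni}$), those with $\alpha(X')$ not dense (same conclusion by construction), and those with $\alpha(X')$ dense and $\gamma(X')$ meeting $\mathcal{F}$; in that last case Lemma~\ref{prop:dimxe} and the theorem on the dimension of fibers force $\dim(X')\le i+ni$, so $\alpha|_{X'}$ has generically finite fibers and its infinite-fiber locus is confined to a hypersurface. The union of these strict closed sets contains the infinite-fiber locus, hence contains $Y$, which is what strictness requires. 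Your slice-and-project idea uses the same two inputs (Lemma~\ref{prop:dimxe} plus fiber dimension) and can be repaired along either of the lines above, but as written the step ``finite fiber at $(\g,\e_0)$ implies $(\g,\e_0)\notin Y$'' fails, and with it the proof that $Y$ is strict.
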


For any invertible $i\times i$ matrix $\mM$, the defining equations of
$X$ are multiplied by a non-zero constant through the change of
variables $({\bf G},\E,\X) \mapsto (\mM^{-1}{\bf G},\mM\E,\X)$, so $X$ is
stabilized by this action. Thus, a point $(\g,\e)$ in $\C^i\times
\C^{ni}$ belongs to $Y$ if and only if $(\mM^{-1}\g,\mM\e)$ does. One
deduces that all points of $Y$ locally look the same: since there
exists a point $(\g,\e)$ not in $Y$, and since $Y$ is closed, there
exists an open set $A \subset \{\g_0 \} \times \C^{ni}$ such that $A
\cap Y =\emptyset$, with $\g_0=(1,0,\dots,0)$; this is what we wanted.

\bibliographystyle{plain}
\bibliography{theo1}

\begin{thebibliography}{10}

\bibitem{AlBaOt01}
A.~Alzati, E.~Ballico, and G.~Ottaviani.
\newblock The theorem of {M}ather on generic projections for singular
  varieties.
\newblock {\em Geom. Dedicata}, 85(1-3):113--117, 2001.

\bibitem{AlOt92}
A.~Alzati and G.~Ottaviani.
\newblock The theorem of {M}ather on generic projections in the setting of
  algebraic geometry.
\newblock {\em Manuscripta Math.}, 74(4):391--412, 1992.

\bibitem{BaGiHeMb97}
B.~Bank, M.~Giusti, J.~Heintz, and G.-M. Mbakop.
\newblock Polar varieties and efficient real equation solving: the hypersurface
  case.
\newblock {\em Journal of Complexity}, 13(1):5--27, 1997.

\bibitem{BaGiHeMb01}
B.~Bank, M.~Giusti, J.~Heintz, and G.-M. Mbakop.
\newblock Polar varieties and efficient real elimination.
\newblock {\em Mathematische Zeitschrift}, 238(1):115--144, 2001.

\bibitem{BaPoRo96}
S.~Basu, R.~Pollack, and M.-F. Roy.
\newblock Computing roadmaps of semi-algebraic sets (extended abstract).
\newblock In {\em STOC}, pages 168--173. ACM, 1996.

\bibitem{BPRRoadmap}
S.~Basu, R.~Pollack, and M.-F. Roy.
\newblock Computing roadmaps of semi-algebraic sets on a variety.
\newblock {\em Journal of the AMS}, 3(1):55--82, 1999.

\bibitem{BaPoRo06}
S.~Basu, R.~Pollack, and M.-F. Roy.
\newblock {\em Algorithms in real algebraic geometry}, volume~10 of {\em
  Algorithms and Computation in Mathematics}.
\newblock Springer-Verlag, second edition, 2006.

\bibitem{Boardman67}
J.~M. Boardman.
\newblock Singularities of differentiable maps.
\newblock {\em Publ. Math. Inst. Hautes {\'E}tudes Sci.}, 33:21--57, 1967.

\bibitem{BrDaTrOk07}
J.-P. Brasselet, J.~Damon, {L.~D}. Trang, and M.~Oka, editors.
\newblock {\em Singularities in geometry and topology}.
\newblock World Scientific, 2007.

\bibitem{CannyThese}
J.~Canny.
\newblock {\em The complexity of robot motion planning}.
\newblock PhD thesis, MIT, 1987.

\bibitem{Canny}
J.~Canny.
\newblock Computing roadmaps in general semi-algebraic sets.
\newblock {\em The Computer Journal}, 36(5):504--514, 1993.

\bibitem{D5}
J.~{Della Dora}, C.~Discrescenzo, and D.~Duval.
\newblock About a new method method for computing in algebraic number fields.
\newblock In {\em EUROCAL 85 Vol. 2}, volume 204 of {\em LNCS}. Springer, 1985.

\bibitem{Eisenbud95}
D.~Eisenbud.
\newblock {\em Commutative algebra with a view toward algebraic geometry},
  volume 150 of {\em Graduate Texts in Mathematics}.
\newblock Springer-Verlag, 1995.

\bibitem{FiGiSm95}
N.~Fitchas, M.~Giusti, and F.~Smietanski.
\newblock Sur la complexit{\'e} du th{\'e}or{\`e}me des z{\'e}ros.
\newblock In {\em Approximation and Optimization in the Caribbean II}, volume~8
  of {\em Approximation and Optimization}, pages 247--329. Verlag Peter Lang,
  1995.

\bibitem{GiHe93}
M.~Giusti and J.~Heintz.
\newblock La d{\'e}termination des points isol{\'e}s et de la dimension d'une
  vari{\'e}t{\'e} alg{\'e}brique peut se faire en temps polynomial.
\newblock In {\em Computational Algebraic Geometry and Commutative Algebra},
  volume XXXIV of {\em Symposia Matematica}, pages 216--256. Cambridge
  University Press, 1993.

\bibitem{GRRoadmap}
L.~Gournay and J.-J. Risler.
\newblock Construction of roadmaps in semi-algebraic sets.
\newblock {\em Appl. Alg. Eng. Comm. Comp.}, 4(4):239--252, 1993.

\bibitem{Heintz83}
J.~Heintz.
\newblock Definability and fast quantifier elimination in algebraically closed
  fields.
\newblock {\em Theoret. Comput. Sci.}, 24(3):239--277, 1983.

\bibitem{HRSRoadmap}
J.~Heintz, M.-F. Roy, and P.~Solerno.
\newblock Single exponential path finding in semi-algebraic sets {II}: The
  general case.
\newblock In {\em Algebraic geometry and its applications, collections of
  papers from Abhyankar's 60-th birthday conference}. Purdue University,
  West-Lafayette, 1994.

\bibitem{HeSc80}
J.~Heintz and C.~P. Schnorr.
\newblock Testing polynomials which are easy to compute (extended abstract).
\newblock In {\em STOC}, pages 262--272. ACM, 1980.

\bibitem{Lecerf00}
G.~Lecerf.
\newblock Computing an equidimensional decomposition of an algebraic variety by
  means of geometric resolutions.
\newblock In {\em ISSAC'00}, pages 209--216. ACM, 2000.

\bibitem{Mather73}
J.~N. Mather.
\newblock Generic projections.
\newblock {\em Ann. of Math.}, 98:226--245, 1973.

\bibitem{Mumford76}
D.~Mumford.
\newblock {\em Algebraic Geometry I, Complex projective varieties}.
\newblock Classics in Mathematics. Springer Verlag, 1976.

\bibitem{Rouillier99}
F.~Rouillier.
\newblock Solving zero-dimensional systems through the {R}ational {U}nivariate
  {R}epresentation.
\newblock {\em Applicable Algebra in Engineering, Communication and Computing},
  9(5):433--461, 1999.

\bibitem{SaSc03}
M.~{Safey el Din} and \'E. Schost.
\newblock Polar varieties and computation of one point in each connected
  component of a smooth real algebraic set.
\newblock In {\em ISSAC'03}, pages 224--231. ACM, 2003.

\bibitem{Schost03}
{\'E}.~Schost.
\newblock Computing parametric geometric resolutions.
\newblock {\em Appl. Algebra Engrg. Comm. Comput.}, 13(5):349--393, 2003.

\bibitem{Sharir}
J.~Schwarz and M.~Sharir.
\newblock On the piano mover's problem {II}: General techniques for computing
  topological properties of real algebraic manifolds.
\newblock {\em Adv. Appl. Math.}, 4:298--351, 1983.

\bibitem{Shafarevich77}
I.~Shafarevich.
\newblock {\em Basic Algebraic Geometry 1}.
\newblock Springer Verlag, 1977.

\bibitem{Thom55}
R.~Thom.
\newblock Les singularit{\'e}s des applications diff{\'e}rentiables.
\newblock {\em Ann. Inst. Fourier}, 6:43--87, 1955-56.

\bibitem{Todd37}
{J. A.} Todd.
\newblock The arithmetical invariants of algebraic loci.
\newblock {\em Proc. Lond. Math. Soc.}, 43:190--225, 1937.

\bibitem{ZaSa58}
O.~Zariski and P.~Samuel.
\newblock {\em Commutative algebra}.
\newblock Van Nostrand, 1958.

\end{thebibliography}

\newpage

%%%%%%%%%%%%%%%%%%%%%%%%%%%%%%%%%%%%%%%%%%%%%%%%%%%%%%%%%%%%
%%%%%%%%%%%%%%%%%%%%%%%%%%%%%%%%%%%%%%%%%%%%%%%%%%%%%%%%%%%%
%%%%%%%%%%%%%%%%%%%%%%%%%%%%%%%%%%%%%%%%%%%%%%%%%%%%%%%%%%%%

\section*{Appendix}

We give the proofs of several of the results announced before; we
usually do not repeat the necessary definitions, so we indicate to
which page the reader should refer. We mostly follow the order in
which the statements are made in the text; in a few cases, we modify
the order to avoid excessive cross-referencing.

%%%%%%%%%%%%%%%%%%%%%%%%%%%%%%%%%%%%%%%%%%%%%%%%%%%%%%%%%%%%

\subsection*{Completion of the proof of Theorem~\ref{theo:big} on page~\pageref{theo:big}}

\begin{lemma}
  Let $\F=(f_1,\dots,f_p)$ be a system that satisfies assumption
  $\AS$, and let $i \le n-p$. For all $\x=(x_1,\dots,x_{i-1})$ in
  $\R^{i-1}$, the system
  $\f_x=(f_1,\dots,f_p,X_1-x_1,\dots,X_{i-1}-x_{i-1})$ satisfies the
  following properties:
  \begin{itemize}
  \item the ideal $I_\x=\langle \f_\x \rangle$ is radical;
\item the variety $V_\x$ it defines is equidimensional of dimension
  $n-p-(i-1)$;
\item $\sing(V_\x)$ is finite;
\item $V_\x \cap \R^n$ is bounded.
\end{itemize}
Besides, $V_\x$ intersects $W_i=\crit(\Pi_i,V(\F))$ in finitely many points.
\end{lemma}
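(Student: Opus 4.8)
The plan is to derive all five properties from a single finiteness statement, namely the last one --- that $V_\x \cap W_i$ is finite --- and then to invoke standard commutative algebra (Krull's principal ideal theorem, Cohen--Macaulayness of complete intersections, Serre's $R_0 + S_1$ criterion, and the Jacobian criterion). Throughout, write $L = \Pi_{i-1}^{-1}(\x) = \{X_1 = x_1, \dots, X_{i-1} = x_{i-1}\}$, so that $V_\x = V \cap L = V(\f_\x)$ as a set and $\f_\x$ is a system of $p + (i-1)$ polynomials. First I would note that, since $W_i \subseteq V$, one has $V_\x \cap W_i = W_i \cap L$, a fibre of the restriction $\Pi_{i-1}\colon W_i \to \C^{i-1}$; by assumption $\BS$, $W_i$ is in Noether position for $\Pi_{i-1}$, so this restriction is a finite morphism and all its fibres are finite. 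This settles the last assertion, for every $\x$ in $\C^{i-1}$ (not only $\R^{i-1}$).

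Next I would pin down where $\f_\x$ misbehaves. Let $\Sigma_\x$ be the closed subset of $V(\f_\x)$ at which the Jacobian matrix $\jac(\f_\x)$ --- whose rows are $\grad(f_1), \dots, \grad(f_p)$ together with the gradients of $X_1 - x_1, \dots, X_{i-1} - x_{i-1}$, i.e. the first $i-1$ standard basis vectors --- has rank $< p + (i-1)$. A direct linear-algebra computation shows that, for $\y \in V(\f_\x)$, the rank equals $p + (i-1)$ precisely when $\y \in \reg(V)$ and $T_\y V + \ker \Pi_{i-1} = \C^n$, the latter being equivalent to $\Pi_{i-1}(T_\y V) = \C^{i-1}$, i.e. $\y \notin w_{i-1}$. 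Hence $\Sigma_\x = (\sing(V) \cup w_{i-1}) \cap V_\x$; since $w_{i-1} \subseteq w_i$ and $W_i = w_i \cup \sing(V)$, this is contained in $V_\x \cap W_i$, and so $\Sigma_\x$ is finite by the previous step.

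The four remaining properties then follow mechanically. By Krull's theorem every irreducible component of $V(\f_\x)$ has dimension $\ge n - (p + i - 1) = d - i + 1 \ge 1$, whereas by the Jacobian criterion $V(\f_\x)$ is smooth of dimension exactly $d - i + 1$ at every point of $V(\f_\x) \setminus \Sigma_\x$; since $\Sigma_\x$ is finite, each component --- having positive dimension --- meets this smooth locus, so $V(\f_\x)$ is equidimensional of dimension $d - i + 1$. Because the polynomials $\f_\x$ cut $\C^n$ down to the expected dimension they form a regular sequence, so $\C[X_1, \dots, X_n]/\langle \f_\x \rangle$ is Cohen--Macaulay, in particular $S_1$; and it is $R_0$ because the generic point of each component lies outside the finite set $\Sigma_\x$, where the local ring is regular, hence reduced. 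Serre's criterion then yields that $\langle \f_\x \rangle$ is radical. Once $V_\x$ is known to be reduced and equidimensional, the Jacobian criterion identifies $\sing(V_\x)$ with $\Sigma_\x$, so $\sing(V_\x)$ is finite; and $V_\x \cap \R^n \subseteq V \cap \R^n$ is bounded by $\AS$ (and empty when $\x \notin \R^{i-1}$).

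The one genuinely delicate point is the first step, and it is exactly there that assumption $\BS$, and not merely $\AS$, enters: without the Noether position of $W_i$ for $\Pi_{i-1}$, the fibre dimension of $\Pi_{i-1}|_{W_i}$ may jump over special values of $\x$, and one can in fact construct $V$ satisfying $\AS$ (smooth, with $V \cap \R^n$ compact) for which $V_\x \cap W_i$ is positive-dimensional for some $\x$. So we really do use that we are working under the full hypotheses of Theorem~\ref{theo:big}. Once that finiteness is secured, the remaining steps are routine bookkeeping with the Jacobian.
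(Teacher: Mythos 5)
Your argument is essentially the paper's own proof: finiteness of $V_\x\cap W_i$ from the Noether position of $W_i$ for $\Pi_{i-1}$, identification of the rank-drop locus of $\jac(\f_\x)$ with a subset of $W_i$ (your version, via $w_{i-1}\subseteq w_i$, is in fact stated more carefully than the paper's), then Krull plus the Jacobian criterion for equidimensionality, unmixedness/Cohen--Macaulayness plus generic regularity for radicality, and the trivial boundedness. Your closing remark that $\BS$ (not just $\AS$) is genuinely needed is also consistent with how the paper uses the lemma inside the proof of Theorem~\ref{theo:big}.

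The one step you omit, and which the paper does prove, is that $V_\x$ is \emph{non-empty}. Your chain ``each component has dimension $\ge d-i+1$ by Krull, and meets the smooth locus, hence equidimensional of dimension $d-i+1$'' is vacuous when there are no components, and the stated conclusion that $V_\x$ is equidimensional \emph{of dimension} $n-p-(i-1)$ (which is later used to compute $\dim(\mathscr{C}')$ and hence $\dim(\mathscr{R})=d'$) fails for the empty set. This is where the other half of $\BS$ enters: by $\BS(a)$, $V$ is in Noether position for $\Pi_d$, so $\Pi_d$ (hence a fortiori $\Pi_{i-1}$) is surjective on $V$ and every fiber $\Pi_{i-1}^{-1}(\x)\cap V$ is non-empty. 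So your claim that the fiber-finiteness of $\Pi_{i-1}|_{W_i}$ is the \emph{only} place $\BS$ is used is not quite right; adding this one sentence closes the gap, and the rest of your argument stands as written.
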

\begin{proof}
  Remark that $V_\x$ is either empty or of dimension at least
  $n-p-(i-1)$, by Krull's theorem. Let us show that it is not empty:
  since $V=V(\F)$ is in Noether position for $\Pi_d$, for any
  $\x'=(x_1,\dots,x_d)$, $\Pi_d^{-1}(\x')\cap V$ is not empty. A
  fortiori, $\Pi_{i-1}^{-1}(\x)\cap V$ is not empty, and thus all
  irreducible components of $V_\x$ have dimension at least
  $n-p-(i-1)$.

  Let $\y$ be in $V_\x$. By construction, the Jacobian of
  $(f_1,\dots,f_p,X_1-x_1,\dots,X_{i-1}-x_{i-1})$ has full rank if and
  only if $\y$ is in $W_i=W_i \cup \sing(V)$. However, since $W_i$
  is in Noether position for $\Pi_{i-1}$, $W_i \cap
  \Pi_{i-1}^{-1}(\x)$ is finite (which gives the last
  assertion). Since $\sing(V)$ is finite as well, and since $n-p-(i-1)
  \ge 1$, each irreducible component of $V_\x$ contains a point $\y$
  where the former Jacobian matrix has full rank. Consequently, we
  deduce that each irreducible component of $I_\x$ has dimension
  $n-p-(i-1)$ (by the Jacobian criterion) and that $I_\x$ is radical
  (by Macaulay's unmixedness theorem). We have thus established the
  first two points.

  As a consequence, the singular points of $V_\x$ are the points where
  the rank of the former Jacobian drops; as we have seen, they are in
  $W_i \cap \Pi_{i-1}^{-1}(\x)$, and thus in finite number. This
  gives the third point. The next point is obvious, since $V_\x \cap
  \R^n \subset V \cap \R^n$, and the latter is bounded.
\end{proof}

We can now complete the proof of Theorem~\ref{theo:big}. We start by
proving that $\mathscr{C}'\cup W_i$ is a $d'$-roadmap of
$(V,\mathscr{P})$. The connectivity property ${\rm RM}'_1$ is
established in Section~\ref{sec:proofbig}. Property ${\rm RM}'_2$ is
clear from the construction. Next, the dimension of $W_i$ is at most
$i-1$ by point $(b)$ of $\BS$. We have seen in the previous lemma that
all fibers $\Pi_{i-1}^{-1}\cap V$ have dimension $n-p-(i-1)$; because
$\mathscr{C}$ is finite by assumption $\BS$, this implies that
$\dim(\mathscr{C}') = n-p-(i-1)$, and thus that $\dim(\mathscr{R}) =
d'$. Thus, we have ${\rm RM}'_3$.  Finally, by construction,
$\mathscr{P}$ is contained in $\mathscr{C}$, so obtain ${\rm RM}'_4$.

The last propriety we need is that $\mathscr{C}'\cap W_i$ has
dimension at most zero: this is the last assertion of the previous
lemma.

%%%%%%%%%%%%%%%%%%%%%%%%%%%%%%%%%%%%%%%%%%%%%%%%%%%%%%%%%%%%

\subsection*{Proof of Lemma~\ref{lemma:glue} on page~\pageref{lemma:glue}}

We prove the following claim: {\em Suppose that $\mathscr{R}_1 \cup
  \mathscr{R}_2$ is a $j$-roadmap of $(V,\mathscr{P})$, with
  $\mathscr{R}_1 \cap \mathscr{R}_2$ finite. Let $\mathscr{R}'_1$ and
  $\mathscr{R}'_2$ be roadmaps of respectively $(\mathscr{R}_1,
  (\mathscr{R}_1 \cap \mathscr{R}_2) \cup \mathscr{P})$ and
  $(\mathscr{R}_2, (\mathscr{R}_1 \cap \mathscr{R}_2) \cup
  \mathscr{P})$. Then $\mathscr{R}'_1 \cup \mathscr{R}'_2$ is a
  roadmap of $(V,\mathscr{P})$.}

\begin{lemma}\label{lem:roadCC}
  If $\mathscr{R}$ is an $i$-roadmap of $V$, then for each connected
  component $C$ of $V \cap \R^n$, $C \cap \mathscr{R}$ is a connected
  component of $\mathscr{R}\cap\R^n$.
\end{lemma}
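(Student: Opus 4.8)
The plan is to deduce the statement directly from the three defining properties of an $i$-roadmap, together with the elementary topological fact that the connected components of a real algebraic (indeed semi-algebraic) set are open and closed in it.

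First I would fix the set-theoretic picture. By property ${\rm RM}'_2$ we have $\mathscr{R}\subset V$, hence $\mathscr{R}\cap\R^n\subset V\cap\R^n$, and $\mathscr{R}\cap\R^n$ carries the subspace topology inherited from $V\cap\R^n$. Writing $V\cap\R^n$ as the disjoint union of its connected components $C$, we obtain $\mathscr{R}\cap\R^n=\bigcup_C (C\cap\mathscr{R})$, a disjoint union. Now fix one such component $C$. Since $V\cap\R^n$ is a semi-algebraic set, it is locally connected and has finitely many connected components, each of which is open and closed in $V\cap\R^n$; in particular $C$ is clopen in $V\cap\R^n$. Intersecting with $\mathscr{R}\cap\R^n$, the set $C\cap\mathscr{R}$ is therefore clopen in $\mathscr{R}\cap\R^n$, and by property ${\rm RM}'_1$ it is non-empty and connected.

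To finish, I would invoke the general principle that a non-empty connected clopen subset of a topological space is a connected component: $C\cap\mathscr{R}$, being connected, is contained in some connected component $K$ of $\mathscr{R}\cap\R^n$; being clopen in $\mathscr{R}\cap\R^n$ it is clopen in $K$; and since $K$ is connected while $C\cap\mathscr{R}$ is a non-empty clopen subset of it, we get $C\cap\mathscr{R}=K$. This is exactly the desired conclusion.

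I do not expect a genuine obstacle here; the only point requiring a little care is the appeal to the topology of real algebraic sets, namely that their connected components are clopen because such sets are locally connected (a consequence of their semi-algebraic structure, e.g. via triangulability). Note that property ${\rm RM}'_3$ on the dimension of $\mathscr{R}$ plays no role in this particular statement.
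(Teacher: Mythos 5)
Your argument is correct and is essentially the paper's own proof: the paper likewise notes that $C\cap\mathscr{R}$ is non-empty and connected by ${\rm RM}'_1$ and that $C$ is open and closed in $V\cap\R^n$, hence $C\cap\mathscr{R}$ is clopen in $\mathscr{R}\cap\R^n$, which forces it to be a connected component. Your write-up merely makes explicit the use of ${\rm RM}'_2$ and the clopen-plus-connected principle, which the paper leaves implicit.
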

\begin{proof}
  We know that $C \cap \mathscr{R}$ is connected. Besides, $C$ is both
  open and closed in $V \cap \R^n$, so that $C\cap \mathscr{R}$ is
  open and closed in $\mathscr{R}\cap \R^n$.
\end{proof}

\begin{lemma}\label{lem:prelim}
  If $\mathscr{R}$ is an $i$-roadmap of $V$ and if $\mathscr{R}'$ is a
  $j$-roadmap of $\mathscr{R}$ then $\mathscr{R}'$ is a $j$-roadmap of~$V$.
\end{lemma}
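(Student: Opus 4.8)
The plan is simply to check, one at a time, the three defining conditions of a $j$-roadmap of $V$ for the set $\mathscr{R}'$. Two of them come for free. Condition ${\rm RM}'_2$ (containment in $V$) follows by transitivity of inclusion from $\mathscr{R}' \subset \mathscr{R}$ (part of $\mathscr{R}'$ being a roadmap of $\mathscr{R}$) and $\mathscr{R} \subset V$ (part of $\mathscr{R}$ being a roadmap of $V$). Condition ${\rm RM}'_3$ ($\dim(\mathscr{R}') = j$) is literally the hypothesis that $\mathscr{R}'$ is a $j$-roadmap of $\mathscr{R}$. So all the content is in the connectivity condition ${\rm RM}'_1$.

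For ${\rm RM}'_1$, I would fix an arbitrary connected component $C$ of $V \cap \R^n$ and show that $C \cap \mathscr{R}' \cap \R^n$ is non-empty and connected. The idea is to pass through $\mathscr{R}$ by means of Lemma~\ref{lem:roadCC}: since $\mathscr{R}$ is an $i$-roadmap of $V$, that lemma tells us that $C \cap \mathscr{R}$ is a connected \emph{component} of $\mathscr{R} \cap \R^n$; call it $C'$, and note in particular $C' \ne \emptyset$. Now I invoke ${\rm RM}'_1$ for $\mathscr{R}'$ as a $j$-roadmap of $\mathscr{R}$, applied to the component $C'$: the intersection $C' \cap \mathscr{R}' \cap \R^n$ is non-empty and connected. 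It then remains to identify this with $C \cap \mathscr{R}' \cap \R^n$, which is immediate from $\mathscr{R}' \subset \mathscr{R}$: indeed $C' \cap \mathscr{R}' = (C \cap \mathscr{R}) \cap \mathscr{R}' = C \cap \mathscr{R}'$. Since $C$ was arbitrary, ${\rm RM}'_1$ holds for $\mathscr{R}'$ over $V$, completing the proof.

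I do not expect a genuine obstacle; the whole argument is a two-line chase once Lemma~\ref{lem:roadCC} is available. The one subtlety worth flagging is that the argument really does need the \emph{stronger} conclusion of Lemma~\ref{lem:roadCC} --- that $C \cap \mathscr{R}$ is a connected \emph{component} of $\mathscr{R} \cap \R^n$, and not merely a connected subset --- because otherwise it could not be fed into condition ${\rm RM}'_1$ for the pair $(\mathscr{R}', \mathscr{R})$, whose hypothesis quantifies over connected components of $\mathscr{R} \cap \R^n$. This is precisely why that lemma was isolated and proved first.
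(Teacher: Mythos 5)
Your argument is correct and follows essentially the same route as the paper: both reduce everything to the connectivity condition, apply Lemma~\ref{lem:roadCC} to see that $C\cap\mathscr{R}$ is a connected component of $\mathscr{R}\cap\R^n$, then feed that component into the roadmap property of $\mathscr{R}'$ over $\mathscr{R}$ and use $\mathscr{R}'\subset\mathscr{R}$ to identify $C\cap\mathscr{R}\cap\mathscr{R}'$ with $C\cap\mathscr{R}'$. Your remark about needing the ``connected component'' (not merely connected subset) conclusion of Lemma~\ref{lem:roadCC} is exactly the point the paper's proof relies on as well.
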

\begin{proof}
  Since the dimension of $\mathscr{R}'$ is $j$, it is sufficient to
  prove that for each connected component $C$ of $V \cap \R^n$, $C
  \cap \mathscr{R}'$ is non empty and connected.  Since $\mathscr{R}$
  is a roadmap of $V$, $C\cap \mathscr{R}$ is a connected component of
  $\mathscr{R}\cap\R^n$ (Lemma~\ref{lem:roadCC}). Since $\mathscr{R}'$
  is a roadmap of $\mathscr{R}$, $C\cap
  \mathscr{R}\cap\mathscr{R}'=C\cap \mathscr{R}'$ is a connected
  component of $\mathscr{R}'\cap\R^n$.
\end{proof}

\noindent We can now prove our claim. By Lemma \ref{lem:prelim}, it is
sufficient to prove that $\mathscr{R}'_1\cup\mathscr{R}'_2$ is a
roadmap of $\mathscr{R}_1\cup\mathscr{R}_2$. Let $C$ be a connected
component of $\mathscr{R}_1 \cup \mathscr{R}_2$. First, we prove that
$C\cap (\mathscr{R}'_1\cup\mathscr{R}'_2)$ is not empty. Indeed, $C$
contains a connected component of either $\mathscr{R}_1$ or
$\mathscr{R}_2$ (since it contains a point of say $\mathscr{R}_1$, it
contains its connected component); and as such, $C$ intersects either
$\mathscr{R}'_1$ or $\mathscr{R}'_2$.

We prove now that $C\cap (\mathscr{R}'_1\cup\mathscr{R}'_2)$ is
connected. Consider a couple of points $\x, \y$ in $C\cap
(\mathscr{R}'_1\cup\mathscr{R}'_2 )$. Since $C$ is connected, there
exists a continuous path $\gamma:[0, 1]\rightarrow C$ such that
$\gamma(0)=\x$ and $\gamma(1)=\y$.  Since
$\mathscr{R}_1\cap\mathscr{R}_2$ is finite, we can reparametrize
$\gamma$, to ensure that $\gamma^{-1}(\mathscr{R}_1\cap\mathscr{R}_2)$
is finite. Denote by $t_1<\cdots< t_r$ the set
$\gamma^{-1}(\mathscr{R}_1\cap\mathscr{R}_2)$ and let $t_0=0$ and
$t_{r+1}=1$.  Then, we replace $\gamma$ by a continuous path $\gamma'$
defined on the segments $[t_i,t_{i+1}]$ as follows:
  \begin{itemize}
  \item For $1 \le i < r$, $\gamma((t_i, t_{i+1}))$ is connected and
    contained in $\mathscr{R}_1\cup\mathscr{R}_2 -
    \mathscr{R}_1\cap\mathscr{R}_2$, so it is contained in (say)
    $\mathscr{R}_1$. By continuity, $\gamma([t_i, t_{i+1}])$ is
    contained in $\mathscr{R}_1$, and thus actually in a connected
    component $C_i$ of $\mathscr{R}_1$, with $C_i \subset C$.  Both
    $\gamma(t_i)$ and $\gamma(t_{i+1})$ are in
    $\mathscr{R}_1\cap\mathscr{R}_2$, and thus in
    $\mathscr{R}'_1\cap\mathscr{R}'_2$, and in particular in
    $\mathscr{R}'_1$. Since by definition $C_i \cap \mathscr{R}'_1$ is
    connected, there exists a continuous semi-algebraic path
    $\gamma':[t_i,t_{i+1}] \to C_i \cap \mathscr{R}'_1$ with
    $\gamma'(t_i)=\gamma(t_i)$ and $\gamma'(t_{i+1})=\gamma(t_{i+1})$.
 
  \item The case $i=0$ needs to be taken care of only if $t_0 < t_1$,
    so that $\x=\gamma(t_0)$ is either in $\mathscr{R}_1$ or in
    $\mathscr{R}_2$, but not in both.  As before, we start by
    remarking that $\gamma([t_0,t_1])$ is contained in a connected
    component $C_0$ of say $\mathscr{R}_1$, with $C_0 \subset C$. This
    implies that $\x=\gamma(t_0)$ is in $\mathscr{R}_1$; since $\x$ is
    in $\mathscr{R}'_1\cup\mathscr{R}'_2$, it is actually in
    $\mathscr{R}'_1$. As before, $\gamma(t_1)$ is in $\mathscr{R}'_1$,
    and the conclusion follows as in the previous case. The case $i=r$
    is dealt with similarly.
  \end{itemize}

%%%%%%%%%%%%%%%%%%%%%%%%%%%%%%%%%%%%%%%%%%%%%%%%%%%%%%%%%%%%

\subsection*{Proof of Lemma~\ref{prop:wibar} on page~\pageref{prop:wibar}} 

The following lemma is similar to Proposition~7.3 in~\cite{BaPoRo06};
the proof is a consequence of the semi-algebraic implicit function
theorem. Hereafter, the closure notation $\overline B$ refers to the
closure for the Euclidean topology.
\begin{lemma}\label{7.3}
  Let $\x$ be in $Z\cap \R^n-W_1$ and let $x_1=\Pi_1(\x)$. There
  exists an open, semi-algebraic, connected neighborhood $X(\x)$ of
  $\x$ such that $X(\x) \cap Z_{< x_1}$ is non-empty and connected,
  and $X(\x) \cap Z_{x_1}$ is contained in $\overline {X(\x) \cap Z_{<
      x_1}}$.
\end{lemma}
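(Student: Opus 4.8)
The plan is to build $X(\x)$ from a local graph description of $Z\cap\R^n$ near $\x$. First I would unpack the hypothesis $\x\in Z\cap\R^n-W_1$. Since $\x\notin\sing(Z)$, there is a neighbourhood of $\x$ in $\R^n$ in which $Z\cap\R^n$ is a semi-algebraic $\mathcal{C}^1$ submanifold; write $d$ for its dimension, and recall that its tangent space at $\x$ is the set of real points of $T_\x Z$. Since $\x\notin\crit(\Pi_1,Z)$, the linear form $X_1$ is not identically $0$ on $T_\x Z$, i.e.\ $\Pi_1$ restricted to this submanifold is a submersion at $\x$; in particular $d\ge 1$.

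Next I would apply the semi-algebraic implicit function theorem (\cite[Ch.~3]{BaPoRo06}). After renumbering $X_2,\dots,X_n$ so that the differentials $dX_1,\dots,dX_d$ restrict to a basis of the cotangent space of $Z\cap\R^n$ at $\x$, this produces $\epsilon>0$, an open ball $B\subset\R^{d-1}$ centred at $(x_2,\dots,x_d)$, a continuous semi-algebraic map $\phi=(\phi_{d+1},\dots,\phi_n)\colon\Omega\to\R^{n-d}$ on $\Omega={]x_1-\epsilon,x_1+\epsilon[}\times B$ with $\phi(x_1,\dots,x_d)=(x_{d+1},\dots,x_n)$, and an open connected neighbourhood $N$ of $\x$ in $\R^n$, such that $N\cap Z\cap\R^n=\{(t,\phi(t))\mid t\in\Omega\}$. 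I then set $X(\x)=N$: it is open, semi-algebraic, connected, and a neighbourhood of $\x$.

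It remains to verify the three assertions, all of which follow from this graph identity and the continuity of $\phi$. From the identity, $X(\x)\cap Z_{<x_1}=\{(t,\phi(t))\mid t\in\Omega,\ t_1<x_1\}$ is the image of the non-empty connected set ${]x_1-\epsilon,x_1[}\times B$ under the continuous map $t\mapsto(t,\phi(t))$, hence non-empty and connected. Likewise any point of $X(\x)\cap Z_{x_1}$ has the form $\big((x_1,u),\phi(x_1,u)\big)$ with $u\in B$; for $k$ large the points $\big((x_1-1/k,u),\phi(x_1-1/k,u)\big)$ lie in $X(\x)\cap Z_{<x_1}$ and converge to it as $k\to\infty$, so $X(\x)\cap Z_{x_1}\subset\overline{X(\x)\cap Z_{<x_1}}$.

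The one point that deserves care is the step from ``$Z$ is smooth at $\x$ with $\Pi_1|_Z$ a submersion there'' to the clean identity $N\cap Z\cap\R^n=$ (graph of $\phi$ over the \emph{product} $\Omega$): one has to shrink the base to a box and simultaneously shrink the ambient neighbourhood $N$ so that no other local branch or component of $Z$ meets $N$. This is precisely what the semi-algebraic implicit function theorem delivers once a coordinate system $(X_1,\dots,X_d)$ trivialising $Z$ locally has been fixed, so no extra work is needed; everything else is bookkeeping.
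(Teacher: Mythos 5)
Your proof is correct and follows exactly the route the paper takes: the paper disposes of this lemma by citing Proposition~7.3 of \cite{BaPoRo06} and saying it is ``a consequence of the semi-algebraic implicit function theorem,'' which is precisely the local graph description over a box $(x_1-\epsilon,x_1+\epsilon)\times B$ that you construct (using that $\x\notin W_1$ gives smoothness of $Z\cap\R^n$ near $\x$ and submersivity of $\Pi_1$ there) and then exploit for the non-emptiness, connectedness and closure statements. The only implicit hypothesis, shared with the paper's statement, is that $Z$ is defined by real equations, so that the real trace near the regular point $\x$ is indeed a $d$-dimensional manifold with tangent space the real part of $T_\x Z$; granted that, your verification is complete.
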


\begin{lemma}\label{lemma:10}
  Let $\y$ be in $Z \cap \R^n-W_1$ and let $y_1=\Pi_1(\y)$. There
  exists a unique connected component $B(\y)$ of $Z_{<y_1}$ such that
  $X(\y) \cap Z_{<y_1} \subset B(\y)$. Besides, $B(\y)$ is the unique
  connected component of $Z_{<y_1}$ such that $\y$ is in
  $\overline{B(\y)}$.
\end{lemma}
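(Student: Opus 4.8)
The plan is to read off everything from Lemma~\ref{7.3}, which already produces a well-behaved Euclidean neighborhood $X(\y)$ of $\y$; the present statement is then a short topological deduction. First I would settle the existence and uniqueness of $B(\y)$: by Lemma~\ref{7.3} the set $X(\y)\cap Z_{<y_1}$ is non-empty and connected, hence contained in exactly one connected component of $Z_{<y_1}$, which I name $B(\y)$. Since a non-empty connected set lies in a unique component, $B(\y)$ is the only component with $X(\y)\cap Z_{<y_1}\subset B(\y)$.

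Next I would check $\y\in\overline{B(\y)}$. As $\y\in Z\cap\R^n$ and $\Pi_1(\y)=y_1$ we have $\y\in Z_{y_1}$, and $\y\in X(\y)$ because $X(\y)$ is a neighborhood of $\y$; so $\y\in X(\y)\cap Z_{y_1}$, which by the last clause of Lemma~\ref{7.3} is contained in $\overline{X(\y)\cap Z_{<y_1}}\subset\overline{B(\y)}$.

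For the uniqueness in the second assertion, I would take an arbitrary connected component $B'$ of $Z_{<y_1}$ with $\y\in\overline{B'}$. Since $X(\y)$ is an open neighborhood of $\y$ in $\R^n$, it meets $B'$; and because $B'\subset Z_{<y_1}$ we get $X(\y)\cap B'\subset X(\y)\cap Z_{<y_1}\subset B(\y)$. Hence $B(\y)\cap B'\supset X(\y)\cap B'\neq\emptyset$, and two connected components of the same space that intersect coincide, so $B'=B(\y)$. This completes the argument.

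The genuine work here is entirely upstream, in Lemma~\ref{7.3}, which rests on the semi-algebraic implicit function theorem applied near the non-critical point $\y\in Z\cap\R^n-W_1$. Granting that lemma, there is no real obstacle in the present proof; the one point deserving a line of care is the observation that $\y\in\overline{B'}$ forces the \emph{open} neighborhood $X(\y)$ to intersect $B'$, after which membership of that intersection in $B(\y)$ is automatic.
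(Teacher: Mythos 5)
Your proof is correct and follows essentially the same route as the paper: both arguments define $B(\y)$ as the unique component containing the non-empty connected set $X(\y)\cap Z_{<y_1}$, obtain $\y\in\overline{B(\y)}$ from the last clause of Lemma~\ref{7.3}, and rule out any other component $B'$ with $\y\in\overline{B'}$ by using the openness of $X(\y)$ to find a point of $B'$ inside $X(\y)\cap Z_{<y_1}\subset B(\y)$. No gaps; your write-up just spells out a couple of the paper's implicit steps.
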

\begin{proof}
  Because $X(\y) \cap Z_{<y_1}$ is non-empty and connected
  (Lemma~\ref{7.3}), it is contained in a connected component $B(\y)$
  of $Z_{<y_1}$. The connected components of $Z_{<y_1}$ are pairwise
  disjoint, so $B(\y)$ is well-defined. By Lemma~\ref{7.3} again, $\y$
  is in $\overline{X(\y) \cap Z_{<y_1}}$, and thus in
  $\overline{B(\y)}$. Suppose finally that $\y$ is in $\overline{B'}$,
  for another connected component $B'$ of $Z_{<y_1}$. Then, there
  exists a point of $B'$ in $X(\y)$, because $X(\y)$ is open. This
  point is in $X(\y) \cap Z_{<y_1}$, and thus in $B(\y)$ as well, a
  contradiction.
\end{proof}

\begin{lemma}\label{lemma:cont}
  Let $\y$ be in $Z \cap \R^n-W_1$ and let $y_1=\Pi_1(\y)$. For $\y'$
  in $X(\y) \cap (Z_{y_1}\cap \R^n-W_1)$, we have $B(\y')=B(\y)$.
\end{lemma}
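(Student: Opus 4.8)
The plan is to show that the connected component $B(\y')$ of $Z_{<y_1}$ attached to $\y'$ coincides with $B(\y)$. By Lemma~\ref{lemma:10}, $B(\y)$ is the unique connected component of $Z_{<y_1}$ whose closure contains $\y$, and similarly for $\y'$; so it suffices to exhibit a single connected component of $Z_{<y_1}$ whose closure contains both $\y$ and $\y'$, or more directly to show $\y' \in \overline{B(\y)}$.

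First I would use the neighborhood $X(\y)$ from Lemma~\ref{7.3}. Since $\y' \in X(\y)$ and $X(\y)$ is open, and since $X(\y) \cap Z_{<y_1}$ is non-empty and connected and contained in $B(\y)$, I claim $\y'$ lies in the Euclidean closure of $X(\y) \cap Z_{<y_1}$. Indeed, $\y'$ is in $Z_{y_1}$ and in $X(\y)$; by the last clause of Lemma~\ref{7.3}, $X(\y) \cap Z_{y_1} \subset \overline{X(\y) \cap Z_{<y_1}}$, so $\y' \in \overline{X(\y)\cap Z_{<y_1}} \subset \overline{B(\y)}$. Hence $\y'$ is in the closure of the connected component $B(\y)$ of $Z_{<y_1}$.

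On the other hand, since $\y'$ is in $Z\cap\R^n - W_1$ (by hypothesis, $\y' \notin W_1$, and $\y' \in Z\cap\R^n$ because $X(\y)\subset\R^n$ and $\y'\in Z_{y_1}$), Lemma~\ref{lemma:10} applies to $\y'$: there is a \emph{unique} connected component $B(\y')$ of $Z_{<y_1}$ — note $\Pi_1(\y') = y_1$ as well since $\y'\in Z_{y_1}$ — with $\y' \in \overline{B(\y')}$. Combining this uniqueness with the previous paragraph forces $B(\y') = B(\y)$, which is the assertion.

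I do not expect a serious obstacle here; the only point requiring care is bookkeeping the hypotheses so that Lemma~\ref{lemma:10} and Lemma~\ref{7.3} both genuinely apply to $\y'$ — in particular that $\Pi_1(\y') = y_1$ (so that ``$Z_{<y_1}$'' is the right fiber for both points) and that $\y' \notin W_1$, both of which are built into the statement ``$\y' \in X(\y)\cap(Z_{y_1}\cap\R^n - W_1)$''. Once these are in place, the argument is just: push $\y'$ into $\overline{B(\y)}$ using the local structure of $X(\y)$, then invoke the uniqueness in Lemma~\ref{lemma:10}.
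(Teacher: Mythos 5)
Your proof is correct and rests on the same two ingredients as the paper's (Lemma~\ref{7.3} and Lemma~\ref{lemma:10}), only in mirror image: you push $\y'$ into $\overline{B(\y)}$ via the clause $X(\y)\cap Z_{y_1}\subset\overline{X(\y)\cap Z_{<y_1}}$ and then invoke the uniqueness part of Lemma~\ref{lemma:10} at $\y'$, whereas the paper starts from $\y'\in\overline{B(\y')}$ and uses openness of $X(\y)$ to find a point of $B(\y')$ inside $X(\y)\cap Z_{<y_1}\subset B(\y)$. Both arguments are complete, and your bookkeeping of the hypotheses ($\Pi_1(\y')=y_1$ and $\y'\notin W_1$, so that Lemma~\ref{lemma:10} genuinely applies at $\y'$) is exactly the point that needed care.
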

\begin{proof} The reasoning is the same as in the previous lemma. We
  know that $\y'$ is in $\overline{B(\y')}$. Since $\y'$ is in $X(\y)$
  and $X(\y)$ is open, there exists a point of $B(\y')$ in $X(\y) \cap
  Z_{<y_1}$. This point is in $B(\y)$ as well, so $B(\y')=B(\y)$.
\end{proof}

\begin{lemma}\label{lemma:12}
  Let $x$ be in $\R$ and let $\gamma$ be a continuous semi-algebraic
  map $A \to Z_x-W_1$, where $A \subset\R^k$ is a connected
  set. Then, there exists a unique connected component $B$ of $Z_{<x}$
  such that for all $\a\in A$, $\gamma(\a) \in \overline{B}$.
\end{lemma}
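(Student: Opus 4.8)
The plan is to reduce the statement to a local-constancy argument. For each $\a\in A$, the point $\gamma(\a)$ lies in $Z_x-W_1\subseteq (Z\cap\R^n)-W_1$ and has first coordinate exactly $x$, so Lemma~\ref{lemma:10} applies with $\y=\gamma(\a)$ and $y_1=x$: it produces a connected component $\beta(\a):=B(\gamma(\a))$ of the \emph{fixed} set $Z_{<x}$, uniquely characterized as the one whose Euclidean closure contains $\gamma(\a)$. In particular this already forces uniqueness of the component $B$ in the conclusion: if $\gamma(\a)\in\overline{B'}$ for all $\a$ and some connected component $B'$ of $Z_{<x}$, then picking any $\a_0\in A$ and applying the uniqueness clause of Lemma~\ref{lemma:10} to $\gamma(\a_0)$ shows $B'=\beta(\a_0)$.

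The main work is to show that $\beta\colon A\to\{\text{connected components of }Z_{<x}\}$ is locally constant. I would fix $\a_0\in A$, set $\y=\gamma(\a_0)$, and take the open semi-algebraic neighborhood $X(\y)$ of $\y$ from Lemma~\ref{7.3}. Since $X(\y)$ is open in $\R^n$ and $\gamma$ is continuous, $U:=\gamma^{-1}(X(\y))$ is an open neighborhood of $\a_0$ in $A$. For every $\a\in U$ the point $\gamma(\a)$ then lies in $X(\y)\cap(Z_x-W_1)$, which is exactly the situation covered by Lemma~\ref{lemma:cont}; that lemma gives $B(\gamma(\a))=B(\y)$, i.e.\ $\beta$ is constant on $U$.

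To finish, I would use the hypothesis that $A$ is connected: a locally constant map on a connected set is constant, so $\beta$ takes a single value $B$, and then $\gamma(\a)\in\overline{B}$ for all $\a\in A$, as required. If one prefers to avoid invoking this principle abstractly, the sets $\beta^{-1}(B')$, as $B'$ ranges over the (finitely many, by the finiteness theorem for semi-algebraic sets) connected components of $Z_{<x}$, form a partition of $A$ into open subsets, so connectedness leaves exactly one of them nonempty.

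I do not anticipate a real obstacle: essentially all the geometric content is already packaged into Lemmas~\ref{7.3}, \ref{lemma:10} and \ref{lemma:cont}. The only points that need a moment's attention are (i) that the codomain of $\beta$ is the \emph{same} set of connected components for every $\a$ --- which is why it is important that $\gamma$ maps into the single fiber $Z_x$ rather than into fibers over varying values of $\Pi_1$ --- and (ii) the elementary observation that local constancy together with connectedness of $A$ yields constancy.
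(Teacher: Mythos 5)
Your proof is correct and follows essentially the same route as the paper: the map $\a\mapsto B(\gamma(\a))$ is locally constant by Lemma~\ref{lemma:cont} (you merely spell out the use of the open neighborhood $X(\y)$ and continuity of $\gamma$, which the paper leaves implicit), hence constant on the connected set $A$, with existence and uniqueness of $B$ coming from Lemma~\ref{lemma:10}. No gaps.
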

\begin{proof}
  By Lemma~\ref{lemma:cont}, the map $\a \mapsto B(\gamma(\a))$ is
  locally constant, so it is constant. So, with $B=B({\gamma(\a_0)})$,
  for some $\a_0$ in $A$, we have $B({\gamma(\a)})=B$ for all $\a$ in
  $A$, and thus $\gamma(\a) \in \overline{B}$ for all $\a$ by
  Lemma~\ref{lemma:10}. Uniqueness is a consequence of the second part
  of Lemma~\ref{lemma:10}.
\end{proof}

\noindent We can now prove Lemma~\ref{prop:wibar}.  Let thus $\gamma$
be a continuous semi-algebraic map $A\to Z_{\le x}-Z_x\cap W_1$, where
$A \subset \R^k$ is a connected semi-algebraic set; we prove that
$\gamma(A)$ is contained in the closure $\overline B$ of a connected
component $B$ of $Z_{<x}$. If $\gamma(A)$ is contained in $Z_{< x}$,
then, since it is connected, it is contained in a uniquely defined
connected component $B$ of $Z_{<x}$, and we are done.

Else, let $G=\gamma^{-1}(Z_x)$, which is closed in $A$. We decompose
it into its connected components $G_1,\dots,G_N$. Because all $G_i$
are closed in $G$, they are closed in $A$. Let also $H_1,\dots,H_M$ be
the connected components of $A-G$; hence, the $H_j$ are open in $A$
(because they are open in $A-G$, which is open in $A$).  The sets
$G_i$ and $H_j$ form a partition of $A$; we assign them some connected
components of $Z_{<x}$.
\begin{itemize}
\item Since $G_i$ is connected and $\gamma(G_i)$ is contained in
  $Z_x-W_1$, Lemma~\ref{lemma:12} shows that there exists a unique
  connected component $B(G_i)$ of $Z_{<x}$ such that for all $\g$ in
  $G_i$, $\gamma(\g) \in \overline{B(G_i)}$.
\item Since $H_j$ is connected and $\gamma(H_j)$ is contained in
  $Z_{<x}$, there exists a unique connected component $B(H_j)$ of
  $Z_{<x}$ that contains $\gamma(H_j)$. Since $\gamma$ is continuous,
  for all $\h$ in the closure $\overline{H_j}$ of $H_j$ in $A$, we
  still have $\gamma(\h) \in \overline{B(H_j)}$.
\end{itemize}
Since the sets $G_i$ and $H_j$ form a partition of~$A$, we deduce from
the previous construction a function $\a \mapsto B(\a)$ in the obvious
manner: if $\a$ is in $G_i$, we let $B(\a)=B(G_i)$; if $\a$ is in
$H_j$, we let $B(\a)=B(H_j)$. It remains to prove that this function
is constant on $G$; then, if we let $B$ be the common value $B(\a)$,
for all $\a$ in $G$, $\gamma(\a)$ is in $\overline{B}$ by construction
(uniqueness is clear).  To do so, it is sufficient to prove that for
any $\a$ in $A$, there exists a neighborhood $N(\a)$ of $\a$ such that
for all $\a'$ in $N(\a)$, $B(\a)=B(\a')$.
\begin{itemize}
\item If $\a$ is in some $H_j$, we are done, since $H_j$ is open, and
  $\a \mapsto B(\a)$ is constant on $H_j$.
\item Else, $\a$ is in some $G_i$. Remark that $\a$ is the closure of
  no other $G_{i'}$, since the $G_i$ are closed; however, $\a$ can
  belong to the closure of some $H_j$. For definiteness, let $J$ be
  the set of indices such that $\a$ is in $\overline{H_j}$ for $j$ in
  $J$, and let $e>0$ be such that the open ball $B(\a,e)$ intersects
  no $G_{i'}$, for $i'\ne i$, and no $\overline{H_j}$, for $j$ not in
  $J$. Since $\a$ is in $G_i$, we know that $\gamma(\a)$ is in
  $\overline{B(G_i)}$; for $j$ in $J$, since $\a$ is in
  $\overline{H_j}$, we also have that $\gamma(\a)$ is in
  $\overline{B(H_j)}$. However, since $\gamma(\a)$ is in $Z_x-W_1$,
  the second statement in Lemma~\ref{lemma:10} implies that
  $B(G_i)=B(H_j)$. Since every $\a'$ in $B(\a,e)$ is either in $G_i$ or
  in some $H_j$ with $j$ in $J$, we are done.
\end{itemize}

\noindent This concludes the proof of Lemma~\ref{prop:wibar}. The following
corollary will be used to prove Lemma~\ref{prop:ret}.
\begin{corollary}\label{coro:stronger}
  Let $x$ be in $\R$ such that $Z_x \cap W_1=\emptyset$ and let $C$
  be a connected component of $Z_{\le x}$. Then if $C_{<x}$ is
  non-empty, it is connected.
\end{corollary}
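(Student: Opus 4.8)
The plan is to deduce the corollary directly from Lemma~\ref{prop:wibar}. The key observation is that the hypothesis $Z_x\cap W_1=\emptyset$ (where $W_1=\crit(\Pi_1,Z)$) removes the only restriction in that lemma: a path whose image lies in $C\subset Z_{\le x}$ automatically lands in $Z_{\le x}-Z_x\cap W_1=Z_{\le x}$, so it is an admissible input. Thus I would never need to control where inside $Z_{\le x}$ a connecting path goes; I only need that such a path exists inside $C$, and then apply Lemma~\ref{prop:wibar} to it. The final connectedness of $C_{<x}$ will come out of the standard fact that a set is connected as soon as any two of its points lie in a common connected subset.

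Concretely, first I would fix a point $\y_0$ in $C_{<x}$, which is possible since $C_{<x}$ is assumed non-empty, and take an arbitrary second point $\y'$ in $C_{<x}$. Since $C$ is a connected semi-algebraic set, it is semi-algebraically path-connected, so there is a continuous semi-algebraic path $\gamma\colon[0,1]\to C$ with $\gamma(0)=\y_0$ and $\gamma(1)=\y'$. Because $Z_x\cap W_1=\emptyset$, the image of $\gamma$ lies in $Z_{\le x}-Z_x\cap W_1$, and $[0,1]$ is a non-empty connected semi-algebraic set, so Lemma~\ref{prop:wibar} yields a connected component $B$ of $Z_{<x}$ with $\gamma([0,1])\subset\overline B$. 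Next I would use that $B$, being a connected component of $Z_{<x}$, is closed in $Z_{<x}$, hence $\overline B\cap Z_{<x}=B$; since $\y_0$ and $\y'$ lie both in $Z_{<x}$ and in $\overline B$, they lie in $B$. Moreover $B$ is a connected subset of $Z_{\le x}$ meeting $C$ (it contains $\y_0$), so $B\subset C$, and therefore $B\subset C\cap Z_{<x}=C_{<x}$. This shows that for every $\y'$ in $C_{<x}$ there is a connected subset of $C_{<x}$ containing both $\y_0$ and $\y'$; writing $C_{<x}$ as the union of all these subsets exhibits it as a union of connected sets through the common point $\y_0$, hence $C_{<x}$ is connected.

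I do not expect a genuine obstacle here: all the geometric content is already packaged in Lemma~\ref{prop:wibar}, and the rest is elementary point-set topology plus the standard fact that a connected semi-algebraic set is semi-algebraically path-connected. The only two places that call for a little care are the identity $\overline B\cap Z_{<x}=B$ (which uses nothing more than that connected components are closed in the ambient space $Z_{<x}$) and the verification that $B$, a connected component of $Z_{<x}$, is actually contained in the connected component $C$ of $Z_{\le x}$; both are immediate once spelled out.
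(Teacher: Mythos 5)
Your proof is correct and takes essentially the same route as the paper: both deduce the corollary from Lemma~\ref{prop:wibar}, combined with the observations that $\overline{B}\cap Z_{<x}=B$ for a connected component $B$ of $Z_{<x}$ and that a connected subset of $Z_{\le x}$ meeting $C$ is contained in $C$. The only cosmetic difference is that the paper applies Lemma~\ref{prop:wibar} once to the inclusion map $C\to Z_{\le x}$ (so it never needs semi-algebraic path-connectedness of $C$) and identifies $C_{<x}$ with a single component $B$, whereas you apply the lemma to one path per pair of points and conclude via a union of connected sets through the common point $\y_0$.
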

\begin{proof}
  Consider the inclusion map $C \to Z_{\le x}$. Since $Z_x \cap W_1$
  is empty, this map satisfies the assumptions of
  Proposition~\ref{prop:wibar}; this implies that there exists a
  unique connected component $B$ of $Z_{<x}$ such that $C \subset
  \overline{B}$.  This equality implies that $C_{<x}$ is contained in
  $\overline{B}_{<x}$; one easily checks that $B=\overline{B}_{<x}$,
  so that $C_{<x} \subset B$.  Now, let $B'$ be a connected component
  of $C_{<x}$, so that $B'$ is actually a connected component of
  $Z_{<x}$. The inclusion $B' \subset C_{<x}$ implies $B'\subset
  C_{<x} \subset B$ and thus $B'=C_{<x}=B$. Since $B$ is connected,
  $C_{<x}$ is, as claimed.
\end{proof}

%%%%%%%%%%%%%%%%%%%%%%%%%%%%%%%%%%%%%%%%%%%%%%%%%%%%%%%%%%%%

\subsection*{Proof of Lemma~\ref{prop:ret} on page~\pageref{prop:ret}}

Lemma~\ref{prop:ret} is a by-product of the following result.
\begin{lemma}\label{lemma:ret}
  Let $v<w$ be in $\R$ and let $A \subset (-\infty,w)\times \R^{n-1}$
  be a connected, bounded semi-algebraic set such that $A_{(v,w)}$ is
  a non-empty, smooth manifold, closed in $(v,w) \times \R^{n-1}$ and
  such that $\Pi_1$ is a submersion on $A_{(v,w)}$. Then, for all $x$
  in $[v,w)$, $A_{\le x}$ is non-empty and connected.
\end{lemma}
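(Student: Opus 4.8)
\medskip

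The plan is to prove the statement by a connectivity argument over the interval, using Ehresmann's fibration theorem on the ``generic'' part $A_{(v,w)}$ and then taking limits at the endpoint $w$. First I would reduce to showing that $A_{<x}$ is connected for all $x \in (v,w)$ and that it is non-empty; the non-emptiness is immediate from the hypothesis that $A_{(v,w)}$ is non-empty, since any point of $A_{(v,w)}$ lies below some level and moving down slightly (using that $\Pi_1$ is a submersion, hence open, on $A_{(v,w)}$) produces points in $A_{<x}$ for every $x$ sufficiently large, and then monotonicity in $x$ handles the rest. The case $x=v$ is trivial since $A_{\le v} = A_{<w'} $ need not even be treated separately once the open case is done, or can be obtained by a direct closure argument.

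\medskip

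The core step is connectedness of $A_{<x}$ for $x \in (v,w)$. Here I would use that $M := A_{(v,w)}$ is a smooth manifold, closed in $(v,w)\times\R^{n-1}$, on which $\Pi_1$ is a proper submersion: properness follows because $A$ is bounded, so the fibers $M_t = M \cap \Pi_1^{-1}(t)$ stay in a fixed compact set and $M$ is closed in the open slab. By Ehresmann's theorem, $\Pi_1 : M \to (v,w)$ is a locally trivial fibration, hence $M_{(t,x)} = M \cap \Pi_1^{-1}((t,x))$ deformation retracts onto (a neighborhood of) $M_x$ for $v < t < x < w$, and in particular $M_{(v,x)}$ and $M_x$ have the same connected components. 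Now I claim $M_{(v,x)}$ is connected. Suppose not; then $M$ itself, being fibered over the connected interval $(v,w)$ with the number of connected components of the fiber constant (again by local triviality), would have at least two connected components $M^{(1)}, M^{(2)}$, each a closed subset of the slab, each surjecting onto $(v,w)$ via $\Pi_1$. But then $A$, which is connected and contains $M$ as (essentially, up to the slices $\{x_1 \le v\}$) a dense open subset accumulating on the rest of $A$, would be disconnected too — more precisely, the closures $\overline{M^{(1)}}$ and $\overline{M^{(2)}}$ in $A$ must meet, and any meeting point has first coordinate $\le v$ (it cannot be in the open slab, where $M^{(1)}, M^{(2)}$ are already separated and closed). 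Tracking this, one shows $A_{<x}$ is connected because any two of its points can be joined: join them inside $A$ by a path (possible since $A$ is connected), then push the path down below level $v$ using the retraction away from the level-$v$ part, reparametrize so that the part of the path at level $\ge v$ lies in $M$ and can be homotoped into $A_{<x}$ via the Ehresmann trivialization.

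\medskip

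Finally, to get the value $x$ itself one passes to the closure: $A_{<x} = \bigcup_{x' < x} A_{\le x'}$ is an increasing union of connected sets sharing points, hence connected, which gives the open-interval statement; and $A_{\le v}$ is handled by Corollary~\ref{coro:stronger}-type reasoning or directly, noting $A$ has no critical points of $\Pi_1$ in $A_{(v,w)}$ and is closed in the slab so that levels near $v$ control the level $v$ itself. The main obstacle I expect is the careful bookkeeping in the disconnection argument: translating "the fiber has constant number of components along $(v,w)$" (which Ehresmann gives cleanly) into a genuine contradiction with connectedness of $A$, since $A$ includes the possibly-singular locus $\{x_1 \le v\}$ where the fibration hypothesis fails — one must argue that distinct components of $M_{(v,x)}$ cannot be glued together only through that locus without contradicting that $\Pi_1$ is a submersion precisely on $A_{(v,w)}$ and that $A$ is connected.
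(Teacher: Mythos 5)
Your core step rests on a claim that is false and an argument that does not go through. You assert that $M_{(v,x)}=A_{(v,x)}$ must be connected, on the grounds that otherwise two components $M^{(1)},M^{(2)}$ of $M=A_{(v,w)}$ would have closures that meet, and only at levels $\le v$, contradicting connectedness of $A$. There is no such contradiction: $A_{(v,w)}$ can perfectly well be disconnected while $A$ is connected, the components being glued through $A_{\le v}$ (take for $A$ a ``U'': two horizontal segments lying over $[v,w)$ joined by a vertical segment inside $\Pi_1^{-1}(v)$; every hypothesis of the lemma holds, $A_{(v,w)}$ has two components, and the conclusion is still true). The closures of $M^{(1)}$ and $M^{(2)}$ need not meet at all, nothing forces $A$ to be disconnected, and connectedness of $A_{(v,x)}$ is neither true in general nor needed. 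The paper's argument for $x\in(v,w)$ is the one you relegate to a closing sentence, and it must be set up correctly: join two arbitrary points of $A_{\le x}$ by a path $\gamma$ in the connected set $A$, then use the global trivialization $\Psi$ given by Ehresmann (properness coming, as you correctly say, from boundedness plus closedness in the slab) to replace the portion of $\gamma$ at levels $\ge x$ by $t\mapsto \Psi^{-1}(x,\psi(\alpha(t),\a(t)))$, so that the modified path stays in $A_{\le x}$. Your version (``push the path down below level $v$'', ``the part at level $\ge v$ lies in $M$'') is not workable: $A_{<v}$ may be empty, points at level exactly $v$ are not in $M$, and the pushing must be to level $x$, not below $v$. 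Your final reduction is also circular: you propose to prove connectedness of $A_{<x}$ and then recover it as the increasing union of the sets $A_{\le x'}$, which are exactly the sets the lemma asks about.

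The larger gap is the endpoint $x=v$, where the paper spends most of its proof and your plan has essentially nothing. At level $v$ the submersion hypothesis is unavailable (in the application, Lemma~\ref{prop:ret}, $v$ may be a critical value), so Ehresmann says nothing there, the reasoning of Corollary~\ref{coro:stronger} does not apply (it needs the level in question to avoid $W_1$), and non-emptiness of $A_{\le v}$ does not follow by ``monotonicity'' from non-emptiness of $A_{\le x}$ for $x>v$: one must produce by a limiting process a point at level $v$ that actually belongs to $A$, and one must connect points of $A_{\le v}$ inside $A_{\le v}$, not merely inside $A_{\le x}$ for each $x>v$. Since the smooth trivialization does not extend to $v$, the paper switches to the semi-algebraic category: Hardt triviality over an interval whose closure contains $v$, a semi-algebraic curve pushed towards level $v$ and extended by continuity at the endpoint \cite[Prop.~3.18]{BaPoRo06} to get $A_v\neq\emptyset$, and, for connectedness, a connecting path pushed down to the infinitesimal level $v+\varepsilon$ over $\R\langle\varepsilon\rangle$ followed by $\lim_\varepsilon$, using that such limits of bounded semi-algebraically connected sets remain connected \cite[Prop.~12.43]{BaPoRo06}. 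This is also why the limit is taken at $v$, not at $w$ as you write; and even for $x\in(v,w)$ close to $v$, non-emptiness of $A_x$ comes from the trivialization (all fibers over $(v,w)$ are diffeomorphic), not from moving a point ``down slightly''. What you do have right, and matches the paper, is the properness/Ehresmann setup over $(v,w)$ and the general idea of joining points by a path in $A$ and pushing it down; but as written the proposal neither carries out the interior case correctly nor addresses the endpoint case at all.
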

First, we deduce Lemma~\ref{prop:ret} from Lemma~\ref{lemma:ret}. Let
$C$ be a connected component of $Z_{\le w}$ and recall that we want to
prove that for $x$ in $[v,w]$, $C_{\le x}$ is a connected component of
$Z_{\le x}$; of course, we can assume that $x<w$. Then, it suffices to
prove that $C_{\le x}$ is non-empty and connected; then it is a easily
seen to be a connected component of $Z_{\le x}$. If $C_{(v,w]}$ is
empty, then for $x$ in $[v,w)$, $C_{\le x}=C$, so we are done. Hence,
we assume that $C_{(v,w]}$ is non empty.

We verify here that all assumptions of Lemma~\ref{lemma:ret} are
satisfied, with $A=C_{< w}$. Since $C_{(v,w]}$ is non empty and $Z_w
\cap W_1$ is empty, $C_{(v,w)}$ is non-empty: either there is a point
in $C_{(v,w)}$, or there is a point in $C_w$; this point is not in
$W_1$, so the implicit function theorem shows that $C_{(v,w)}$ is not
empty in this case as well.  Besides, since $Z_w \cap W_1$ is empty,
by Corollary~\ref{coro:stronger}, $C_{<w}$ is connected.

To summarize, $C_{<w}$ is a connected and bounded semi-algebraic set;
$C_{(v,w)}$ is smooth and of positive dimension (because there is no
point in $W_1$ in $C_{(v,w)}$), closed in $(v,w)\times \R^{n-1}$
(because $C_{(v,w)} = C \cap ((v,w)\times \R^{n-1})$ and $C$ is
closed). Besides, we claim that $\Pi_1$ is a submersion on
$C_{(v,w)}$. First, remark that any point $\x$ of $C_{(v,w)}$, $T_\x
C_{(v,w)} = T_\x Z\cap \R^n$.  Since $\dim(Z) > 0$, and since there is
no point of $W_1$ on $Z_{(v,w)}$, we know that $\Pi_1(T_\x Z)=\C$,
which implies that $\Pi_1(T_\x Z\cap \R^n) =\R$. This establishes that
$\Pi_1$ is a submersion on $C_{(v,w)}$. We can thus apply
Lemma~\ref{lemma:ret}, which implies that $C_{\le x}$ is non-empty and
connected, as requested.

\medskip\noindent
Hence, we are left to prove Lemma~\ref{lemma:ret}. Let us first check
that $\Pi_1: A_{(v,w)} \to (v,w)$ is a proper mapping for the topology
induced by the Euclidean topology. By assumption, there exists a
closed set $X\subset \R^n$ such that $A_{(v,w)}=X \cap ((v,w)\times
\R^{n-1})$; since $A$ is bounded, we can take $X$ bounded as well. Let
$K$ be a compact set in $(v,w)$, so that $K$ is compact in $\R$
too. Then, $\Pi_1^{-1}(K)\cap A_{(v,w)}= X \cap (K\times \R^{n-1})$,
which is compact in $\R^n$, and thus in $A_{(v,w)}$. So $\Pi_1:
A_{(v,w)} \to (v,w)$ is proper.

Let $\zeta \in (v,w)$ be such that $A_{\zeta}$ is not empty (such a
$\zeta$ exists by assumption).  We apply Ehresmann's fibration
theorem~\cite[Th.~3.4]{BrDaTrOk07} to the projection $\Pi_1$ (which is
a proper submersion on $A_{(v,w)}$); this gives us a smooth
diffeomorphism of the form
$$\begin{array}{rrcl}\Psi: & A_{(v,w)} & \to & (v,w) \times A'_{\zeta} \\
  & (\alpha,\a) & \mapsto & (\alpha,\psi(\alpha,\a)), \end{array}$$               
where $A'_\zeta \subset \R^{n-1}$ is the set $\{(x_2,\dots,x_n) \ | \ (\zeta,x_2,\dots,x_n)\in A_\zeta\}$
(recall that $A_\zeta$ lies in $\R^n$). For the whole length of this proof,
vectors of the form $(\alpha,\a)$ have $\alpha$ in $\R$ and $\a$ in $\R^{n-1}$.

We use $\Psi$ to show that for $v< x < w$, $A_{\le x}$ is non-empty
and connected.  Let thus $x$ be fixed in $(v,w)$, and let $(\zeta,\z)$
be in $A_\zeta$.  Remark that $\Psi^{-1}(x, \z)$ is in $A_x$, proving
that $A_{\le x}$ is non-empty. To prove connectedness, we use a
similar process. Let $\y_0$ and $\y_1$ be in $A_{\le x}$. Since $A$ is
connected, there exists a continuous path $\gamma: [0,1] \to A$, with
$\gamma(t)=(\alpha(t),\a(t))$, that connects them. Let us replace
$\gamma$ by the path $g$ defined as follows:
  \begin{itemize}
  \item $g(t)=\gamma(t)$ if $\alpha(t) \le x$;
  \item $g(t)=\Psi^{-1}(x, \psi(\alpha(t),\a(t)))$ if $a(t) \ge x$.
  \end{itemize}
  The path $g(t)$ is well-defined, lies in $A_{\le x}$ by
  construction, and connect $\y_0$ to $\y_1$. This establishes our
  connectivity claim.

  Now, we can deal with the situation above $v$. We cannot directly
  use the fibration above, since it is not defined above $v$; instead,
  we will use a limiting process, that will rely on
  semi-algebraicity. To do so, we use a semi-algebraic fibration.
  Applying Hardt's semi-algebraic triviality theorem to the projection
  $\Pi_1$ on the semi-algebraic set $A_{<w}$ proves that there exist
  $z_0=-\infty < z_1 <\dots<z_m=w$ in $\R\cup \{-\infty\}$ such that
  above each interval $]z_i,z_{i+1}[$, there exists a semi-algebraic
  homeomorphism of the form
  $$\begin{array}{rrcl}\Phi_i: & A_{(z_i,z_{i+1})} & \to & (z_i,z_{i+1}) \times A'_{\rho_i} \\ 
    & (\alpha,\a) & \mapsto & (\alpha,\phi_i(\alpha,\a)). \end{array},$$               
  where $\rho_i$ is (for instance) $(z_i+z_{i+1})/2$
  and $A'_{\rho_i} \subset \R^{n-1}$ is 
  $\{(x_2,\dots,x_n) \ | \ (\rho_i,x_2,\dots,x_n)\in A_{\rho_i}\}$.

  Let $i_0$ be such that $v$ is in $[z_{i_0},z_{i_0+1})$ (so $v$ can
  be an interior point, or coincide with $z_{i_0}$). To prove that
  $A_{\le v}$ is non-empty, we actually prove that $A_v$ is. Let
  $\r_{i_0}$ be such that $(\rho_{i_0},\r_{i_0})$ is in
  $A_{\rho_{i_0}}$ (such a point exists, because $A_{\rho_{i_0}}$ is
  not empty, by the previous paragraphs). We define the function
  $\gamma: [0,1)\to A_{\le x}$ by $\gamma(t) =
  \Phi_{i_0}^{-1}(tv+(1-t)\rho_i, (\rho_i,\r_i))$. This is a
  semi-algebraic, continuous, bounded function, so it can be extended
  by continuity at $t=1$~\cite[Proposition~3.18]{BaPoRo06}. Since
  $\gamma(t)$ is in $A_{[v,\rho_i]}$ for $t<1$, $\gamma(1)$ is in
  $A_{[v,\rho_i]}$ too; besides, $\Pi_1(\gamma(t))= tv+(1-t)y_i$ for
  $t<1$, so $\Pi_1(\gamma(1))=v$. Hence, $\gamma(1)$ is in $A_v$, as
  requested.

  It remains to prove that $A_{\le v}$ is connected. Let thus $\y_0$
  and $\y_1$ be two points in $A_{\le v}$. Since $A_{\le \rho_i}$ is
  connected (first part of the proof) and semi-algebraic, $\y_0$ and
  $\y_1$ can be connected by a semi-algebraic path $\gamma$ in $A_{\le
    \rho_i}$. As we did previously, we replace $\gamma$ by a better
  path~$g$. Let $\varepsilon$ be an infinitesimal, let $A'$ be the
  extension of $A$ over $\R\langle \varepsilon\rangle$ and let $g$ be
  the path $[0,1]\subset \R\langle\varepsilon\rangle \to A'_{(v,w)}$
  be defined as follows (where as before
  $\gamma(t)=(\alpha(t),\a(t))$)
  \begin{itemize}
  \item $g(t)=\gamma(t)$ if $\alpha(t) \le v+\varepsilon$;
  \item $g(t)=\Phi_i^{-1}(v+\varepsilon, \phi_i(\alpha(t),\a(t)))$ if $\alpha(t) \ge v+\varepsilon$.
  \end{itemize}
  Obviously, $g$ is well-defined (since $\gamma$ has its image in
  $A_{\le \rho_i}$) and continuous, bounded over $\R$ and
  semi-algebraic. Its image $G$ is thus a connected semi-algebraic
  set, contained in $A'_{\le v+\varepsilon}$.  Let
  $G_0=\lim_\varepsilon G$. By construction, $\y_0$ and $\y_1$ are in
  $G_0$, $G_0$ is contained in $A_{\le v}$ and
  by~\cite[Proposition~12.43]{BaPoRo06}, $G_0$ is semi-algebraically
  connected. Our claim follows.

%%%%%%%%%%%%%%%%%%%%%%%%%%%%%%%%%%%%%%%%%%%%%%%%%%%%%%%%%%%%

\subsection*{Statement and proof of Lemma~\ref{7} used on page~\pageref{Prop:Px2}}

\begin{lemma}\label{7}
  If ${\bf P}(x')$ holds for $x' < x$, then for $i \le r$,
  $\overline{B_i} \cap \mathscr{R}$ is non-empty and connected.
\end{lemma}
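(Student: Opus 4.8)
The plan is to describe each $\overline{B_i}$ by its truncations below the hyperplane $\Pi_1^{-1}(x)$, feed those into property ${\bf P}$ at abscissae slightly smaller than $x$, and then treat separately the points of $\overline{B_i}$ sitting on $\Pi_1^{-1}(x)$ itself.

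Fix $i$ in $\{1,\dots,r\}$ and, to ease notation, write $B=B_i$. Since $C$ is open and closed in $V_{\le x}$, the set $C_{<x}$ is open and closed in $V_{<x}$, so $B$ is in fact a connected component of $V_{<x}$; and since $\dim C>0$ the set $C_{<x}$, hence $V_{<x}$, is non-empty, so $v^\ast:=\max\{v_j:v_j<x\}$ is well-defined. By the choice of the $v_j$, the interval $(v^\ast,x)$ carries no critical value, so $V_y\cap W_1=\emptyset$ for every $y$ in $(v^\ast,x)$. Now fix $x'$ in $(v^\ast,x)$. Applying Lemma~\ref{prop:ret} to $Z=V$ over subintervals of $(v^\ast,x)$ — legitimate because $V\cap\R^n$ is compact (by $\AS$), $\dim V>0$, and that interval has no critical value — and using that the truncation maps $C'\mapsto C'_{\le y}$ (for $v^\ast<y<y'<x$) carry connected components of $V_{\le y'}$ to connected components of $V_{\le y}$ compatibly, with $V_{<x}$ their increasing union, one gets that $B_{\le x'}$ is a connected component of $V_{\le x'}$. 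Since $x'<x$, property ${\bf P}(x')$ then says that $B_{\le x'}\cap\mathscr{R}$ is non-empty and connected; letting $x'\uparrow x$, the set $B\cap\mathscr{R}=\bigcup_{x'\in(v^\ast,x)}B_{\le x'}\cap\mathscr{R}$ is an increasing union of non-empty connected sets, hence non-empty and connected. In particular $\overline B\cap\mathscr{R}\supseteq B\cap\mathscr{R}$ is non-empty.

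To obtain connectedness of $\overline B\cap\mathscr{R}$, it suffices to join each $\z$ in $\overline B\cap\mathscr{R}$ to $B\cap\mathscr{R}$ by a path inside $\overline B\cap\mathscr{R}$. If $\z\in B$ there is nothing to do; otherwise $\z\in V_x$ (because $B$ is closed in $V_{<x}$ while $\overline B\subseteq V_{\le x}$), and $\z\in\mathscr{R}=\mathscr{C}'\cup W_i$. Suppose first that $\z\in W_i\setminus\mathscr{C}'$. By construction of $\mathscr{C}'$ one has $W_1\subseteq\mathscr{C}'$ and $\crit(\Pi_1,W_i)\subseteq\mathscr{C}'$, so $\z$ is a regular point of both $V$ and $W_i$ at which $\Pi_1$ is a submersion; applying the semi-algebraic implicit function theorem (in the form of Lemma~\ref{7.3}) to $V$ at $\z$ shows that $V_{<x}$ lies in $B$ in a neighborhood of $\z$, and applying it to $W_i$ at $\z$ then yields a short connected (hence semi-algebraically path-connected) piece of $W_i\cap V_{\le x}$ through $\z$, contained in $\overline B\cap W_i\subseteq\overline B\cap\mathscr{R}$ and meeting $B\cap W_i$. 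This joins $\z$ to $B\cap\mathscr{R}$.

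The remaining case, $\z\in\mathscr{C}'$, forces $x$ to be one of the $v_j$, and it is the main obstacle: here $\z$ lies in a fiber $V_\x=V\cap\Pi_{i-1}^{-1}(\x)$ (with $\x=\Pi_{i-1}(\z)$) which is entirely contained in $V_x$ and in $\mathscr{R}$, so neither the implicit function theorem nor a direct retraction is available at height $x$. I would handle it by a semi-algebraic limiting argument in the spirit of the appendix proof of Lemma~\ref{lemma:ret}: using Hardt's semi-algebraic triviality theorem for $\Pi_1$ over the interval $(v^\ast,x)$, simultaneously on $V$, on $W_i$ and on the fiber $V_\x$ (which satisfies assumption $\AS$ by point~3 of Theorem~\ref{theo:big}), one transports the connectedness information valid below $x$ up to $\Pi_1^{-1}(x)$, producing a semi-algebraic path inside $\overline B\cap\mathscr{R}$ from $\z$ — running through the fiber $V_\x$ and into $W_i$ — down to a point of $B\cap\mathscr{R}$. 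Collecting the three cases shows that $\overline B\cap\mathscr{R}$ is (path-)connected, as required.
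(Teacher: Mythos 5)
Your first two steps are sound. The increasing-union argument for $B\cap\mathscr{R}$ (showing via Lemma~\ref{prop:ret} that $B_{\le x'}$ is a connected component of $V_{\le x'}$ for $x'\in(v^\ast,x)$ and then invoking ${\bf P}(x')$) is a legitimate variant of what the paper does --- the paper instead joins each pair $\z,\z'\in B\cap\mathscr{R}$ by choosing $x'=\max\Pi_1(\gamma)$ along a path in $B$ and applying ${\bf P}(x')$ to the component of $V_{\le x'}$ containing that path, which avoids having to prove the truncation statement --- and your treatment of $\z\in\overline B\cap(W_i\setminus\mathscr{C}')$ is essentially the paper's second case (Lemma~\ref{7.3} applied to $W_i$, plus Lemma~\ref{lemma:10} to identify the component of $V_{<x}$ whose closure contains $\z$), modulo a small amount of care about shrinking the neighborhood of Lemma~\ref{7.3} for $W_i$ inside the one for $V$, or using curve selection as the paper does.

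The genuine gap is the case you yourself flag as ``the main obstacle'', namely $\z\in\overline B\cap\mathscr{C}'$ with $\Pi_1(\z)=x$: you do not prove it, and the plan you outline would not work. Hardt triviality of $\Pi_1$ over $(v^\ast,x)$ says nothing about $\z$, whose entire $\Pi_{i-1}$-fiber $V\cap\Pi_{i-1}^{-1}(\Pi_{i-1}(\z))$ sits inside the hyperplane $\Pi_1^{-1}(x)$, so there is no fibration ``up to $x$'' to transport anything along; moreover triviality of $\Pi_1$ on $V$ or on $W_i$ below $x$ does not see the $\Pi_{i-1}$-fiber structure, which is exactly what you must use to stay inside $\mathscr{R}=\mathscr{C}'\cup W_i$. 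What is really needed --- and what the paper supplies --- is a mechanism producing a point of $W_i$ in $\overline B$ lying over (infinitesimally near) the fiber of $\z$, together with a connection from $\z$ to that point inside $\overline B\cap\mathscr{C}'$: one takes, by the curve selection lemma, a germ $\varphi$ approaching $\z$ inside $B$ over $\RR=\R\langle\varepsilon\rangle$, slices ${\rm ext}(B,\RR)$ by fixing the first $i-1$ coordinates to those of $\varphi$, identifies (via \cite[Prop.~5.24]{BaPoRo06}) the relevant connected component $\BB_1$ of this slice as a bounded semi-algebraically connected component of a $\Pi_{i-1}$-fiber of ${\rm ext}(V,\RR)$, and observes that such a component must contain a critical point of the projection restricted to it, i.e.\ a point of ${\rm ext}(W_i,\RR)$ since $W_i$ is cut out by the corresponding Jacobian minors; taking $\lim_\varepsilon$ of $\BB_1$ and of that critical point then yields the path from $\z$ through $\overline B\cap\mathscr{C}'$ to a point of $\overline B\cap W_i$, and from there down into $B\cap W_i$. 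Nothing in your sketch produces this critical point of the fiber, nor guarantees that the portion of the (possibly disconnected) real fiber joining $\z$ to it lies in $\overline B$; also note, in passing, that $\Pi_1(\z)=x$ with $\z\in\mathscr{C}'$ need not force $x$ to be one of the $v_j$, since the $v_j$ only record the real points of $\mathscr{C}$.
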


Let $B$ be one of the connected components $B_i$ of $C_{<x}$.  Since
$B$ is actually a connected component of $V_{<x}$ and $V\cap \R^n$ is
compact, $B$ contains a point of $W_1$ (the minimal point for
$\Pi_1$).  Hence, $B \cap \mathscr{R}$, and thus $\overline{B}\cap
\mathscr{R}$, are not empty. Next, we prove that any point $\y$ in
$\overline{B}\cap \mathscr{R}$ can be connected to a point $\z$ in $B
\cap \mathscr{R}$ by a path in $\overline{B}\cap \mathscr{R}$. Let us
first justify that this is sufficient to establish the lemma.

Consider two points $\y,\y'$ in $\overline{B}\cap \mathscr{R}$ and
suppose that they can be connected to some points $\z,\z'$ in $B \cap
\mathscr{R}$ by paths in $\overline{B}\cap \mathscr{R}$. Since $\z$
and $\z'$ are in $B$, they can be connected by a path $\gamma: [0,1]
\to B$. Let $x'=\max(\Pi_1(\gamma(t)))$, for $t$ in $[0,1]$; $x'$ is
well defined by the continuity of $\gamma$, and satisfies $x' <
x$. Then, both $\z$ and $\z'$ are in $B_{\le x'}$, and they can be
connected by a path in $B_{\le x'}$; hence, they are in the same
connected component $B'$ of $B_{\le x'}$. Now, $B'$ is a connected
component of $V_{\le x'}$, which implies by property ${\bf P}(x')$
that $B' \cap \mathscr{R}$ is connected. Hence, $\z$ and $\z'$, which
are in $B' \cap \mathscr{R}$, can be connected by a semi-algebraic
path in $B' \cap \mathscr{R}$, and thus within $B \cap
\mathscr{R}$. Summarizing, this proves that $\y$ and $\y'$ can be
connected by a path in $\overline{B}\cap \mathscr{R}$, as requested.

We are thus left to prove the claim made in the first
paragraph. Recall that $\mathscr{R}$ is the union of $W_i$ and of
$\mathscr{C}'=V \cap \Pi_{i-1}^{-1}(\Pi_{i-1}(\mathscr{C}))$, where
$\mathscr{C}=W_1 \cup \crit(\Pi_1,W_i) \cup \mathscr{P}$. We first
deal with points $\y$ in $\overline{B}\cap \mathscr{C}'$, and in a
second time with points $\y$ in $\overline{B} \cap (W_i -
\mathscr{C}')$.

\medskip\noindent{\bf Case 1.}  Let $\y$ be in $\overline{B}\cap
\mathscr{C}'$. We can assume that $\y$ is not in $B$, since for $\y$
in $B$ we can take $\z=\y$; since $\y$ is not in $B$, $\Pi_1(\y)=x$.

Since $B$ is semi-algebraic, by the curve selection lemma, there
exists a continuous semi-algebraic map $f:[0,1] \to \R^n$, with
$f(0)=\y$ and $f(t) \in B$ for $t$ in $(0,1]$.  Let $\varepsilon$ be a
new infinitesimal and let $\RR=\R\langle\varepsilon\rangle$; we let
$\varphi=(\varphi_1,\dots,\varphi_n) \in {\RR}^n$ be the
semi-algebraic germ of $f$ at $0$, so that $\lim_\varepsilon
\varphi=\y$. We consider the semi-algebraic set $\BB\subset {\RR}^n$
defined by
\begin{eqnarray*}\BB&=&\{ \x \in {\RR}^n \ |
  \ \x\in {\rm ext}(B,\RR) \text{~and~} (x_1,\dots,x_{i-1})=(\varphi_1,\dots,\varphi_{i-1})\},
\end{eqnarray*}
where ${\rm ext}$ denotes the extension to $\RR$. Since for all $t$ in
$(0,1]$, $f(t)$ is in $B$, $\varphi$ is in ${\rm ext}(B,\RR)$
by~\cite[Prop. 3.16]{BaPoRo06}, so that $\varphi$ is in~$\BB$;
in particular, this proves that $\y$ is in $\lim_\varepsilon
\BB$. Remark also that $\BB$ is bounded by an element of $\R$, and
that any point in $\lim_\varepsilon \BB$ is in $\overline{B} \cap
\Pi_{i-1}^{-1}(\Pi_{i-1}(\y))$.

Let $\BB_1,\dots,\BB_s \subset \RR^n$ be the semi-algebraically
connected components of $\BB$ (which are well-defined because $\BB$ is
not empty); hence, the $\BB_i$ are semi-algebraic sets. Because $\y$ is in
$\lim_\varepsilon (\BB)$, we can assume that it is in
$\lim_\varepsilon \BB_1$. Next, since $B$ is a semi-algebraically
connected component of $V_{< x}$, by~\cite[Prop.~5.24]{BaPoRo06},
$\BB_1$ is a semi-algebraically connected component of
${\rm ext}(V,\RR) \cap \Pi_{i-1}^{-1}(\varphi_1,\dots,\varphi_{i-1}).$
By the semi-algebraic implicit function theorem, this implies that
there exists a point $\psi$ in $\BB_1 \cap \crit(\Pi_,{\rm ext}
(V,\RR))$. Since polar varieties are defined by suitable Jacobian
minors, this means that $\psi$ is in $\BB_1 \cap {\rm ext}(W_i,\RR)$.
Because $\psi$ is in $\BB_1$, it is in ${\rm ext}(B,\RR)$, and
thus in ${\rm ext}(B \cap W_i, \RR)$. 

Let $\w=\lim_\varepsilon \psi$ and let $g$ be a representative of
$\psi$, so that $g(0)=\w$. By~\cite[Prop.~3.16]{BaPoRo06}, there
exists $t_0>0$ such that for all $t$ in $(0,t_0)$, $g(t)$ is in $B
\cap W_i$, which is contained in $\overline{B} \cap
\mathscr{R}$. Defining $\z=g(t_0/2)$, we see that $\z$ and $\w$ are
connected by a path in $\overline{B} \cap \mathscr{R}$.

Let $B_1=\lim_\varepsilon \BB_1$. Because $\BB_1$ is semi-algebraic,
bounded over $\R$ and semi-algebraically connected, $B_1$ is closed,
semi-algebraic and connected~\cite[Prop.~12.43]{BaPoRo06}. Besides, we
have seen above that it is contained in $\overline{B} \cap
\Pi_{i-1}^{-1}(\Pi_{i-1}(\y))$. Finally, it contains both $\y$ and
$\w$. Hence, $\y$ and $\w$ can be connected by a path in $B_1\subset
\overline{B} \cap \Pi_{i-1}^{-1}(\Pi_{i-1}(\y))$. Since $\y$ is in
$\mathscr{C}'$, $\Pi_{i-1}^{-1}(\Pi_{i-1}(\y))$ is contained in
$\mathscr{C}'$ too, and thus in $\mathscr{R}$. Connecting $\y$ to $\w$
and $\w$ to $\z$ (previous paragraph), we conclude the proof of our
claim.

\medskip\noindent{\bf Case 2.} Let now $\y$ be in $\overline{B} \cap
(W_i-\mathscr{C}')$; as in case 1, we assume that $\y$ is not in $B$,
so that~$\Pi_1(\y)=x$.  Since $\y$ is not in $\mathscr{C}'$, $\y$ is
not in $\mathscr{C}$, and so not in $\crit(\Pi_1,W_i)$. Applying
Lemma~\ref{7.3} to the algebraic set $W_i$, we see that $\y$ is in
$\overline{{W_i}_{<x}}$.  By the curve selection lemma, this means
that there exists a semi-algebraic path $\gamma : [0,1] \to {W_i}$
connecting a point $\z$ in ${W_i}_{< x}$ to $\y$, with $\gamma(0)=\z$,
$\gamma(1)=\y$ and $\gamma(t) \in {W_i}_{<x}$ for $t < 1$.

The image of $\gamma$ is in $\mathscr{R}$, so to conclude, it suffices
to prove that $\gamma(t)$ is in $\overline{B}$ for all $t$. To do so,
we will prove that $\gamma(t)$ is in $B$ for all $t<1$. We know that
the image $\{\gamma(t) \ | \ t \in [0,1)\}$ is connected and contained
in $V_{<x}$; hence, it is contained in a connected component $B'$ of
$V_{<x}$. We have to prove that $B'=B$. Because $\gamma(1)=\y$, we
deduce that $\y$ is in $\overline{B'}$; on the other hand, we know
that $\y$ is in $\overline{B}$. Since $\y$ is not in $\mathscr{C}$, it
is not in $W_1$; as a consequence, we can apply Lemma~\ref{lemma:10},
which shows that $B=B'$, as requested.

%%%%%%%%%%%%%%%%%%%%%%%%%%%%%%%%%%%%%%%%%%%%%%%%%%%%%%%%%%%%

\subsection*{Proof of Lemma~\ref{lemma:a123} on page~\pageref{lemma:a123}}
Property $\a_1(\e)$ follows from the algebraic form of Sard's lemma;
it is in~\cite{BaGiHeMb97}. Using our notation, Mather's
transversality result~\cite{Mather73,AlOt92,AlBaOt01} shows that for
generic $\e$, $\a_2(\e)$ and $\a_3(\e)$ are satisfied, and the
dimensions of $S_j$ and $S_{j,\ell}$ are
 $$\dim(S_j)=n-1-\nu_{n,i}(n-1-j),\quad  
 \dim(S_{j,\ell})=n-1-\nu_{n,i}(n-1-j,\dim(S_j)-\ell),$$ where the
 function $\nu_{n,i}$ is defined as follows. Considering two indices
 $r \ge s \ge 0$, we define $\mu(r,s)$ as the number of sequences $r'
 \ge s' \ge 0$, with $r' > 1$, and $r \ge r'$, $s \ge s'$; explicitly,
 $\mu(r,s) = r(s+1)-s(s-1)/2$. Then, we have
\begin{eqnarray*}
\nu_{n,i}(r)&=&(i-n+1+r)r\\
\nu_{n,i}(r,s) &=& (i-n+1+r)\mu(r,s) - (r-s)s\\
&=& (i-n+1+r)(r(s+1)-\frac{s(s-1)}2) - (r-s)s.
\end{eqnarray*}
It remains to check that under these constraints, we always have
$\dim(S_{j,\ell}) \le \ell$ for $\ell \le i-1$; this follows from a
straightforward but tedious verification.

%%%%%%%%%%%%%%%%%%%%%%%%%%%%%%%%%%%%%%%%%%%%%%%%%%%%%%%%%%%%

\subsection*{Proof of Lemma~\ref{prop:dimxe} on page~\pageref{prop:dimxe}}

In all the rest of this paragraph, we fix $\e$ that satisfies the
assumptions of Lemma~\ref{prop:dimxe}, and we denote by $X_\e$ the
intersection $X \cap \gamma^{-1}(\e)$. Finally, we let $\beta_\e:
(\g,\e,\x) \in X_\e \mapsto \x \in \C^n$ be the projection on the
$\X$-coordinate.
\begin{lemma}\label{lemma:sum}
  For $\x$ in $\reg(W_\e)$ and $\g$ in $\C^i$, $(\g,\x)$ is in $X_\e$
  if and only if $\x$ is in $\crit(\rho_\g \circ \Pi_\e,W_\e)$ and the
  equality $\dim(\Pi_\e(T_\x W_\e))+\dim (\beta_\e^{-1}(\x))=i$ holds.
\end{lemma}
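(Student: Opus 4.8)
The plan is to unwind the two conditions defining $X_\e$ at a regular point $\x$ of $W_\e$ and to match them, one at a time, against the two conditions on the right-hand side. Recall that $X_\e$ consists of the $(\g,\x)$ with $f(\x)=M_{i+1}(\x,\e)=\cdots=M_n(\x,\e)=0$ and all $(n+2-i)$-minors of $J'(\g,\e,\x)$ vanishing, where $J$ is the Jacobian (in $\x$ only) of $(f,M_{i+1},\dots,M_n)$ and $J'$ is $J$ with the extra row $\r = \g \cdot(\e_1^t \mid \cdots \mid \e_i^t)$ adjoined. First I would observe that, under $\a_1(\e)$, for $\x \in \reg(W_\e)$ the vanishing $f(\x)=M_{i+1}(\x,\e)=\cdots=M_n(\x,\e)=0$ is exactly the condition $\x\in W_\e$, and moreover $J(\x)$ has maximal rank $n-i+1$ there, so its row space is precisely the conormal space $(T_\x W_\e)^\perp$. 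Hence the tangent space $T_\x W_\e$ is the kernel of $J(\x)$, a space of dimension $i-1$.

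Next I would analyze the minor condition on $J'$. The matrix $J'$ has $n-i+2$ rows (the $n-i+1$ rows of $J$, plus $\r$) and $n$ columns; since $\rank J(\x)=n-i+1$, all $(n+2-i)$-minors of $J'$ vanish if and only if the extra row $\r(\g,\e,\x)$ lies in the row space of $J(\x)$, i.e. in $(T_\x W_\e)^\perp$. Now $\r(\g,\e,\x)$ is, as a row vector, the composite $\g^t \cdot (\text{matrix of }\Pi_\e)$, which is exactly the gradient (in $\x$) of the linear function $\rho_\g\circ\Pi_\e$. So the minor condition says: $\grad(\rho_\g\circ\Pi_\e)$ annihilates $T_\x W_\e$, which is precisely the statement that $\x$ is a critical point of the restriction $\rho_\g\circ\Pi_\e|_{W_\e}$, i.e. $\x\in\crit(\rho_\g\circ\Pi_\e,W_\e)$. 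This handles the first half of the equivalence; the direction that remains is to see that this already \emph{uses up} all the constraints on $\g$ coming from membership in $X_\e$, so that the remaining degrees of freedom in $\g$ are controlled by the fiber dimension $\dim(\beta_\e^{-1}(\x))$.

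For the dimension count: fix such an $\x$, and consider the set of $\g\in\C^i$ with $(\g,\x)\in X_\e$, which is exactly $\beta_\e^{-1}(\x)$ (after identifying $X_\e$ with pairs $(\g,\x)$, the projection $\gamma$ being constant). The condition on $\g$ is linear: $\g$ must satisfy $\g^t\cdot(\e_1\cdot\v,\dots,\e_i\cdot\v)=0$ for every $\v\in T_\x W_\e$, that is, $\g$ must be orthogonal to the image $\Pi_\e(T_\x W_\e)\subset\C^i$. The solution space of this linear system has dimension $i-\dim(\Pi_\e(T_\x W_\e))$, so $\dim(\beta_\e^{-1}(\x)) = i-\dim(\Pi_\e(T_\x W_\e))$, which rearranges to $\dim(\Pi_\e(T_\x W_\e))+\dim(\beta_\e^{-1}(\x))=i$. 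The only subtlety to pin down carefully is that $\beta_\e^{-1}(\x)$ really is this linear space and not a Zariski closure of something smaller --- but since the defining equations of $X_\e$ at a fixed regular $\x$ reduce (by the rank argument above) to precisely these linear equations in $\g$, there is nothing lost. I expect the main obstacle to be bookkeeping the ranks cleanly: one must be sure that at $\x\in\reg(W_\e)$ the hypothesis $\a_1(\e)$ genuinely forces $\rank J(\x)=n-i+1$ (so the $(n+2-i)$-minors of $J'$ test exactly ``row $\r$ in row-space of $J$'') and that no lower-rank degeneracy on a sub-locus spoils the equivalence; invoking $\a_1(\e)$ as stated takes care of this.
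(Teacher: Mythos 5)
Your argument is correct and follows essentially the same route as the paper: use $\a_1(\e)$ to get that the equations cut out $W_\e$ and that $J(\x)$ has full rank $n-i+1$ at regular points, deduce that the vanishing of the $(n+2-i)$-minors of $J'$ is exactly the condition that $\r$ (the gradient of $\rho_\g\circ\Pi_\e$) annihilates $T_\x W_\e$, i.e.\ criticality, and then identify $\beta_\e^{-1}(\x)$ with the linear annihilator of $\Pi_\e(T_\x W_\e)$ in $\C^i$ to get the dimension equality. The paper compresses the first step into ``the first claim follows readily,'' so your version is simply a more explicit write-up of the same proof.
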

\begin{proof}
  Since ${\bf a}_1(\e)$ holds,, the equations
$f(\X),M_{i+1}(\e,\X),\dots,M_n(\e,\X)$
define the critical set $W_\e$ and for $\x$ in $\reg(W_\e)$, the
matrix $J(\x)$ has rank $n-i+1$. The first claim follows readily.
Thus, $\g$ is in $\beta_\e^{-1}(\x)$ if and only if for all $\v$ in
$T_\x W_\e$, $\rho_\g(\Pi_\e(\v))=0$; equivalently, if for all $\w$ in
$\Pi_\e(T_\x W_\e)$, $\rho_\g(\w)=0$. Since $\rho_\g(\w)=\g \cdot \w$,
we are done.
\end{proof}

For $0 \le \ell \le i-1$, let $j_{\ell,1},\dots,j_{\ell,\kappa(\ell)}$
be the indices $j$ such that $S_{j,\ell}$ is well-defined. Then, we
define the constructible sets
\begin{equation*}
  \label{eq:T}
T_\ell = S_{j_{\ell,1},\ell} \cup \cdots \cup S_{j_{\ell,\kappa(\ell)},\ell}
\quad\text{and}\quad
  T'_\ell = T_0 \cup \cdots \cup T_\ell.
\end{equation*}
By Lemma~\ref{lemma:a123}, both $T_\ell$ and $T'_\ell$ are disjoint
unions of non-singular locally closed sets of dimension at most
$\ell$. By Lemma~\ref{lemma:sum}, for $0 \le \ell \le i$, and for $\x$
in $T_{\ell}$, the inequality $\dim (\beta_\e^{-1}(\x)) \le i-\ell$
holds.  Remark that $W_\e=T'_{i-1} \cup \sing(W_\e)$. Since $T'_{i-1}
= T'_{i-2} \cup T_{i-1}$, we rewrite this as
\begin{equation}
  \label{eq:wi}
W_\e=T'_{i-2} \cup T_{i-1} \cup \sing(W_\e),  
\end{equation}
where the union is disjoint. Going further, we can write for any $\ell \le i-1$
\begin{equation}\label{eq:wi2}
T'_\ell \cup \sing(W_\e) = T'_{\ell-1} \cup T_{\ell} \cup \sing(W_\e).
\end{equation}

Consider now an irreducible component $X'$ of $X_\e$. By construction,
$\beta_\e(X')$ is contained in $W_\e$. By~\eqref{eq:wi}, either
$\beta_\e(X')$ is contained in $T'_{i-2} \cup \sing(W_\e)$, or
$\beta_\e(X')$ intersects $T_{i-1}$. If $\beta_\e(X')$ intersects
$T_{i-1}$, then there is a fiber of dimension at most~1. In this case,
by the theorem on the dimension of fibers, $\dim(X')\le 1+
\dim(T'_{i-1} \cup \sing(W_\e))$, and thus $\dim(X') \le i$.

If $\beta_\e(X')$ is contained in $T'_{i-2} \cup \sing(W_\e)$, then
by~\eqref{eq:wi2}, either $\beta_\e(X')$ is contained in $T'_{i-3}\cup
\sing(W_\e)$, or $\beta_\e(X')$ intersects $T_{i-2}$. If
$\beta_\e(X')$ intersects $T_{i-2}$, then there is a fiber of
dimension at most 2, so $\dim(X')\le 2+ \dim(T'_{i-2} \cup
\sing(W_\e)) \le i$. Continuing this way, we prove that $\dim(X')\le
i$.

%%%%%%%%%%%%%%%%%%%%%%%%%%%%%%%%%%%%%%%%%%%%%%%%%%%%%%%%%%%%

\subsection*{Proof of Lemma~\ref{lemma:13} on page~\pageref{lemma:13}}

Let ${\cal F}$ be the Zariski-open subset of $\C^{ni}$ underlying
Lemma~\ref{lemma:a123}: for $\e$ in ${\cal F}$, $\a_1(\e)$,
$\a_2(\e)$ and $\a_3(\e)$ hold. Finally, recall the definitions of the
projections $\alpha: (\g,\e,\x) \mapsto (\g,\e)$ and $\gamma:
(\g,\e,\x) \mapsto~\e$.
First, $Y$ is obviously Zariski-closed. We continue by proving that it
does not cover all of $\C^i\times\C^{ni}$: it is enough to prove it
componentwise. Thus, we partition the set of irreducible components
$X'$ of $X$ into some sets $E_0 \cup E_1 \cup E_2$, where
\begin{itemize}
\item $E_0$ is the set of irreducible components $X'$ of $X$ such
  that $\gamma(X')$ does not intersect ${\cal F}$;
\item $E_1$ is the set of irreducible components $X'$ of $X$ such
  that $\alpha(X')$ intersects ${\cal F}$ and such that 
  $\alpha(X')$ is dense in $\C^i\times\C^{ni}$;
\item $E_2$ is the set of irreducible components $X'$ of $X$ such
  that $\alpha(X')$ intersects ${\cal F}$ and such that
  $\alpha(X')$ is not dense in $\C^i\times\C^{ni}$.
\end{itemize}
We want to prove that for all $X'$, the set of infinite fibers of
$\alpha$ in $X'$ is contained in a strict Zariski-closed subset of
$\C^i\times \C^{ni}$. For $X'$ in $E_0$, $\gamma(X')$ is contained in
a strict Zariski-closed subset of $\C^{ni}$, which implies that
$\alpha(X')$ is contained in strict Zariski-closed subset of
$\C^i\times \C^{ni}$. For $X'$ in $E_1$, Lemma~\ref{prop:dimxe} and
the theorem on the dimension of fibers imply that $\dim(X') \le i+ni$;
as a consequence, the set of infinite fibers is contained in a
hypersurface. For $X'$ in $E_2$, this is true by construction. This
finishes the proof that $Y$ is a strict Zariski-closed subset of
$\C^i\times\C^{ni}$.

Let $(\g,\e)$ be in $\C^i\times\C^{ni}-Y'$. Hence, $\e$ is in ${\cal
  F}$, so that the fiber $\alpha^{-1}(\g,\e)$ meets no irreducible
component $X'$ of $X$ that belongs to $E_0$.  For all other components
$X'$ of $X$, since $(\g,\e)$ is not in $Y$, $\alpha^{-1}(\g,\e)$
intersects $X'$ in a finite number of points.  Hence, finally,
$\alpha^{-1}(\g,\e)$ intersects $X$ in a finite number of points. By
Lemma~\ref{lemma:sum}, this means that $\crit(\rho_\g\circ
\Pi_\e,W_\e)$ is finite, as requested.

%%%%%%%%%%%%%%%%%%%%%%%%%%%%%%%%%%%%%%%%%%%%%%%%%%%%%%%%%%%%

\subsection*{Proof of Lemma~\ref{complexiteCanny} on page~\pageref{complexiteCanny}: correctness and runtime}

In this paragraph, we prove that assuming $\AS$ and $\BS$, algorithm
{\sf CannyRoadmap} is correct; we also discuss its complexity. In all that
follows, for $i\le j$, we denote by $\Pi_{X_i,\dots,X_j}$ the projection 
$$\begin{array}{cccc}
\Pi_{X_i,\dots,X_j}: & \C^n & \to & \C^{j-i+1} \\
 & \x =(x_1,\dots,x_n) & \mapsto & (x_i,\dots,x_j).
\end{array}$$

First, we need a direct extension of Theorem~\ref{theo:big} to the
case of inputs of the form $[\f,Q]$, with $\f=f_1,\dots,f_p$ and
$Q(X_1,\dots,X_e)$, so that we have $d=n-p-e$.  As before, we are also
given a set of control points $\mathscr{P}$ in $V=V(\f,Q)$. Then,
extending the previous notation, we define, for $\x=(x_1,\dots,x_e)$
in $V(Q)$:
\begin{itemize}
\item $V_\x=V(\f(x_1,\dots,x_e,X_{e+1},\dots,X_n))\subset \C^n$;
\item $\mathscr{P}_\x = \mathscr{P} \cap V_\x$;
\item $\mathscr{C}_\x = \crit(\Pi_{X_{e+1}},V_\x)\,\cup \, 
                        \crit(\Pi_{X_{e+1}}, \crit(\Pi_{X_{e+1},\dots,X_{e+i}},V_\x))\, \cup\,
  \mathscr{P}_\x$;
\item $\mathscr{C}_\x'=V_\x\,\cap\,\Pi_{X_{e+1},\dots,X_{e+i-1}}^{-1}(\Pi_{X_{e+1},\dots,X_{e+i-1}}(\mathscr{C}_\x)) =V_\x\,\cap\,\Pi_{X_{1},\dots,X_{e+i-1}}^{-1}(\Pi_{X_{1},\dots,X_{e+i-1}}(\mathscr{C}_\x))$.
\end{itemize}
\noindent If $[\f,Q]$ satisfies $\AS$ and $\BS$, then for all $\x \in V(Q)$,
$\mathscr{C}_\x$ is finite.

\begin{theorem}\label{theo:big2}
  Let $d' = \max(i-1, d-i+1)$. If $[\f,Q]$ satisfies $\AS$ and
  $\BS$, then for all $\x=(x_1,\dots,x_e)$ in $V(Q)$, the following holds:
  \begin{enumerate}
  \item $\mathscr{C}'_\x\cup \crit(\Pi_{X_{e+1},\dots,X_{e+i}},V_\x))$ is a $d'$-roadmap of
    $(V_\x,\mathscr{P}_\x)$;
  \item $\mathscr{C}'_\x\cap \crit(\Pi_{X_{e+1},\dots,X_{e+i}},V_\x))$ is finite;
  \item for all $(x_{e+1},\dots,x_{e+i-1})\in \C^{i-1}$, the system
    $(f_1,\dots,f_p,X_1-x_1,\dots,X_{e+i-1}-x_{e+i-1})$ satisfies
    assumption $\AS$.
  \end{enumerate}
\end{theorem}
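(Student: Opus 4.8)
The plan is to reduce Theorem~\ref{theo:big2} to the already-established Theorem~\ref{theo:big} by substitution. For a fixed $\x=(x_1,\dots,x_e)$ in $V(Q)$, consider the system $\f_\x=\f(x_1,\dots,x_e,Y_1,\dots,Y_{n-e})$ in the new variables $Y_1,\dots,Y_{n-e}$, so that $V_\x$ is (after relabelling $X_{e+k}$ as $Y_k$) exactly $V(\f_\x)$. The definition of $\AS$ and $\BS$ for pairs $[\f,Q]$ was set up precisely so that $[\f,Q]$ satisfying $\AS$ and $\BS$ forces each $\f_\x$ to satisfy the original (no-$Q$) versions of $\AS$ and $\BS$; I would simply invoke that. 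The projection $\Pi_{X_{e+1},\dots,X_{e+j}}$ restricted to the fibre corresponds, under the variable renaming, to the projection $\Pi_j$ of Section~\ref{ssec:main} applied to $V(\f_\x)$, and similarly $\crit(\Pi_{X_{e+1}},V_\x)$ corresponds to $W_1$, $\crit(\Pi_{X_{e+1},\dots,X_{e+i}},V_\x)$ to $W_i$, and $\crit(\Pi_{X_{e+1}},\crit(\Pi_{X_{e+1},\dots,X_{e+i}},V_\x))$ to $\crit(\Pi_1,W_i)$. Hence $\mathscr{C}_\x$ and $\mathscr{C}'_\x$ are precisely the sets $\mathscr{C}$ and $\mathscr{C}'$ of Theorem~\ref{theo:big} built from $\f_\x$ and $\mathscr{P}_\x$.

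With that dictionary in place, points~1 and~2 of Theorem~\ref{theo:big2} are immediate translations of points~1 and~2 of Theorem~\ref{theo:big}: $\mathscr{C}'_\x\cup\crit(\Pi_{X_{e+1},\dots,X_{e+i}},V_\x)$ is a $d'$-roadmap of $(V_\x,\mathscr{P}_\x)$ with $d'=\max(i-1,d-i+1)$ and $d=n-p-e=\dim(V_\x)$, and the intersection $\mathscr{C}'_\x\cap\crit(\Pi_{X_{e+1},\dots,X_{e+i}},V_\x)$ is finite. One small check is that $\mathscr{C}_\x$ is indeed finite under $\AS$ and $\BS$ for $[\f,Q]$, so that the hypotheses of Theorem~\ref{theo:big} hold for each $\f_\x$; this is exactly the sentence ``If $[\f,Q]$ satisfies $\AS$ and $\BS$, then for all $\x\in V(Q)$, $\mathscr{C}_\x$ is finite'' stated just before the theorem, which itself follows from $W_1$ finite and $\crit(\Pi_1,W_i)$ finite for $\f_\x$ together with $\mathscr{P}$ finite.

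For point~3 I would argue slightly more carefully, because the system appearing there, $(f_1,\dots,f_p,X_1-x_1,\dots,X_{e+i-1}-x_{e+i-1})$, fixes not only the $i-1$ ``new'' coordinates but also the $e$ coordinates already pinned by $Q$. The clean way is to note that fixing $X_1-x_1,\dots,X_e-x_e$ turns $(\f,Q)$-data into $\f_\x$-data, and then fixing the remaining $X_{e+1}-x_{e+1},\dots,X_{e+i-1}-x_{e+i-1}$ is exactly the operation covered by point~3 of Theorem~\ref{theo:big} applied to $\f_\x$: it states that $(\f_\x,Y_1-x_{e+1},\dots,Y_{i-1}-x_{e+i-1})$ satisfies $\AS$. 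Renaming $Y_k$ back to $X_{e+k}$ and observing that adjoining $X_j-x_j$ for $j\le e$ to $\f_\x$ recovers the stated system gives the claim. Alternatively one can cite the appendix lemma (``Completion of the proof of Theorem~\ref{theo:big}'') directly with $\f_\x$ in place of $\F$ and $i-1$ new equations.

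The only genuine obstacle is bookkeeping: making sure the correspondence between the projections $\Pi_{X_{e+1},\dots,X_{e+j}}$ on $V_\x\subset\C^n$ and the projections $\Pi_j$ on $V(\f_\x)\subset\C^{n-e}$ is stated once and used consistently, and in particular that $d=n-p-e$ plays the role of the old $d=n-p$ so that $d'=\max(i-1,d-i+1)$ is unchanged. There is no new geometry here; the theorem is a ``fibrewise'' restatement, and once the translation is fixed every assertion is a verbatim instance of Theorem~\ref{theo:big} (and its completing lemma in the appendix) applied to $\f_\x$ for each $\x\in V(Q)$.
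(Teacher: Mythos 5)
Your proposal is correct and follows exactly the paper's route: the paper dismisses Theorem~\ref{theo:big2} in one sentence as ``a straightforward consequence of Theorem~\ref{theo:big}, applied to all algebraic sets $V_\x$'', which is precisely the fibrewise translation you spell out (with $\BS$ for $[\f,Q]$ holding by definition for each $\f_\x$, and the extra care you take for point~3 being a harmless elaboration of the same reduction).
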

\noindent This theorem is a straightforward consequence of
Theorem~\ref{theo:big}, applied to all algebraic sets $V_\x$. With
this in mind, we start by analyzing a single level of algorithm {\sf
  CannyRoadmap}.
\begin{lemma}\label{lemma:5}
  Suppose that $[\f,Q]$ satisfies $\AS$, and that after the change of
  variables $\varphi$, $[\f,Q]$ satisfies $\BS$ for $i=2$. Then steps
  $0-6$ of algorithm {\sf CannyRoadmap} take time $(\delta_{Q} +
  \delta_{P})^{O(1)}(nD)^{O(n)}$; upon success, $Q'$ and $P'$ are
  0-dimensional parametrizations that satisfy
  $$\delta_{Q'} + \delta_{P'} \le  (\delta_{Q} + \delta_{P}) (nD)^{O(n)}$$
  and $[\f,Q']$ satisfies $\AS$. Let finally $\mathscr{P}' \subset
  \C^n$ be the set described by $P'$.  If the recursive call at step 7
  computes a roadmap of $(V(\f,Q'),\mathscr{P}')$, then $(R',R'')$ is
  a roadmap of $(V(\f, Q),\mathscr{P})$.
\end{lemma}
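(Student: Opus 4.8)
The plan is to verify each of the three claims of Lemma~\ref{lemma:5} in turn, treating the bookkeeping about $e$, $p$, and the degrees $\delta_{Q'}, \delta_{P'}$ as the routine part and the correctness of the output $(R',R'')$ as the substantive part. First I would unwind what each step of {\sf CannyRoadmap} computes under the running hypotheses. After the change of variables $\varphi$, assumptions $\AS$ and $\BS$ hold for $i=2$, so by the characterization of polar varieties recalled in the text, $R = {\sf Solve}([\f,\Delta,Q])$ describes $W_1 = \crit(\Pi_{X_{e+1}}, V(\f,Q))$ and $R' = {\sf Solve}([\f,\Delta',Q])$ describes $W_2 = \crit(\Pi_{X_{e+1},X_{e+2}}, V(\f,Q))$; here I use that $\Delta$ (resp. $\Delta'$) consists of the $p$-minors of the Jacobian in the variables $X_{e+2},\dots,X_n$ (resp. $X_{e+3},\dots,X_n$), and that $\BS(b)$ guarantees $W_2$ has dimension $1$. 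Then $S = {\sf CriticalPointsCurve}(R', X_{e+1})$ describes $\crit(\Pi_{X_{e+1}}, W_2)$, which is finite by $\BS(d)$ (point $(d)$ of $\BS$ says $\crit(\Pi_1,W_i)$ is finite; after $\varphi$ this is exactly $\crit(\Pi_{X_{e+1}},W_2)$). Hence ${\sf Union}([S,R,P])$ describes the finite set $\mathscr{C} = W_1 \cup \crit(\Pi_{X_{e+1}},W_2) \cup \mathscr{P}$ of Theorem~\ref{theo:big2} (with $i=2$), and $Q' = {\sf Projection}(\cdots, [X_1,\dots,X_{e+1}])$ describes $\Pi_{X_1,\dots,X_{e+1}}(\mathscr{C})$. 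Consequently $V(\f,Q') = V(\f) \cap \Pi_{X_1,\dots,X_{e+1}}^{-1}(\Pi_{X_1,\dots,X_{e+1}}(\mathscr{C})) \cap V(Q) = \mathscr{C}'$ in the notation of Theorem~\ref{theo:big2}, and $P' = {\sf Union}({\sf Solve}([\f,\Delta',Q']), P)$ describes $\mathscr{P}' = (\mathscr{C}' \cap W_2) \cup \mathscr{P}$.

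Next I would treat the easy assertions. For the complexity of steps $0$--$6$: each {\sf Solve} call acts on a system with $O(n^n)$-many minor equations of degree $O(nD)$, hence runs in time $\delta_Q^{O(1)}(nD)^{O(n)}$ by the cost of~\cite{Lecerf00}, while {\sf CriticalPointsCurve}, {\sf Union} and {\sf Projection} cost polynomially in the degrees of their inputs; so the total is $(\delta_Q + \delta_P)^{O(1)}(nD)^{O(n)}$. For the degree bound, B\'ezout-type estimates on polar varieties give $\delta_R, \delta_{R'} \le (nD)^{O(n)}$ and $\delta_S \le \delta_{R'}^{O(1)} \le (nD)^{O(n)}$, so $\delta_{Q'} \le \delta_S + \delta_R + \delta_P \le (\delta_Q+\delta_P)(nD)^{O(n)}$ and similarly $\delta_{P'} \le {\sf Solve}([\f,\Delta',Q'])\text{-degree} + \delta_P \le \delta_{Q'}(nD)^{O(n)} + \delta_P \le (\delta_Q+\delta_P)(nD)^{O(n)}$, giving the stated inequality. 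That $[\f,Q']$ satisfies $\AS$ is exactly point~3 of Theorem~\ref{theo:big2} (with $i=2$): the system $(f_1,\dots,f_p,X_1-x_1,\dots,X_{e+1}-x_{e+1})$ satisfies $\AS$ for every $(x_1,\dots,x_{e+1})$ cut out by $Q'$, which is the definition of $\AS$ for the pair $[\f,Q']$.

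The main point — and where the real work lies — is the final correctness claim: if the recursive call returns a roadmap $R''$ of $(V(\f,Q'),\mathscr{P}') = (\mathscr{C}', \mathscr{P}')$, then $(R',R'')$ is a roadmap of $(V(\f,Q),\mathscr{P})$. For this I would invoke Lemma~\ref{lemma:glue}: apply it with $V = V(\f,Q)$, $W_i = W_2$ (which $R'$ describes and is already a curve, so $\mathscr{R}_1 := R'$ is trivially a roadmap of $(W_2, (\mathscr{C}'\cap W_2)\cup\mathscr{P})$ once one checks $(\mathscr{C}'\cap W_2)\cup\mathscr{P} \subset W_2$), and $\mathscr{R}_2 := R''$ a roadmap of $(\mathscr{C}', (\mathscr{C}'\cap W_2)\cup\mathscr{P}) = (V(\f,Q'),\mathscr{P}')$. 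The hypotheses of Lemma~\ref{lemma:glue} are met because $\mathscr{C}' \cap W_2$ is finite (point~2 of Theorem~\ref{theo:big2}), and because Theorem~\ref{theo:big2}(1) tells us $\mathscr{C}' \cup W_2$ is a $d'$-roadmap of $(V(\f,Q),\mathscr{P})$, which is precisely the ambient roadmap hypothesis in the claim proved inside Lemma~\ref{lemma:glue}. Hence $R' \cup R''$ is a roadmap of $(V(\f,Q),\mathscr{P})$, and since the final step merely undoes $\varphi$ — a linear change of coordinates under which the defining properties ${\rm RM}'_1$ through ${\rm RM}'_4$ are all preserved — the returned pair is a roadmap of the original $(V(\f,Q),\mathscr{P})$. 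The delicate part I expect to dwell on is checking that $R'$, as computed, genuinely describes all of $W_2$ (not just $w_2$) and that the control-point inclusions needed by Lemma~\ref{lemma:glue} ($\mathscr{P}' \supset \mathscr{C}'\cap W_2$ and $\mathscr{P}' \supset \mathscr{P}$, both $\subset W_2$ and $\subset \mathscr{C}'$ respectively) hold with the specific parametrizations produced — these follow from the explicit forms of $P'$ and $Q'$ above but require care to state cleanly.
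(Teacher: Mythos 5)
Your route is the same as the paper's: identify $R$, $R'$, $S$, $Q'$, $P'$ with $W_1$, $W_2$, $\crit(\Pi_{X_{e+1}},W_2)$, $\Pi_{X_1,\dots,X_{e+1}}(\mathscr{C})$ and $(\mathscr{C}'\cap W_2)\cup\mathscr{P}$, get assumption $\AS$ for $[\f,Q']$ from point~3 of Theorem~\ref{theo:big2}, glue with Lemma~\ref{lemma:glue} taking $W_2$ (a curve, hence its own roadmap) and $R''$, and finish with B\'ezout bounds. One presentational caveat: Theorem~\ref{theo:big2} is stated fiber by fiber, for each $V_\x$ with $\x\in V(Q)$, not for $V(\f,Q)$ globally; the paper accordingly writes $R''$ as the disjoint union of the $R''_\x$ and applies Lemma~\ref{lemma:glue} above each $\x$. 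Your global invocation is legitimate, but only after observing that the fibers of $\Pi_{X_1,\dots,X_e}$ over the finitely many points of $V(Q)$ are pairwise disjoint, so that every connected component of $V(\f,Q)\cap\R^n$ lies over a single real point of $V(Q)$ and $W_1$, $W_2$, $\mathscr{C}$, $\mathscr{C}'$ are the unions of their fiberwise counterparts; that observation should be stated, and it also corrects your intermediate claim $\delta_R,\delta_{R'}\le (nD)^{O(n)}$, which must carry a factor $\delta_Q$ (your final inequality is nonetheless the right one, and matches the paper's explicit $p^{n-p}D^n$-type accounting).

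The genuine gap is the runtime of the {\sf Solve} calls. The cost quoted for \cite{Lecerf00} applies to $s$ equations of degree at most $D$; the inputs here are $[\f,\Delta,Q]$, $[\f,\Delta',Q]$ and $[\f,\Delta',Q']$, in which $Q$ (resp.\ $Q'$) is a zero-dimensional parametrization of degree $\delta_Q$ (resp.\ $\delta_{Q'}$), and along the recursion these degrees are of order $(nD)^{O(nr)}$, far larger than $D$. Feeding the defining polynomials of $Q$ to the solver as ordinary equations gives a bound of type $\delta_Q^{O(n)}(nD)^{O(n)}$, not the claimed $(\delta_Q+\delta_P)^{O(1)}(nD)^{O(n)}$; and this loss compounds through the recursion, degrading $(nD)^{O(n(n-p-e))}$ to roughly $(nD)^{O(n^2(n-p-e))}$ and hence ruining the $(nD)^{O(n^{1.5})}$ headline bound. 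The paper's proof supplies the missing mechanism: run the solver over the product of fields $\Q[T]/q$, where $q$ is the minimal polynomial of $Q$, using dynamic evaluation \cite{D5} and splitting $q$ whenever a zero divisor appears; since at most $\delta_Q$ splittings can occur, the total cost is $\delta_Q^{O(1)}(nD)^{O(n)}$, and the same device covers {\sf CriticalPointsCurve} and the solve at step~6. Your argument needs this (or an equivalent way of keeping the dependence on $\delta_Q$ polynomial and independent of $n$ in the exponent) to justify the stated complexity of steps 0--6.
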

\begin{proof}
  Let us write here $V=V(\f,Q)$. We start by proving correctness.
  Remark that the solution set of $(\f,\Delta,Q)$ is the union of the
  sets $\crit(\Pi_{X_{e+1}},V_\x)$. Similarly, the solution-set of
  $(\f,\Delta',Q)$ is the union of the critical set
  $\crit(\Pi_{X_{e+1},X_{e+2}},V_\x)$, for the projection on the
  $(X_{e+1},X_{e+2})$-axis. Because $\BS$ holds for $i=2$, this set
  has dimension 1. Consequently, $S$ describes the union of the
  critical points of $\Pi_{X_{e+1}}$ on the sets
  $\crit(\Pi_{X_{e+1},X_{e+2}},V_\x)$. Because $\BS$ holds for $i=2$,
  this set is finite.  Then, $Q'$ describes all the projections
  $\Pi_{X_1,\dots,X_{e+1}}(\mathscr{C}_\x)$.

  By the first point of Theorem~\ref{theo:big2}, each $\mathscr{C}'_\x
  \cup \crit(\Pi_{X_{e+1},X_{e+2}},V_\x))$ is a $(n-p-e-1)$-roadmap
  of $(V_\x,\mathscr{P}_\x)$. Besides, $P'$ describes a set
  $\mathscr{P}'$ which is the union of all set $(\mathscr{C}'_\x \cap
  \crit(\Pi_{X_{e+1},X_{e+2}},V_\x)) \cup \mathscr{P}_\x$; it is
  finite by point 2 in Theorem~\ref{theo:big2}.

  We continue by remarking that point 3 in Theorem~\ref{theo:big2}
  shows that $[\f,Q']$ satisfies $\AS$, which justifies the recursive
  call on step 7. Suppose now that we obtain as output a roadmap $R''$
  of $(V(\f,Q'),\mathscr{P}')$, and we write $R''$ as the disjoint
  union of the sets $R''_\x$, for $\x \in V(Q)$, with $R''_\x=R'' \cap
  \Pi_{X_1,\dots,X_e}^{-1}(\x)$. By the claims of the first paragraph,
  the zero-set of $(\f,Q')$ is the union of all $\mathscr{C}'_\x$,
  which implies that each $R''_\x$ is a roadmap of $(\mathscr{C}'_\x,
  (\mathscr{C}'_\x \cap \crit(\Pi_{X_{e+1},X_{e+2}},V_\x)) \cup
  \mathscr{P}_\x)$. Applying Lemma~\ref{lemma:glue}, we deduce that
  each union $R''_\x \cup \crit(\Pi_{X_{e+1},X_{e+2}},V_\x))$ is a
  roadmap of $(V_\x, \mathscr{P}_\x)$. This proves that $(R',R'')$ is
  a roadmap of $(V,\mathscr{P})$.

\medskip

Next, we estimate the degree of the output, assuming
correctness. First, we fix $\x$ in $V(Q)$ and bound the degree of the
various objects above $\x$, leaving aside the contribution of
$\mathscr{P}$ for the moment. By B\'ezout's theorem, $V_\x$ has degree
at most $D^p$, whereas the degrees of $\crit(\Pi_{X_{e+1}},V_\x)$ and
$\crit(\Pi_{X_{e+1},X_{e+2}},V_\x)$ are at most $D^p
(pD)^{n-p}=p^{n-p}D^n$ (the latter estimate relies on the B\'ezout
theorem of~\cite[Prop.~2.3]{HeSc80}).  Finally, since
$\crit(\Pi_{X_{e+1},X_{e+2}},V_\x)$ is a curve of degree at most
$p^{n-p}D^n$, the set of critical points of $\Pi_{X_{e+1}}$ on this
curve has degree at most $p^{2n-2p} D^{2n}$.

Taking all $\x$ in $V(Q)$ into account, we deduce that the degrees of
$R$ and $R'$ are both bounded by $\delta_Q p^{n-p} D^n$ and the degree
of $S$ is at most $\delta_Q p^{2n-2p} D^{2n}$, so that the degree of
$Q'$ is at most $2\delta_Q p^{2n-2p} D^{2n}+\delta_{P}$.

It remains to bound the degree of $P'$; we start by estimating the
degree of ${\sf Solve}([Q',R'])$, which computes the intersection of
$\Pi_{X_1,\dots,X_{e+1}}^{-1}(\Pi_{X_1,\dots,X_{e+1}}(V(S)\cup
V(R)\cup \mathscr{P}))$ with the zero-set of $R'$. Above each value of
$\x$ in $V(Q)$, the intersection has degree at most $(\delta_{P_\x} +
2p^{2n-2p} D^{2n}) p^{n-p}D^n$, where $\delta_{P_\x}$ is the
cardinality of $\mathscr{P}_\x$. Summing over all $\x$ in $V(Q)$ gives
the upper bound $$\delta_P p^{n-p}D^n + \delta_Q 2p^{3n-3p} D^{3n}$$
for the degree of ${\sf Solve}([Q',R'])$, and thus $$\delta_P
(1+p^{n-p}D^n) + \delta_Q 2p^{3n-3p} D^{3n}$$ for the degree of $P'$.
Taking into account the estimate on the degree of $Q$, we obtain the
upper bounded announced in the lemma.

\medskip

Finally, we estimate the running time, starting with the computation
of $R$ and $R'$. If we were to solve a system of the form
$[\f,\Delta,X_1-x_1,\dots,X_e-x_e]$, the resolution algorithm
of~\cite{Lecerf00} would take time $(nD)^{O(n)}$. However, we need to
solve slightly more complex systems of the form $[\f,\Delta,Q]$ or
$[\f,\Delta',Q]$. Our strategy is to use dynamic evaluation
techniques~\cite{D5}: we apply the former algorithm over the product
of fields $\Q[T]/q$, where $q$ is the minimal polynomial of $Q$. If a
division by zero occurs, we split $q$ into two factors, and we run the
computation again. The maximal number of splittings is $\delta_Q$, so
the overall cost is $\delta_Q^{O(1)}(nD)^{O(n)}$.

The critical points computation takes a similar time, since the form
of the parametrization makes it possible for us to work with bivariate
polynomials of degree $(nD)^{O(n)}$. The union and projection at step
5 take time $(\delta_Q+\delta_{P})^{O(1)}(nD)^{O(n)}$, since they
only involve computations with 0-dimensional ideals of that degree,
given by rational parametrizations, and rational parametrizations for
such objects can be computed (deterministically) in the required time
using e.g. the algorithm of~\cite{Rouillier99}. 
Solving the system $[\f,\Delta,Q']$ is done by the same dynamic
evaluation strategy as before, and the final union computation 
raises no new difficulty.
\end{proof}

\medskip\noindent Remark that as soon as all changes of variables
satisfy the assumptions of Lemma~\ref{prop:3}, the previous lemma
shows that the whole algorithm {\sf CannyRoadmap} correctly computes a
roadmap of $V([\f,Q],\mathscr{P})$ in the requested time (the analysis
of the overall computation time in on page~\pageref{complexiteCanny}). The
probabilistic aspects are discussed further.

%%%%%%%%%%%%%%%%%%%%%%%%%%%%%%%%%%%%%%%%%%%%%%%%%%%%%%%%%%%%

\subsection*{Proof of Lemma~\ref{complexite} on page~\pageref{complexite}: correctness and runtime}

The proof of the running time estimates for our algorithm is quite
similar to that given for our modified version of Canny's algorithm.
In what follows, to simplify notation, we denote by $\f$ the system
$[f,\Delta']=[f,\partial f/\partial X_{e+i+1},\dots,\partial
f/\partial X_n]$ used in the algorithm.

\begin{lemma}\label{lemma:ourC}
  Suppose that $[f,Q]$ satisfies $\AS$ and that after the change of
  variables $\varphi$, $[f,Q]$ satisfies~$\BS$ and $[\F,Q]$ satisfies
  $\AS$. Then steps $0-7$ of algorithm {\sf Roadmap} take time
  $(\delta_{Q} + \delta_{P})^{O(1)}(nD)^{O(n)}$; upon success, $Q'$
  and $P'$ are 0-dimensional parametrizations that satisfy
  $$\delta_{Q'} + \delta_{P'} \le (nD)^{O(n)} (\delta_{Q} + \delta_{P})$$
  and $[f,Q']$ satisfies assumption $\AS$.  Let finally $\mathscr{P}'
  \subset \C^n$ be the set described by $P'$. If additionally
  \begin{itemize}
  \item the call to {\sf CannyRoadmap} at step 8 computes a roadmap $R''$ of $(V(\f,Q),\mathscr{P}')$,
  \item the recursive call at step 9 computes a roadmap $R'''$ of $(V(f,Q'),\mathscr{P}')$,
  \end{itemize}
  then $(R'',R''')$ is a roadmap of $(V(f, Q),\mathscr{P})$.
\end{lemma}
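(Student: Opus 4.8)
The plan is to follow almost verbatim the proof of Lemma~\ref{lemma:5}, with the polar variety $W_i$ of dimension $i-1=\lfloor\sqrt n\rfloor-1$ playing the role of the polar curve $W_2$, and with Theorem~\ref{theo:big2} (the $[\F,Q]$-version of Theorem~\ref{theo:big}, applied fiberwise over $\x\in V(Q)$) replacing its $i=2$ specialisation. I would split the argument into three parts, exactly as there: correctness, a degree bound, and a running-time estimate.

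\textbf{Correctness.} I would first read off the geometric meaning of the intermediate objects, working in each fiber $V_\x=V(f(x_1,\dots,x_e,X_{e+1},\dots,X_n))$. Since after $\varphi$ the pair $[f,Q]$ satisfies $\AS$ and $\BS$, in each such fiber the polar variety $\crit(\Pi_{X_{e+1},\dots,X_{e+i}},V_\x)$ is cut out by $f(\x,\cdot)$ together with the partial derivatives listed in $\Delta'$; hence $V(\F,Q)$ is, fiberwise, the union of these polar varieties, and because $[\F,Q]$ satisfies $\AS$ by hypothesis the call to {\sf CannyRoadmap} at step~8 is legitimate and, upon success, returns a roadmap $R''$ of $(V(\F,Q),\mathscr{P}')$. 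Similarly $R$ describes the union over $\x\in V(Q)$ of the sets $\crit(\Pi_{X_{e+1}},V_\x)$, while $S$, computed by {\sf CriticalPoints} (whose correctness also relies on $\AS$ for $[\F,Q]$), describes the union of the sets $\crit(\Pi_{X_{e+1}},\crit(\Pi_{X_{e+1},\dots,X_{e+i}},V_\x))$, finite by $\BS$. Consequently $Q'$ encodes $\bigcup_\x\Pi_{X_1,\dots,X_{e+i-1}}(\mathscr{C}_\x)$, so $V(f,Q')=\bigcup_\x\mathscr{C}'_\x$, and $P'$ describes the set $\bigcup_\x\big((\mathscr{C}'_\x\cap\crit(\Pi_{X_{e+1},\dots,X_{e+i}},V_\x))\cup\mathscr{P}_\x\big)$, finite by point~2 of Theorem~\ref{theo:big2}; point~3 of that theorem gives $\AS$ for $[f,Q']$, which validates step~9. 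Finally I would write $R''$ and $R'''$ as the disjoint unions of their slices $R''_\x=R''\cap\Pi_{X_1,\dots,X_e}^{-1}(\x)$ and $R'''_\x$ over $\x\in V(Q)$: each slice is a roadmap of the corresponding fiber of $\crit(\Pi_{X_{e+1},\dots,X_{e+i}},V_\x)$, resp. of $\mathscr{C}'_\x$, with control-point set $(\mathscr{C}'_\x\cap\crit(\Pi_{X_{e+1},\dots,X_{e+i}},V_\x))\cup\mathscr{P}_\x$; by points~1 and~2 of Theorem~\ref{theo:big2} and by Lemma~\ref{lemma:glue}, applied fiberwise, $R''_\x\cup R'''_\x$ is then a roadmap of $(V_\x,\mathscr{P}_\x)$, and taking the union over $\x$ shows that $(R'',R''')$ is a roadmap of $(V(f,Q),\mathscr{P})$.

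\textbf{Degree and running time.} These I would treat as in Lemma~\ref{lemma:5}. Fiberwise, $V_\x$ has degree at most $D$, so by Bézout the degrees of $\crit(\Pi_{X_{e+1}},V_\x)$ and $\crit(\Pi_{X_{e+1},\dots,X_{e+i}},V_\x)$ are $(nD)^{O(n)}$, and that of $\crit(\Pi_{X_{e+1}},\crit(\Pi_{X_{e+1},\dots,X_{e+i}},V_\x))$ is $(nD)^{O(n)}$ as well, using the Bézout bound for polar varieties of~\cite[Prop.~2.3]{HeSc80}; summing over $\x\in V(Q)$ and adding the contribution of $\mathscr{P}$ gives $\delta_{Q'}\le(\delta_Q+\delta_P)(nD)^{O(n)}$, and bounding the degree of ${\sf Solve}([\F,Q'])$ by $\delta_{Q'}(nD)^{O(n)}$ gives the same bound for $\delta_{P'}$. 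For the timing, steps~1--3 are negligible; steps~4, 5 and~7 solve systems of at most $n$ equations of degree at most $D$ over the product of fields $\Q[T]/q$, resp. $\Q[T]/q'$, by the dynamic-evaluation strategy of~\cite{D5} with at most $\delta_Q$, resp. $\delta_{Q'}$, splittings, each invocation of {\sf Solve}~\cite{Lecerf00} or {\sf CriticalPoints} costing $(nD)^{O(n)}$ (at step~5 assumption $\AS$ for $[\F,Q]$ lets us use the Jacobian of $\F$ directly); step~6 manipulates only $0$-dimensional parametrizations of degree $(\delta_Q+\delta_P)(nD)^{O(n)}$ via~\cite{Rouillier99}, in time polynomial in that degree. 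Adding up yields $(\delta_Q+\delta_P)^{O(1)}(nD)^{O(n)}$.

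\textbf{Main obstacle.} The delicate point is conceptual rather than computational: $W_i$ is never represented by a parametrization but is carried through as the system $\F$, so every assertion about it — legitimacy of the call to {\sf CannyRoadmap} at step~8, correctness of {\sf CriticalPoints} at step~5, finiteness of $\mathscr{C}'\cap W_i$, and the degree bounds — must be derived from $\AS$ for $[\F,Q]$ (precisely what Lemma~\ref{prop:4} is there to supply) together with the implicit Bézout theorem for polar varieties, rather than from an explicit description of $W_i$. Making the fiberwise application of Lemma~\ref{lemma:glue} uniform over all $\x\in V(Q)$, and tracking the nested recursive structure — step~8 calling {\sf CannyRoadmap}, step~9 calling {\sf Roadmap} again — is the other place where care is needed.
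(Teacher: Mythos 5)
Your proposal is correct and follows essentially the same route as the paper, whose own proof simply states that the argument is the same as for Lemma~\ref{lemma:5}, the only notable difference being that $W_i$ is handled through the defining system $[\F,Q]$ (legitimizing the {\sf CriticalPoints} call and the {\sf CannyRoadmap} call at step~8 via $\AS$ for $[\F,Q]$) rather than through an explicit parametrization. You have in fact spelled out the fiberwise use of Theorem~\ref{theo:big2} and Lemma~\ref{lemma:glue}, and the degree/timing bookkeeping, in more detail than the paper does, and your ``main obstacle'' is precisely the one difference the paper highlights.
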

\begin{proof}
  The proof follows exactly the same pattern as the one in
  Lemma~\ref{lemma:5}. The only notable difference is that we directly
  use the defining system $[\F,Q]$ to compute the critical points of
  $\Pi_{X_{e+1}}$, which is possible since these equations satisfy $\AS$.
\end{proof}

\medskip\noindent As for algorithm {\sf CannyRoadmap}, as soon as all
changes of variables satisfy the assumptions of Lemma~\ref{prop:4},
the previous lemma shows that the whole algorithm {\sf Roadmap}
correctly computes a roadmap of $V([f,Q],\mathscr{P})$ in the
announced time.

%%%%%%%%%%%%%%%%%%%%%%%%%%%%%%%%%%%%%%%%%%%%%%%%%%%%%%%%%%%%

\subsection*{Probabilistic aspects of our algorithms}\label{sec:probasp}

Both algorithms {\sf CannyRoadmap} and {\sf Roadmap} start by choosing
a random change of variable $\varphi$ in a parameter space denoted by
$\GL(n,e)$. Lemmas~\ref{prop:3} and~\ref{prop:4} show that success
depends on choosing $\varphi$ outside of some hypersurfaces of
$\GL(n,e)$; what is missing is an estimate on the degrees of these
hypersurfaces.

Let us assume that we initially call {\sf Roadmap} with input a
polynomial $f$ of degree $D$, $Q$ of degree $\delta_Q$ and $P$ of
degree $\delta_P$; the following lemma gives a bound on the degree of
the hypersurface to avoid which is valid at any step of the
recursion. We give the bound in a big-O form for readability; all
estimates could be made completely explicit.
\begin{lemma}\label{lemma:Q}
  Starting with conditions as above, at any recursive call to {\sf
    CannyRoadmap} (resp. {\sf Roadmap}), there exists a hypersurface
  $H$ of degree at most $K(n,D,\delta_P,\delta_Q)=(\delta_Q+\delta_P)
  D^{O(n^2)}$ of $\GL(n,e)$ such that if $\varphi \in H$, the
  conclusions of Lemma~\ref{prop:3} (resp. Lemma~\ref{prop:4}) are
  satisfied.
\end{lemma}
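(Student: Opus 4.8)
The plan is to realize $H$ as a union of finitely many hypersurfaces, one for each genericity condition entering Lemmas~\ref{prop:3} and~\ref{prop:4}, and to bound the degree of each of them separately; since any proper Zariski-closed subset of $\GL(n,e)$ is contained in a hypersurface and the degree of a union is at most the sum of the degrees, this suffices. Throughout I would work in the affine space $\Mat_n(\C)$ of matrices fixing $X_1,\dots,X_e$, and read ``hypersurface of $\GL(n,e)$'' as the trace on $\GL(n,e)$ of a hypersurface of that affine space (the condition $\det\varphi\neq 0$ being one further Zariski-open condition that costs nothing).

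First I would control the degrees of all the algebraic sets the conditions bear on. At a recursive call with local data $[\f,Q]$, Lemmas~\ref{lemma:5} and~\ref{lemma:ourC} give $\delta_{Q'}+\delta_{P'}\le(\delta_Q+\delta_P)(nD)^{O(n)}$; since the recursion has depth $O(\sqrt n)$ with each level multiplying degrees by $(nD)^{O(n)}$, at every node the local $Q$ and $P$ have degree at most $(\delta_Q+\delta_P)(nD)^{O(n^{1.5})}$, which for $D\ge 2$ I would crudely overestimate by $(\delta_Q+\delta_P)D^{O(n^2)}$ (the case $D=1$ being degenerate). Next, applying the affine B\'ezout inequality of~\cite[Prop.~2.3]{HeSc80} fibrewise over $V(Q)$, together with $\deg f\le D$, I would bound the degrees of $V=V(\f,Q)$, of $\sing(V)$, of the polar varieties $W_i$ for $1\le i\le n-p-e$ and of $\crit(\Pi_1,W_i)$, as well as of the explicit system $[\F,Q]$ of the main algorithm, of the Thom--Boardman strata $S_j$, $S_{j,\ell}$ of Section~\ref{sec:gen} and of the incidence set $X$ of Lemma~\ref{prop:dimxe}, all by $\delta_Q\,D^{O(n)}$ --- all of them being cut out by $f$, by $Q$, by partial derivatives of $f$, and by minors of matrices whose entries are coordinates of $\varphi$ and partial derivatives of $f$, hence of degree $O(n)$ in the relevant variables.

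Then, for each individual condition I would show that the bad locus lies in a hypersurface whose degree is polynomial in the degrees just bounded, hence at most $(\delta_Q D^{O(n)})^{O(1)}=\delta_Q D^{O(n^2)}$. For the Noether position requirements --- part $(a)$ of $\BS$ for $V$ relative to $\Pi_d$, part $(b)$ for $W_i$ relative to $\Pi_{i-1}$, and assumption $\AS$ for $[\F,Q]$ in Lemma~\ref{prop:4} --- I would use that a generic linear projection puts a variety $Z$ of degree $\delta$ in Noether position with failure confined to a hypersurface of degree $\delta^{O(1)}$, the obstruction being the vanishing of a coefficient of a suitable generalized resultant; this is the effective content of the arguments underlying~\cite{SaSc03,BaGiHeMb97,BaGiHeMb01}. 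For the finiteness conditions --- part $(c)$ ($W_1$ finite), part $(d)$ ($\crit(\Pi_1,W_i)$ finite), and finiteness of $\sing(V)$ and of $\sing(W_\e)$ --- I would follow Section~\ref{sec:gen}: each amounts to an infinite fibre of a projection ($\alpha$ or $\gamma$) restricted to an incidence variety of degree $\le\delta_Q D^{O(n)}$, so the bad set is contained in the image of the union of its components of relative dimension $\ge 1$, which by the theorem on the dimension of fibres and the classical degree bound for images of projections is again a hypersurface of degree $(\delta_Q D^{O(n)})^{O(1)}$. Radicality and equidimensionality (the remaining parts of $\AS$ and $\BS$) are handled through the Jacobian criterion exactly as in the appendix, with bad locus cut out by vanishing of minors of controlled size and degree, and boundedness of $V(\F,Q)\cap\R^n$ is inherited from $V\cap\R^n$ and is coordinate-free.

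Taking $H$ to be the union of these finitely many hypersurfaces yields $\deg H\le(\delta_Q+\delta_P)D^{O(n^2)}$, as claimed. The main obstacle I anticipate is making the degree estimates for the finiteness conditions genuinely explicit: Lemmas~\ref{lemma:a123} and~\ref{lemma:13} are proved by non-quantitative tools (Mather's transversality theorem, and the ``a generic fibre is finite'' argument), so to obtain a concrete degree one must replace them by the incidence-variety constructions sketched above and bound the degrees of those varieties and of their projections via (multihomogeneous) B\'ezout; everything else is routine bookkeeping of B\'ezout bounds through the recursion.
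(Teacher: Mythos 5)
Your overall strategy---splitting $H$ into one bad hypersurface per genericity condition, pushing B\'ezout bounds through the recursion and fibrewise over $V(Q)$, and quantifying Noether position via vanishing of denominators of eliminating polynomials---is essentially the paper's. The genuine gap is at the step you yourself flag as the main obstacle, and the remedy you sketch does not close it. For point $(d)$ of $\BS$ (and for the other finiteness conditions you route the same way) one must bound the degree of the closure $Y$ of the set of parameters over which the fibre of the projection $\alpha$ restricted to the incidence variety $X$ is infinite. Your claim that this bad set is contained in the image of the components of $X$ of relative dimension $\ge 1$ is false: an irreducible component whose \emph{generic} fibre is finite can still acquire infinite special fibres (take $X'=\{(a,b,x)\in\C^2\times\C \ : \ ax=b\}$ projected to $(a,b)$; the generic fibre is a point, the fibre over $(0,0)$ is a line, and there is no component of positive relative dimension), and those special parameters are exactly what must be excluded. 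Nor can you retreat to the fibre-dimension-jump locus inside such a component: it is not presented by equations of controlled degree, and a closed subset of a variety of degree $\delta$ need not have degree $\delta^{O(1)}$ (a bounded-degree curve contains finite subsets of arbitrary cardinality), so neither B\'ezout nor ``degree of a projected closure is at most the degree of the source'' applies to it directly.

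The paper fills precisely this hole with two specific tools: a quantitative Sard lemma (Lemma~\ref{lemma:sardQ}) to quantify the transversality and dimension statements, and, for the components of $X$ with dense image and generically finite fibres (the class $E_1$ in the proof of Lemma~\ref{lemma:13}), the elimination-theoretic observation that parameters with an infinite fibre must cancel a denominator of the minimal polynomial of a primitive element of $\C(X')$ over the function field of the base, the least common multiple of these denominators having degree at most $\deg X'$ by~\cite{Schost03} (in the toy example above, the minimal polynomial is $T-b/a$ and the bad point lies in $\{a=0\}$). Some such ingredient---a resultant/Chow-form or the bound of~\cite{Schost03}---is unavoidable; with it your bookkeeping goes through, without it the bound $K(n,D,\delta_P,\delta_Q)$ is not established. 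The rest of your proposal (recursion-depth degree growth, the fibrewise treatment over $V(Q)$, Noether position via denominators, radicality and equidimensionality via the Jacobian criterion, coordinate-free boundedness) matches the paper's argument.
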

\begin{proof}
  A useful ingredient is a quantitative version of Sard's
  lemma~\cite[Prop.~3.6]{Mumford76}.
\begin{lemma}\label{lemma:sardQ}
  Suppose that $X \subset \C^K$ is an algebraic set defined by
  equations of degree $\Delta$ and that $\Phi: X \to\C^L$ is a
  polynomial map, given by means of equations of degree $\Delta$ as
  well.  Suppose that $K,L \le N$; then, $\Phi(\reg(X) \cap
  \crit(\Phi,X))$ is contained in a hypersurface of $\C^L$ of degree
  $\Delta^{O(N^2)}$.
\end{lemma}
\begin{proof}
  Remark that $X$ has degree at most $\Delta^N$. First we show that we
  can write $\reg(X) \cap \crit(\Phi,X)$ as $\reg(X)\cap Z$, for a
  suitable algebraic set $Z \subset X$. Let $X'$ be the reunion of the
  irreducible components of $X$ of maximal dimension $d$; we know that
  $X'$ can be generated by $O(N)$ polynomials $g_1,\dots,g_R$ of
  degree at most $\Delta^N$, by~\cite[Prop.~3]{Heintz83}. Then, we
  define $Z$ by $g_1,\dots,g_R$ and all $(K+L-d)$-minors of the
  Jacobian matrix matrix of $(g_1,\dots,g_R,\Phi_1,\dots,\Phi_L)$, and
  we easily verify the claim that $\reg(X)
  \cap\crit(\Phi,X)=\reg(X)\cap Z=(X-\sing(X))\cap Z$.

  By B\'ezout's theorem as in~\cite[Prop.~2.3]{HeSc80}, we obtain the
  bound $\Delta^{O(N^2)}$ for the degree of $Z$. Now, since $Z$ is
  contained in $X$, we can rewrite $\reg(X)\cap \crit(\Phi,X)$ as $Z -
  \sing(X)\cap Z$. Consequently, a degree bound as above hold for the
  degree of the Zariski closure of $\reg(X) \cap \crit(\Phi,X)$, and
  for the degree of its image by~$\varphi$ (by B\'ezout's theorem
  again).
\end{proof}

We can now resume the proof of Lemma~\ref{lemma:Q}.  Each time we
enter the functions {\sf CannyRoadmap} and {\sf Roadmap}, the input
polynomials (either the system $\f=f_1,\dots,f_p$ or the unique
equation $f$) have degree at most $D$ and the 0-dimensional
parametrization $Q$ has degree $(\delta_Q+\delta_P)
(nD)^{O(n^{1.5})}$. We consider all $\x$ in $V(Q)$ separately: each of
them puts some constraints on $\varphi$, and $\varphi$ must satisfy
all of these constraints simultaneously.

If we prove that for a single $\x \in V(Q)$ the degree of the
hypersurface to avoid in $\GL(n,e)$ is $D^{O(n^2)}$, then the degree
of the union of all these hypersurfaces will be $(\delta_Q+\delta_P)
D^{O(n^2)}$, as claimed. Concretely, after fixing
$\x=(x_1,\dots,x_e)$, we are left to quantify the claims that proved
Lemmas~\ref{prop:3} and~\ref{prop:4} in Section~\ref{sec:gen}; we
apply them to the variety $V_\x$ defined by the input polynomials and
the additional equations $X_1=x_1,\dots,X_e=x_e$. Note that all these
equations have degree at most $D$.

The first step is a dimension statement for polar varieties in generic
coordinates. This is proved in~\cite{BaGiHeMb97,BaGiHeMb01} by means
of an algebraic version of Thom's weak transversality result, applied
to a generic projection $\Phi$ on $V_\x$. The weak transversality
theorem is obtained by applying Sard's lemma to a subset $S$ of $V_\x
\times Y$, where $Y$ is the parameter space where we pick our generic
projection and $S$ is defined by equations of degree $O(D)$. By
Lemma~\ref{lemma:sardQ}, we obtain the degree bound $D^{O(n^2)}$ for
the critical locus, as claimed.

The second step is a Noether position statement for polar varieties.
Using a change of variables with formal entries (that is, new
variables ${\bf U}$), we construct in~\cite[Sect.~2.3]{SaSc03} some
eliminating polynomials with coefficients that are rational functions
of ${\bf U}$. Besides, we prove in~\cite[Sect.~2.4]{SaSc03} that if
the entries of the change of variables $\varphi$ cancel none of the
denominators of these coefficients, the polar varieties associated to
$V_\x$ are in Noether position. The least common multiple of these
denominators has degree $D^{O(n)}$ by~\cite[Prop.~1]{Schost03}; this
gives the degree bound for this step as well.

As seen in Section~\ref{sec:gen}, this is sufficient to conclude for
Lemma~\ref{prop:3}. The most delicate step is to establish point $(d)$
of assumption $\BS$ for Lemma~\ref{prop:4}. Recall that in
Section~\ref{sec:gen} we defined a strict algebraic $Y$ of $\C^i \times
\C^{ni}$, such that the first $i$ rows of the inverse of $\varphi$
should avoid $((1,0,\dots,0)\times \C^{ni}) \cap Y$. Hence, it is
sufficient to bound the degree of $Y$ by $D^{O(n^2)}$.

We reconsider the proof given above of Lemma~\ref{lemma:13} (and use
freely all necessary notation). First, observe that the algebraic set
$X$ defined on page~\pageref{lemma:a123} has degree $D^{O(n)}$. We
also recall that $Y$ consists of the Zariski-closure of the infinite
fibers of a projection denoted by $\alpha: X \to \C^i\times \C^{ni}$.
The irreducible components of $X$ were classified into three groups,
written $E_0$, $E_1$ and $E_2$. We prove that in all cases, the
Zariski-closure of the set of the infinite fibers of $\alpha$ on $X'$
has a degree at most that of $X'$.
\begin{itemize}
\item The image of a component $X'$ in $E_0$ is contained in a strict
  algebraic subset of $\C^i \times \C^{ni}$; then, it can be enclosed
  in a hypersurface of degree bounded by that of $X'$. The same holds
  for the components in $E_2$.
\item For a component $X'$ in $E_1$, we saw that the projection
  $\alpha:X' \to \C^i\times \C^{ni}$ has a dense image and generically
  finite fibers. Let $\C(V_1,\dots,V_{n+ni})$ be the function field of
  $\C^i\times \C^{ni}$, let $\C(X')$ be that of $X'$, and let $M \in
  \C(V_1,\dots,V_{i+ni})[T]$ be the monic minimal polynomial of a
  primitive element for the algebraic extension
  $\C(V_1,\dots,V_{i+ni}) \to \C(X')$. It is known that the infinite
  fibers cancel one of the denominators of the coefficients of
  $M$~\cite{SaSc03}. Since the least common multiple of these
  denominators has degree at most the degree of $X'$~\cite{Schost03},
  we are done.
\end{itemize}
At this stage, we have quantified Lemma~\ref{prop:3} and the first
part of Lemma~\ref{prop:4}; it remains to consider the last condition
of that lemma (that the system $[\F,Q]$ satisfies assumption $\AS$).
We mentioned in Section~\ref{sec:gen} that this property resulted from
the validity of a condition written $\a_1(\e)$, which itself is
ensured by an application of Sard's lemma. The quantification is
similar to the one we have seen before, and yields another
contribution of the form $D^{O(n^2)}$.
\end{proof}

We conclude the probability analysis of our algorithms. At each level
of the recursion, we draw all entries of our change of variables in a
set of cardinality $\eta\, K(n,D,\delta_Q,\delta_P)$, where
$K(n,D,\delta_Q,\delta_P)$ was defined in the previous lemma. By
Zippel-Schwartz's zero avoidance lemma, the probability of success at
this level is at least $(1-1/\eta)$. We need to draw at most $n^2$
changes of variables; hence, to obtain an overall probability of
success of at least $1/2$, it suffices to take $\eta$ polynomial in
$n$.

It remains to discuss the probabilistic aspects of the algorithm
of~\cite{Lecerf00}; they are twofold. First, the success of that
algorithm depends on the choice of a so-called {\em correct test
  sequence}~\cite{HeSc80}, to perform zero-test of polynomials represented by
straight-line programs. For all our applications of this subroutine, a
single correct test sequence is needed; as pointed out
in~\cite{FiGiSm95}, one can construct one with probability of success
at least $1/262144$. The second probabilistic aspect is due to a
linear combination of the equations performed at the beginning of this
subroutine. This aspect is analyzed in~\cite{Lecerf00}. The conclusion
is similar to what we obtained above for our change of variables:
success is ensured if the coefficients of the linear combination avoid
a hypersurface of degree $D^{O(n)}$.

\end{document}